\documentclass[11pt]{article}

\usepackage[T1]{fontenc}
\usepackage[margin=1in]{geometry}

\usepackage{mathtools}
\usepackage{amsmath}
\usepackage{amsfonts}
\usepackage{amssymb}
\usepackage{amsthm}
\usepackage{bm}
\usepackage{graphicx}
\usepackage{algorithm}
\usepackage{algpseudocode}
\usepackage{tikz}
\usepackage{xspace}
\usepackage{thmtools,thm-restate}
\usepackage{natbib}
\usepackage{endnotes}
\usepackage{hyperref}
\usepackage{cleveref}
\usepackage{thm-restate}

\newtheorem{lemma}{Lemma}
\newtheorem{corollary}{Corollary}
\newtheorem{assumption}{Assumption}

\theoremstyle{definition}
\newtheorem{definition}{Definition}

\theoremstyle{remark}
\newtheorem{remark}{Remark}

\newcommand{\abspe}{\mbox{\textsf{Abs-PE}}\xspace}
\newcommand{\posfppe}{\textsf{Pos-FPPE}\xspace} 
\newcommand{\fppe}{\textsf{FPPE}\xspace}

\title{Pacing Equilibria in Abstract Mechanisms}

\author{
\begin{tabular}{ccc}
Salam Afiouni & Christian Kroer  \\[2pt]
Columbia University & Columbia University\\
\texttt{sa4316@columbia.edu} & \texttt{ christian.kroer@columbia.edu} 
\end{tabular}
}

\date{}

\begin{document}

\maketitle

\begin{abstract}
Digital platforms increasingly rely on automated budget-management systems to regulate participation across allocation opportunities. A common tool is multiplicative pacing, which scales buyers' bids so that campaign-level budgets are spent gradually. Although pacing is operationally simple, its aggregate behavior depends on the underlying mechanism. In first-price single-item auction markets, pacing enjoys strong structural and computational properties that often fail in second-price auctions. It is unclear whether these properties extend to richer platform mechanisms.

We develop a unified theory of pacing equilibria across a hierarchy of mechanisms. We first show that, even in first-price position auctions, the market-equilibrium interpretation of the single-item benchmark can fail. We then identify bid-maximizing pay-your-bid mechanisms as a broad class in which pacing equilibria exist, are unique, and admit an Eisenberg-Gale-type convex-program characterization. This implies efficient computation, Pareto-efficiency, and liquid-welfare guarantees, enabling pacing integration into large-scale optimization-based allocation routines.
Finally, for abstract mechanisms satisfying a payment monotonicity condition, we prove equilibrium existence by smoothing discontinuities. With additional conditions and appropriate tie-breaking, we obtain uniqueness, revenue maximality among budget-feasible pacing vectors, and shill-proof implementability, and provide convergent budget-adjustment dynamics for approximate equilibria. Overall, many favorable properties of first-price pacing extend beyond single-item auctions under appropriate mechanism-level conditions.
\end{abstract}

\section{Introduction}\label{sec:Intro}
Digital platforms increasingly manage economic activity through automated allocation systems. Advertising platforms allocate impressions and sponsored positions, recommendation systems allocate user attention, marketplaces allocate visibility, inventory, or service capacity, and matching platforms allocate opportunities among heterogeneous participants. These decisions are made repeatedly, at high frequency, and at large scale. A single advertiser, seller, or service provider may participate in thousands or millions of allocation opportunities over the course of a campaign, while facing a hard budget that limits total expenditure over the relevant horizon.

In such environments, participants typically do not manage each allocation decision directly. An advertiser, for example, may specify a target audience, a value-per-click for relevant opportunities, and an overall campaign budget. The platform's campaign-management system then determines how those inputs are used across many allocation events. In effect, the platform operates a proxy bidder on behalf of the participant: it decides when the participant should enter the market, how aggressively she should bid, and how her budget should be distributed over time. This delegation shifts an important part of market design from the rules of a single auction to the aggregate behavior of the platform's automated budget-management system.

This creates a fundamental operational problem. If the platform allows a budget-constrained buyer to bid her full value whenever an opportunity arrives, the buyer may spend too quickly, exhaust her campaign prematurely, and be absent from later opportunities that may be equally or more valuable. Conversely, if the platform suppresses participation too aggressively, the buyer may fail to spend her budget, reducing both the buyer’s realized value and the platform’s revenue. Budget management is therefore a central component of platform design: the platform must determine how participants are represented across many allocation events, and this representation affects spending, competition, revenue, and welfare.

A widely used approach to this problem is \emph{pacing}. Rather than submitting a buyer’s full bid in every interaction, the platform applies a \emph{pacing multiplier} between zero and one that uniformly scales the buyer’s values or bids across the mechanisms in which she participates. A buyer whose budget would otherwise be exceeded is paced down, while a buyer whose budget is not binding remains unpaced. 

Pacing is used by major advertising platforms~\citep{meta_pacing,linkedinhelp_talentinsights_a422101}, and its managerial appeal is clear: a small number of campaign-level controls can regulate expenditure over a large number of allocation events, while preserving participation in valuable opportunities.
Although pacing is operationally simple, its aggregate effects are not. When many buyers are paced simultaneously, a buyer's spending depends not only on her own multiplier, but also on the paced bids of the other buyers competing for the same opportunities. Budget management is therefore not a collection of independent single-buyer controls. It is a market-level equilibrium problem generated by the interaction between budgets, pacing multipliers, allocation rules, and payment rules. Understanding this equilibrium is essential for predicting whether a pacing system will be stable, whether it will spend budgets effectively, how it will affect platform revenue and market efficiency, and whether it can be implemented at platform scale.

The existing theory provides a sharp benchmark. \citet{conitzer2022pacing, conitzer2022multiplicative} introduce and study \emph{pacing equilibria} in first-price and second-price auction markets. In their model, advertisers participate in a sequence of independent single-item auctions for ad opportunities and are assigned pacing multipliers to manage their budgets over the course of a campaign. Formally, a pacing equilibrium is a vector of pacing multipliers such that, given the induced bids, each buyer satisfies her budget constraint and is paced strictly below one only when necessary to meet that constraint.

The interpretation and properties of pacing equilibria depend critically on the underlying mechanism. In second-price settings, uniform multiplicative pacing has a direct strategic interpretation: under standard models of budget enforcement and deviation, a buyer's best response can be represented by a pacing multiplier, so pacing equilibria coincide with Nash equilibria ~\citep{balseiro2015repeated, balseiro2019learning, babaioff2020non}. In first-price settings, by contrast, optimal bidding generally involves strategic bid shading that is not captured by uniform scaling. Pacing equilibrium is therefore not a best-response concept. Instead, it has a market-equilibrium interpretation: in first-price single-item auctions, pacing equilibria coincide with Fisher-market equilibria and can be computed through an Eisenberg-Gale convex program~\citep{conitzer2022pacing}. 

Pacing equilibria exhibit very different structural and computational properties across mechanisms. When individual opportunities are sold through first-price single-item auctions, pacing equilibria are well behaved: they exist, are essentially unique, admit a convex-program characterization, and satisfy attractive economic properties such as Pareto optimality and shill-proofness \citep{conitzer2022pacing}. In second-price settings, however, pacing equilibria can be non-unique, can violate natural monotonicity and optimality properties, and can be computationally hard to find \citep{conitzer2022multiplicative, chen2023complexity}. This contrast has made first-price pacing an appealing benchmark for automated budget management in large-scale platform markets.

Yet the single-item auction model abstracts from how modern platforms actually allocate opportunities. A platform rarely allocates only one indivisible object at a time. A search page or feed visit may contain a ranked list of sponsored positions with different exposure levels. A cloud, bandwidth, or inventory platform may allocate shared capacity across several buyers. A recommendation or matching platform may allocate opportunities according to scoring rules, feasibility constraints, or combinatorial assignment objectives. Platforms may also use payment rules that are not simply first-price. In all of these settings, the platform may want to use the same pacing primitive: a simple multiplier that regulates each buyer’s participation across many allocation events. But it is no longer clear whether pacing retains the stability, uniqueness, efficiency, revenue, and computational properties that make it attractive in the first-price single-item benchmark.
This raises the central question of the paper:
\begin{quote}
\textit{Do the appealing structural and economic properties that make pacing a reliable budget-management tool in first-price single-item auctions extend to more general mechanisms?}
\end{quote}
We answer this question by developing a theory of pacing equilibria for increasingly general classes of mechanisms.
The paper's central message is that the desirable properties of first-price pacing are not merely artifacts of single-item auctions, but neither do they extend automatically to richer allocation environments. Some properties fail under seemingly modest generalizations, showing that the single-item theory cannot be applied mechanically. Yet, many of the core guarantees can be recovered under appropriate mechanism-level conditions. These conditions identify which features of the platform’s allocation and payment rules allow simple multiplicative pacing to produce predictable, computable, and economically desirable outcomes.

Our analysis is organized around three levels of generality. We first study first-price position auctions, a natural extension of first-price single-item auctions, in which each opportunity contains multiple ranked slots. We then study bid-maximizing pay-your-bid mechanisms, in which the platform selects a feasible allocation that maximizes the sum of paced bids and charges each buyer her paced bid for the allocation she receives. This class includes both first-price single-item and position auctions, and it also captures richer allocation systems in which the platform optimizes over a feasible set of outcomes. Finally, we study fully abstract mechanisms with arbitrary allocation rules, satisfying a monotonicity condition on induced payments, allowing for platform environments that need not be standard auctions.
Taken together, these three classes encompass increasingly general platform allocation systems, from multi-slot advertising auctions to optimization-based allocation mechanisms and, ultimately, to mechanisms incorporating platform-specific payment rules and business constraints.

\subsection{Contributions}
We develop a mechanism-level theory of pacing equilibria for platform markets that use simple budget pacing controls in richer allocation environments. Our results identify when the favorable properties of first-price pacing in single-item auctions -- including existence, uniqueness, tractability, efficiency, revenue guarantees, and implementability -- continue to hold beyond the canonical model. The analysis proceeds through three increasingly general mechanism classes and then studies adaptive budget dynamics for computing pacing equilibria.

\subsubsection{A limitation of the single-item benchmark.}
We begin by analyzing \emph{first-price position auctions}, one of the simplest generalizations of first-price single-item auctions. These mechanisms retain first-price pay-your-bid pricing, but allow for a richer allocation structure: multiple ranked slots are sold in every auction. We show that the market equilibrium interpretation of pacing equilibrium already breaks down in this setting, even under utility models that account for the position-auction structure.

\subsubsection{A convex-program theory for bid-maximizing pay-your-bid mechanisms.}

We identify a broad class of mechanisms, \emph{bid-maximizing pay-your-bid mechanisms}, in which pacing equilibria admit a convex-program characterization and retain the main structural properties of first-price pacing. This class subsumes first-price single-item and position auctions, while allowing the platform to impose richer feasibility constraints on allocations. In \emph{bid-maximizing pay-your-bid mechanisms}, the platform selects a feasible outcome maximizing total paced bid value, and each buyer pays her paced bid for the allocation she receives.

For this class, we establish a tractable characterization of pacing equilibrium. Under mild convexity assumptions on the feasible allocation set and valuations, we prove existence of a pacing equilibrium, characterize the equilibrium pacing vector through a convex program, and give an Eisenberg-Gale-type program that recovers an equilibrium allocation. These results extend the computational tractability of first-price pacing beyond single-item auctions. They also imply
structural and economic guarantees: the equilibrium pacing vector is unique, and the corresponding equilibrium outcomes are Pareto efficient and satisfy additive liquid-welfare guarantees, including an equilibrium-independent bound depending only on budgets and a global bound on valuations.
Beyond these convexity assumptions, we show that any pacing equilibrium of a bid-maximizing pay-your-bid mechanism achieves at least one half of the optimal liquid welfare, and that this guarantee is tight for the class.
Finally, leveraging the convex-program formulation, we show how the PACE dynamics of~\citet{gao2021online} can be applied to compute approximate equilibria using only a bid-maximization oracle, making the approach
compatible with platforms that already implement allocation through large-scale optimization routines.

\subsubsection{Existence and structure in abstract mechanisms.}
We define pacing equilibrium for fully abstract mechanisms with arbitrary allocation and payment rules. Our analysis relies on a global monotonicity property of induced payments: holding a buyer’s bid fixed, lowering her opponents’ bids cannot decrease her payment (Assumption~\ref{ass:pymt-opp-inverse-mnt}). This condition is satisfied by a range of mechanisms, including single-item, multi-unit, and position first-price auctions; pay-your-bid mechanisms with monotone allocation rules; all-pay and entry-fee mechanisms with bid-dependent payments; and capacity-constrained resource-allocation mechanisms such as proportional sharing or priority-based allocation.

A central technical challenge in this abstract setting is that discontinuities in allocation and payment rules can obstruct both equilibrium existence and convergence of budget dynamics. To address this, we introduce a bid-perturbation (smoothing) framework that regularizes induced payments. 
Under this framework, budget feasibility is closed under coordinate-wise maxima, yielding a join-semilattice structure over budget-feasible pacing vectors. This property allows us to characterize equilibrium through a maximal budget-feasible pacing vector, extending a central structural feature of first-price pacing beyond the single-item setting. 
We prove existence of a pacing equilibrium for the original (unsmoothed) mechanism under an appropriate tie-breaking rule, adopting the tie-breaking framework of~\citet{babaioff2020non}. In the smoothed mechanism, we further characterize the equilibrium as the Pareto-dominant (coordinate-wise greatest) budget-feasible pacing vector. Finally, under additional mild conditions on induced payments and appropriate tie-breaking, we obtain stronger guarantees including uniqueness, revenue maximization among budget-feasible pacing vectors, and shill-proofness.

\subsubsection{Spending-based equilibrium computation.}

We provide decentralized budget-adjustment dynamics for computing approximate pacing equilibria in the smoothed model. First, we generalize the dynamics of~\citet{borgs2007dynamics} beyond first-price single-item auctions, establishing convergence of analogous spending-based updates for mechanisms whose payments respond sufficiently to a common scaling of all bids. Second, we analyze no-regret multiplicative-weights updates under a weaker regularity condition on payments, and show that time-averaged iterates satisfy the defining properties of an approximate pacing equilibrium. In both cases, buyers update their multipliers using only realized expenditure, rather than full knowledge of the market or the mechanism.

These dynamics show that the equilibrium guarantees can be supported by decentralized adjustment based on realized spending. Under the structural conditions identified in the paper, pacing can be operated as a scalable feedback system: multipliers respond to spending deviations, and the resulting process moves toward approximate equilibrium behavior. This supports disciplined budget management in large, dynamic~markets.

\subsection{Implications for Platform Design}
Together, our results develop a mechanism-level theory of platform pacing. They characterize when the desirable properties of first-price pacing extend beyond the single-item auction setting. The main conclusion is that pacing inherits its equilibrium properties from the mechanism through which paced bids are implemented. Although pacing is a simple operational tool, its effect depends on how scaled bids are translated into outcomes and payments. Thus, the same pacing rule can behave differently across allocation environments, and the first-price single-item model need not provide a reliable benchmark for richer platform mechanisms.

Our results identify mechanism structures under which pacing equilibria retain desirable structural and economic properties and remain compatible with stable budget management. Bid-maximizing pay-your-bid mechanisms provide one broad class in which pacing remains tractable and compatible with optimization-based allocation, which is particularly relevant for platforms that allocate through large-scale ranking, matching, or constrained optimization routines. More generally, the abstract-mechanism results show how payment monotonicity, smoothing, and appropriate tie-breaking support equilibrium existence in settings with features such as reserves, eligibility thresholds, business rules, or discrete tie-breaking. 

In this sense, the paper treats pacing not only as an auction-theoretic equilibrium concept,  but also a component of platform-level budget-management design. By identifying mechanism-level conditions that support stable, efficient, and implementable pacing equilibria, our results provide guidance for automated budget management in complex platform environments.

\subsection{Additional Related Literature}

Our work builds most directly on the literature on pacing equilibria in auction markets. \citet{conitzer2022pacing,conitzer2022multiplicative}, study pacing for additive valuations over divisible items. They formalize a static game in which budget-constrained bidders choose pacing multipliers and define the notion of \emph{pacing equilibrium} of that game.
For second-price single-item auctions, \citet{conitzer2022multiplicative} introduce \emph{second-price pacing equilibria} (\textsf{SPPE}), showing that they can be non-unique with poor welfare/revenue properties, and that optimizing or computing them is computationally hard~\citep{chen2023complexity}. In contrast, in the first-price setting, \citet{conitzer2022pacing} show that \emph{first-price pacing equilibria} (\textsf{FPPE}) coincide with Fisher-market equilibria and admit efficient computation via the Eisenberg-Gale convex program. We move beyond the single-item setting, investigating which first-price structural properties generalize to position auctions, bid-maximizing pay-your-bid mechanisms, and fully abstract mechanisms.

Our work is also closely related to the broader literature on auctions with budget-constrained bidders and automated budget management. Much of this literature focuses on repeated second-price auction environments~\citep{balseiro2015repeated,balseiro2019learning,gummadi2012repeated,gummadi2013optimal,conitzer2022multiplicative}, where multiplicative pacing is optimal under standard assumptions. While this optimality need not extend to first-price auctions, the simplicity of pacing as a budget-management primitive motivates its study in that setting as well~\citep{conitzer2022pacing, chen2024budget}. \citet{balseiro2023contextual}
show that value shading can yield Bayes-Nash equilibria satisfying budgets in expectation, and \citet{borgs2007dynamics} prove convergence of pacing dynamics in a perturbed first-price model. 
We extend their convergence analysis beyond auction formats to mechanisms whose payments respond sufficiently to common bid scaling, and further establish average convergence of multiplicative-weights dynamics under a weaker regularity condition on payments. 
Multiplicative-scaling equilibria have also been studied beyond standard auction models. \citet{aggarwal2019autobidding} study truthful multi-slot auctions with affine constraints, and \citet{babaioff2020non} consider mechanisms with non-quasi-linear buyers. We adopt a similar level of generality, characterizing pacing equilibria in abstract mechanisms under arbitrary allocation and payment~rules.

Our convex-program characterization also connects to the literature on Fisher markets and the Eisenberg-Gale convex program. Linear Fisher markets provide a canonical model of market clearing with budgets, and the Eisenberg-Gale (EG) convex program gives a tractable convex characterization of equilibrium~\citep{eisenberg1959consensus,eisenberg1961aggregation}. The EG program has motivated a rich algorithmic literature: \citet{devanur2008market} gave the first polynomial-time algorithm via a primal-dual scheme, later improved to strongly polynomial time by~\citet{orlin2010improved}, while~\citet{gao2020first} developed first-order methods for broader utility classes. The EG framework has also been extended to infinite-dimensional settings~\citep{gao2023infinite}. We build on the EG characterization of \fppe from~\citet{conitzer2022pacing}, extending it, under mild convexity assumptions, beyond single-item auctions to any bid-maximizing pay-your-bid mechanism.

A related stream studies welfare guarantees under budget constraints. Liquid welfare has become a standard benchmark in such settings, capping each buyer's contribution by her budget~\citep{dobzinski2014efficiency,azar2017liquid}. Prior work establishes liquid-welfare guarantees for truthful mechanisms, including a 2-approximation for pacing-based pure Nash equilibria~\citep{aggarwal2019autobidding,babaioff2020non}. \citet{balseiro2023contextual} extend such guarantees to Bayes-Nash equilibria in contextual standard auctions. \citet{gaitonde2023budget} obtain matching guarantees for pacing dynamics without convergence, \citet{fikioris2023liquid} extend them under a weaker behavioral assumption, and \citet{baldeschi2026optimal} derive tight type-dependent bounds with heterogeneous autobidding agents. 
Our guarantees hold directly at pacing equilibrium. For
bid-maximizing pay-your-bid mechanisms, we show that every pacing equilibrium achieves at least one half of the optimal liquid welfare. Under the additional convexity assumptions, we also obtain a complementary additive approximation guarantee.

Finally, following \citet{borgs2007dynamics} who use random perturbations to smooth tie-surface discontinuities in first-price auctions, we adopt a multiplicative bid-perturbation (smoothing) framework that induces continuous expected allocations and payments, enabling our structural and convergence analyses. \citet{conitzer2022multiplicative} show that tie-breaking affects equilibrium existence in unperturbed mechanisms, motivating equilibrium notions with randomized allocation, which can be viewed as the limit of expected allocations of a perturbed model as the perturbation vanishes. \citet{babaioff2020non} extend this perspective to general mechanisms. We follow their framework to transfer our existence and structural results from the smoothed mechanism to the original mechanism under appropriate tie-breaking.

\section{Model}\label{sec:model}

Consider a set $N=\{1,\dots,n\}$ of buyers, and a space $X$ of potential outcomes. We assume that $X$ is a nonempty compact subset of $\mathbb R^d$ (for some finite $d$). Buyer $i$ has a hard budget $B_i>0$ and a valuation function $v_i:X\to \mathbb R_{\ge 0}, \,v_i\in V$, where $V$ denotes the space of allowable valuation functions and is assumed to be closed under positive scalar multiplication. We assume valuations are bounded: there exists $0<\bar V<\infty$ such that $0 \le v_i(x) \le \bar V$ for all $i\in N$ and all $x\in X$.
We write $B=(B_1,\dots,B_n)$ and $v=(v_1,\dots,v_n)$ for the profiles of budgets and valuations, respectively. 
For notational convenience, let $x_i$ denote the component of outcome $x\in X$ that is payoff-relevant to buyer $i$, and assume that $v_i$ depends only on $x_i$, so that $v_i(x)=v_i(x_i)$ for all $i$ and all $x\in X$.

A mechanism $\mathcal M=(x,p)$ consists of an allocation rule $x: V^n\to X$ and a payment rule $p: V^n\to \mathbb{R}^n_{\ge 0}$ that map a profile of reports to an outcome and a profile of payments. We interpret a report $b_i\in V$ as buyer $i$'s \emph{bid (reported value) function}, and write $b=(b_1,\dots,b_n)\in V^n$ for the bid profile. For $b_i,b_i'\in V$, write $b_i\le b_i'$ if $b_i(z)\le b_i'(z)$ for all $z$ in the domain of $b_i$, and we extend this order coordinate-wise to $V^n$.

To control budget expenditure, each buyer $i\in N$ is associated with a \emph{pacing multiplier} $\alpha_i\in[0,1]$ that scales her valuation uniformly; by closure of $V$ under positive scalar multiplication, the \emph{paced bid} $b_i:=\alpha_i v_i$ also lies in $V$. Let $b=\alpha\!\cdot \!v$ denote the resulting bid profile, where the product is componentwise. 
Given $b$, the mechanism $\mathcal M=(x,p)$ selects an outcome $x(b)\in X$ and payments $p(b)\in \mathbb R_{\ge 0}^n$, such that buyer $i$ receives an outcome $x_i(b)$ and pays $p_i(b)$. 

We assume that the mechanism $\mathcal M$ is individually rational, that is $p_i(b)\le b_i\bigl(x_i(b)\bigr)$ for all $i\in N$ and all $b\in V^n$, and that it is \emph{invariant to zero bids}: for any subset of buyers $Z\subseteq N,\,|Z|<n$, if buyers in $Z$ submit zero bids, then the outcome and payments for buyers in $N\setminus Z$ coincide between running $\mathcal M$ on $N$ and running $\mathcal M$ on the reduced buyer set $N\setminus Z$ with bid profile $b_{-Z}$\endnote{When restricting to a buyer subset, we view $\mathcal M$ as \emph{instantiated on} that subset with the corresponding bid profile.}. We further assume that a buyer's payment varies continuously as others' bids vanish\endnote{This rules out mechanisms that treat exact zero bids as special case, e.g., allocating an item to buyer $i$ if and only if buyer $j$'s bid is exactly zero.}: for any subset $Z\subseteq N$, any $i\notin Z$, and any sequence $(b_{-Z}^t, b_Z^t) \to (b_{-Z}, 0^Z)$ in the bounded bid region, $p_i(b_{-Z}^t, b_Z^t) \to p_i(b_{-Z}, 0^Z)$. 

With the model in hand, we now define the equilibrium concept which will be the main object of study in~this~work.
\begin{definition}[Budget-feasible pacing vector]\label{def:budget-feasible}
    A vector of pacing multipliers $\alpha\in[0,1]^n$ is budget-feasible for $\mathcal M$ if, under the induced bid profile $b=\alpha\cdot v$, each buyer satisfies her budget constraint, that is, $p_i(b)\le B_i$ for all $i\in N$.
\end{definition}

We are interested in solutions where no buyer is \emph{unnecessarily paced}. This captures that the platform does not want to pace a buyer that is not spending the full budget: if a buyer would not exhaust her budget when bidding her true value, then her pacing multiplier should be exactly one.

\begin{definition}[Pacing equilibrium for abstract mechanism]
    A vector of pacing multipliers $\alpha~\in~[0,1]^n$ constitutes a Pacing Equilibrium of the Abstract mechanism $\mathcal M$ (\textsf{Abs-PE}) if it is budget-feasible for $\mathcal M$  (\Cref{def:budget-feasible}) and satisfies the additional condition that: 
    if $p_i(b) <B_i$, then $\alpha_i =1$ for each buyer $i \in N.$
    This condition is referred to as the ``no-unnecessary-pacing'' condition.
\end{definition}

The model is broad enough to capture several platform allocation environments. We highlight a few examples.
\begin{itemize}
    \item \emph{Auctions with additive valuations over divisible goods.} Consider $m$ goods. An outcome specifies for each buyer $i$ a vector $x_i = (x_{ij})_{j\in[m]}$ with $x_{ij}\in[0,1]$ denoting the fraction of good $j$ allocated to $i$. Valuations are additive: $v_i(x_i) = \sum_{j=1}^mv_{ij}\, x_{ij},$ where $v_{ij}\ge 0$ is $i$'s value per unit of good $j$.  
    This class includes first-price single-item and position auctions as special cases. 

     \item \emph{Resource allocation under shared capacity.} Consider $m$ shared resources, each with capacity $C_j>0$. An outcome $x\in X$ specifies for each buyer $i$ a nonnegative consumption vector $x_i=(x_{ij})_{j\in[m]}$, subject to the capacity constraints $\sum_{i=1}^n x_{ij} \le C_j \, \forall j\in[m].$ Each buyer has a valuation function $v_i(x_i)$ over their allocated resources. This captures settings such as cloud computing, bandwidth allocation, or capacity-constrained inventory allocation. 

     \item \emph{Bundled allocation.}
    The mechanism selects for each buyer $i$ a bundle $x_i$ from a feasible family of bundles. The buyer's valuation $v_i(x_i)$ is over the bundle and need not decompose across its elements.  
    This captures settings where the value of receiving a collection of resources depends on the combination~as~a~whole.
\end{itemize}
\section{Position Auctions and the Market-Equilibrium Interpretation}
\label{sec:posfppe} 
Modern search and display advertising platforms routinely allocate multiple sponsored positions in each interaction, offering buyers multiple heterogeneous slots such as ranked sponsored listings or positions in a social-media feed.
To capture this setting, we study pacing equilibria in \emph{first-price position auction markets}, where each auction allocates a ranked sequence of (ad) slots to buyers based on their bids. Pricing remains pay-your-bid, as in the first-price single-item model, but the allocation rule now assigns multiple heterogeneous positions. We show that this richer allocation structure is enough to break the market-equilibrium interpretation of pacing established in the single-item setting.

\paragraph{Model.}
Formally, let $N=\{1,\dots,n\}$ be the set of buyers (advertisers) and $M=\{1,\dots,m\}$ the set of independent auctions (e.g., user queries). Each auction $j\in M$ offers a ranked list $S=\{1,\dots,s\}$ of ad slots with slot qualities $q_1^j \ge q_2^j \ge \cdots \ge q_s^j \ge 0$, representing advertiser-independent click probabilities (we assume that the slots are shown in ranked order). For auction $j\in M$, buyer $i$'s value from receiving slot $k$ is $v_{ijk} = v_{ij}\, q_k^j$, where $v_{ij}$ is her value per click in that auction\endnote{The value $v_{ij}$ may already incorporate advertiser-specific click-through rate; so in particular, it could be $v_{ijk} = v_{ij}'.CTR_{ij}.q_k^j$ is their actual value per click, and $CTR_{ij}$ is the likelihood that the user clicks on ad $i$ in auction $j$}. Values are additive across slots and auctions, and payments are first-price.
We assume that the goods  are sold through independent position auctions. 
We also assume without loss of generality that the number of slots per auction is at least the number of bidders, i.e. $s\geq n$\endnote{Otherwise we can reduce to $s=n$ by adding $n-s$ dummy slots with quality score zero if $s<n$, or restricting to the top $n$ slots if $s>n$.}. 
An allocation is given by $x\in[0,1]^{n\times m\times s}$, where $x_{ijk}$ is the fraction of slot $k$ assigned to buyer $i$ in auction $j$. Given paced bids $(\alpha_i v_{ij})$, an allocation in auction $j$ is determined by sorting buyers by their bids and assigning slots greedily in decreasing order; ties are split fractionally among the tied buyers across the corresponding slots such that each buyer receives one slot per auction. 

This position-auction model is a special case of the abstract mechanism in Section~\ref{sec:model}. We now state the corresponding pacing equilibrium concept explicitly.

\begin{definition}[First-price pacing equilibrium for position auctions (\textsf{Pos-FPPE})]
A vector of pacing multipliers $\alpha \in [0,1]^n$ forms a 
pacing equilibrium for a first-price position auction mechanism if, under the induced bids $b_{ij} = \alpha_i v_{ij}$ and the resulting allocation and prices $(x,p)$ generated by the mechanism, the following conditions hold:
\begin{itemize}
        \item \emph{(Prices).} For every auction $j$, unit price for slot $k$: $p_{jk} = q_k^j \alpha_h v_{hj}$ if buyer $h$ is (possibly tied for) the $k^\text{th}$ highest bid in auction $j$.
        \item \emph{(Allocation to highest bids).} If $x_{ijk}>0$ then $\alpha_iv_{ij}$ is the $k^\text{th}$ highest paced bid in auction $j$.
        \item \emph{(Slots are fully allocated).} If $p_{jk}>0$ then for each slot $k$ in auction $j$, $\sum _{i\in N} x_{ijk}=1$.
        \item \emph{(Each buyer receives exactly one slot per auction).} $\sum_{k \in S}x_{ijk}=1$ for any buyer $i$ in auction~$j$.
        \item \emph{(Budget-feasibility).} For each buyer $i \in N$: $\sum_{j\in M} \sum_{k\in S} p_{jk}x_{ijk}\leq B_i$.
        \item \emph{(No-unnecessary-pacing).} If $\sum_{j\in M} \sum_{k\in S} p_{jk} x_{ijk} <B_i$, then $\alpha_i =1$ for each buyer $i \in N$.
    \end{itemize}
\end{definition}

\subsection{Relationship to Market Equilibrium}
\citet{conitzer2022pacing} show that in first-price single-item auctions, pacing equilibria admit a clean market-equilibrium interpretation. We investigate whether this correspondence extends to first-price position auctions, where the outcome space exhibits additional structure.

\begin{definition}
    A pair $(x,p)$ of feasible allocations and prices forms a market equilibrium if the following properties hold:
    \begin{enumerate}
        \item \emph{(Buyer optimality).} Each buyer maximizes her utility under prevailing prices, subject to her budget constraint, that is, for all $i \in N$,
        \(
        x_i \in \arg\max_{x_i\ge 0}\;\bigl\{ u_i(x_i,p): \sum_{j,k} p_{jk} x_{ijk}\le B_i\bigr\},
        \)
        where $u_i:~\mathbb R^{m\times s}\times~\mathbb R^{m\times s}\rightarrow~\mathbb R~\cup~\{-\infty\}$ denotes buyer $i$’s utility function under prices $p$ and outcome $x_i$. This means that every bidder is exactly allocated her demand set.
 
        \item \emph{(Feasibility and market clearing).}  The total allocation does not exceed supply, and every slot with positive price is fully allocated, that is, for all $k\in S, j\in M$, $\sum_i x_{ijk}\le 1$ and $p_{jk}>0 \Rightarrow \sum_i x_{ijk} = 1$.
    \end{enumerate}
\end{definition}

As a first step, we consider standard quasilinear utilities for the buyers: \(u_i(x_i,p)=\sum_{j,k}(v_{ijk}-p_{jk})\,x_{ijk}\). \Cref{thm:no-me-1} shows that a \posfppe outcome may fail to form a market equilibrium: the multi-slot structure and rank-by-bid pricing can yield allocations outside buyers' demand sets.

\begin{restatable}{theorem}{MEnonEquiv}\label{thm:no-me-1}
    Under quasilinear utilities,  a \posfppe outcome $(x,p)$ may fail to be a market equilibrium. In particular, this occurs whenever some buyer receives some fraction of a positive-quality slot while a lower positive-quality slot is priced at a strictly smaller nonzero paced bid. 
\end{restatable}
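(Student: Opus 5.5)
The approach is to exhibit a profitable deviation in the demand problem. Fix a \posfppe outcome $(x,p)$ and suppose buyer $i$ receives a positive fraction $x_{ijk}>0$ of slot $k$ in auction $j$, with $q_k^j>0$. Suppose also that some lower slot $k'>k$ with $q_{k'}^j>0$ has price $p_{jk'}=q_{k'}^j\alpha_h v_{hj}$, where $0<\alpha_h v_{hj}<\alpha_i v_{ij}$. I will show that $x_i$ is not in buyer $i$'s quasilinear demand set, so buyer optimality fails.

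The key step is a per-unit-of-slot comparison. By the pricing rule, $p_{jk}=q_k^j\alpha_i v_{ij}$. Buyer $i$'s quasilinear surplus per unit of slot $k$ is therefore
\[
q_k^j v_{ij}(1-\alpha_i),
\]
and per unit of slot $k'$ it is
\[
q_{k'}^j(v_{ij}-\alpha_h v_{hj}).
\]
The demand problem only requires $x_i\ge 0$ and the budget, so reallocating is allowed. Consider shifting weight $\epsilon$ from $x_{ijk}$ to $x_{ijk'}$, scaled so that spending is unchanged: reduce $x_{ijk}$ by $\epsilon$ and raise $x_{ijk'}$ by
\[
\epsilon\, p_{jk}/p_{jk'}.
\]
The budget is unaffected, and the utility change is
\[
\epsilon\,p_{jk}\Bigl(\frac{v_{ij}q_{k'}^j}{p_{jk'}}-\frac{v_{ij}q_k^j}{p_{jk}}\Bigr)=\epsilon\,p_{jk}\,v_{ij}\Bigl(\frac{1}{\alpha_h v_{hj}}-\frac{1}{\alpha_i v_{ij}}\Bigr).
\]
This is strictly positive, since $p_{jk}>0$ and $v_{ij}>0$ (both follow from $\alpha_i v_{ij}>\alpha_h v_{hj}>0$). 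In words, the lower slot has a strictly better bang-per-buck, $v_{ij}/(\alpha_h v_{hj})$ versus $v_{ij}/(\alpha_i v_{ij})=1/\alpha_i$. Hence $x_i$ is not optimal, and $(x,p)$ is not a market equilibrium.

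The only care needed is that the demand problem in the definition places no per-slot cap on $x_i$; the supply constraint is in the clearing condition, not in buyer optimality. So the enlarged $x_{ijk'}$ is admissible for buyer $i$'s optimization. If one insisted on a cap $x_{ijk'}\le 1$, I would instead take $\epsilon$ small, which still works provided $x_{ijk'}<1$. That holds here because $i$ holds part of slot $k$, and each buyer receives exactly one slot per auction.

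To make "may fail" concrete, the plan is to close with an explicit instance: two buyers, one auction with $q_1=1$ and $q_2=1/2$, and values $v_1=2$, $v_2=1$. Budgets are chosen so that $\alpha=(1,1)$ is a \posfppe, for example $B_1=2$ and $B_2=1/2$. Then buyer 1 wins slot 1 at price 2. Slot 2 is priced at $1/2$, with paced bid $1<2$. The condition of the theorem applies, and buyer 1 would prefer to buy two units of slot 2 for utility $2\cdot(1-1/2)=1>0$.

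The main obstacle is just confirming that the instance satisfies every \posfppe condition, including no-unnecessary-pacing, which holds since both buyers exactly exhaust their budgets. Note also that the "fraction of a positive-quality slot" hypothesis is what ensures $x_{ijk}>0$ with $q_k^j>0$.
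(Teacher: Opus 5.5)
Your proposal is correct and follows the paper's argument: both compare buyer $i$'s bang-per-buck $1/\alpha_i$ on slot $k$ with the strictly larger $v_{ij}/(\alpha_h v_{hj})$ on the cheaper lower slot $k'$. The paper concludes through the KKT condition that every purchased good must maximize bang-per-buck, while your budget-neutral swap exhibits that failure directly, and your two-buyer instance confirms that the hypothesis is actually attained by a \posfppe.
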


The condition in \Cref{thm:no-me-1} is mild and is generally met in any instance with at least two positive-quality slots and two distinct positive paced bids. Thus, except for degenerate cases (e.g., all positive paced bids are tied, or there is only one positive-quality slot), the \posfppe outcome cannot coincide with a market equilibrium under standard quasilinear utilities.

\citet{conitzer2022multiplicative} show that \textsf{SPPE} correspond to a refinement of \emph{supply-aware} market equilibria, where buyers are aware of the supplies of each item and choose their demand set accordingly.
This motivates a \emph{position-aware} analogue, in which buyers maximize quasilinear utility over budget-feasible allocations satisfying the position-auction constraint of at most one slot per auction. Intuitively, position-aware buyers understand the market structure and restrict attention to bundles containing (at most) one slot from every auction.
We show that even under this refinement, \posfppe outcomes need not constitute market equilibria.

We encode the one-slot-per-auction feasibility constraint from the \posfppe model into each buyer's quasilinear demand problem. 
Equivalently, we define the \emph{position-aware quasilinear utility}~as
\[
u_i(x_i,p)=
\begin{cases}
\sum_{j,k}\bigl(v_{ijk}-p_{jk}\bigr)\,x_{ijk}, 
& \text{if } \sum_{k} x_{ijk}\le 1 \forall j,\\
-\infty, 
& \text{otherwise}.
\end{cases}
\]

\begin{restatable}{proposition}{posMEnonEqui}\label{prop:no-me-pos}
    There exist instances in which the allocation and prices induced by a \posfppe do not constitute a position-aware market equilibrium.
\end{restatable}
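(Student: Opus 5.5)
The plan is to exhibit one small explicit instance. The mechanism behind it is the same one that drives \Cref{thm:no-me-1}: under first-price pricing, an unpaced buyer pays exactly her value for the slot she wins, so she gets zero surplus. The one-slot-per-auction restriction in the position-aware utility does not prevent her from demanding a \emph{different} single slot instead. She may want a lower slot whose price is set by a strictly smaller paced bid, and that slot gives her strictly positive surplus. So the demand-set violation survives the position-aware refinement.

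\textbf{Instance.} Take $n=2$ buyers, a single auction ($m=1$) and $s=2$ slots with qualities $q_1=1$ and $q_2=1/2$. Set values $v_{11}=2$, $v_{21}=1$ and budgets $B_1=B_2=10$, large enough that no one needs pacing.

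\textbf{Step 1: $\alpha=(1,1)$ is a \posfppe.} With $\alpha=(1,1)$, buyer~1 bids $2$ and buyer~2 bids $1$. Buyer~1 is ranked first and takes slot~1 at unit price $p_{11}=q_1\cdot 2=2$. Buyer~2 takes slot~2 at $p_{12}=q_2\cdot 1=1/2$. I will check each condition in turn.
\begin{itemize}
\item The prices are set by the $k$th highest paced bids.
\item Each slot goes to the correspondingly ranked bidder.
\item Both positive-price slots are fully allocated.
\item Each buyer receives exactly one slot.
\item Spending is $2\le 10$ and $1/2\le 10$, so both budgets hold.
\item Both buyers are unpaced, so no-unnecessary-pacing holds trivially.
\end{itemize}

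\textbf{Step 2: buyer~1's allocation is not in her position-aware demand set.} Her position-aware quasilinear utility from $x_1=(1,0)$ is $q_1 v_{11}-p_{11}=2-2=0$. The alternative $x_1'=(0,1)$ satisfies the at-most-one-slot constraint. It costs $1/2\le B_1$, so it is budget feasible. Its utility is $q_2 v_{11}-p_{12}=1-1/2=1/2>0$. Hence $x_1\notin\arg\max$ and buyer optimality fails, so $(x,p)$ is not a position-aware market equilibrium.

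\textbf{Expected obstacle.} The delicate point is making sure the constructed $\alpha$ is genuinely a \posfppe, especially the budget and no-unnecessary-pacing conditions. Large budgets with $\alpha=(1,1)$ sidestep this. One could also remark that the phenomenon is generic: whenever an unpaced buyer wins a higher slot and a lower positive-quality slot is priced by a strictly smaller positive bid, the same deviation applies. This mirrors the condition stated in \Cref{thm:no-me-1}.
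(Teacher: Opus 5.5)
Your proposal is correct and follows essentially the same route as the paper. Both use an explicit single-auction instance with budgets large enough that every buyer is unpaced, so first-price pricing leaves the deviating buyer zero surplus on her assigned slot, while a lower slot priced at a strictly smaller bid offers strictly positive, budget-feasible surplus under the at-most-one-slot constraint; the paper uses three bidders with qualities $(1,0.6,0.2)$ and has the middle bidder deviate to the bottom slot, whereas your two-bidder instance has the top bidder deviate, an equally valid and slightly smaller witness.
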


The results in~\Cref{thm:no-me-1} and \Cref{prop:no-me-pos} show that moving from a single advertising slot to multiple ranked positions is already sufficient to break the market-equilibrium interpretation of first-price pacing. Thus, the desirable properties of the single-item benchmark do not automatically extend to richer advertising environments, motivating the search for more general mechanism-level conditions.
The remainder of the paper identifies mechanism-level conditions that restore structural and computational guarantees for broader classes of mechanisms that include first-price single-item and position auctions.

\section{Bid-Maximizing Pay-Your-Bid Mechanisms}\label{sec:bid-max-pyb}

Many digital platforms allocate opportunities by solving large-scale constrained optimization problems rather than through simple auctions. Examples include matching markets, cloud resource allocation, capacity-constrained assignment, and other optimization-based allocation systems. To capture this common operational structure, we study pacing equilibria in \emph{bid-maximizing pay-your-bid mechanisms}, which select an outcome maximizing total paced bid value and charge each buyer her paced bid for the allocation she receives. This class includes first-price single-item auctions and first-price position auctions as special cases, and the corresponding pacing equilibria coincide with \fppe and \textsf{Pos-FPPE}, respectively.

We study bid-maximizing pay-your-bid mechanisms under the general mechanism formulation introduced in~\Cref{sec:model}. Let $\mathcal M=(x,p)$ define a mechanism with outcome rule $x$ and payment rule $p$, which map each bid profile to outcomes and payments. 
For this class of mechanisms, we show that pacing equilibria are not only guaranteed to exist, but also admit strong structural and algorithmic properties. Specifically, the equilibrium pacing vector is unique, can be computed via a convex program, and the resulting allocation is Pareto efficient and satisfies liquid welfare guarantees.

\subsection{\abspe via Convex Programming}\label{sec:convex-prog}
We first characterize \abspe in bid-maximizing pay-your-bid mechanisms via convex optimization under mild assumptions on the allocation and valuation spaces. 

\begin{restatable}[\abspe via convex program]{theorem}{PYBConvexProgram}\label{thm:convex-program}
    Assume $X$ is nonempty, compact, and convex, and each $v_i$ is affine in $x_i$.
    Consider a mechanism $\mathcal M=(x,p)$ which, for bids $b=\alpha\!\cdot\!v$, selects an element of $\arg\max_{x\in X}\sum_i \alpha_i v_i(x_i)$. 
    Let $\alpha^*$ solve the convex program
    \begin{align*}
    \min_{\alpha\le 1}\ \max_{x\in X}\sum_i \alpha_i v_i(x_i)-\sum_i B_i\log\alpha_i.\tag{CP1}
    \end{align*}
    Then there exists $x^*\in \arg\max_{x\in X}\sum_i\alpha_i v_i(x_i)$ such that, with $p_i^*=\alpha_i^* v_i(x_i^*)$, $\alpha^*$ is an \abspe of $\mathcal M$ with outcome $x^*$ and payments $p^*$.
\end{restatable}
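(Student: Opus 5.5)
The plan is to read (CP1) as a convex program in $\alpha$ alone, and then obtain the equilibrium conditions from its KKT conditions. The subgradient of the inner maximum will supply the outcome $x^*$.

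\emph{Step 1 (set-up and existence of a minimizer).} Define the utility-vector set $U:=\{(v_1(x_1),\dots,v_n(x_n)) : x\in X\}\subseteq\mathbb R^n$. Because $X$ is nonempty, compact and convex and each $v_i$ is affine in $x_i$, $U$ is the image of $X$ under an affine map, so $U$ is nonempty, compact and convex. Moreover $U\subseteq[0,\bar V]^n$.

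Let $f(\alpha):=\max_{x\in X}\sum_i\alpha_i v_i(x_i)=\max_{u\in U}\langle\alpha,u\rangle$. This is the support function of $U$, so it is finite, convex and Lipschitz on $\mathbb R^n$. The term $-\sum_i B_i\log\alpha_i$ is strictly convex on $(0,\infty)^n$. Hence the objective $F(\alpha)=f(\alpha)-\sum_iB_i\log\alpha_i$ is convex on the feasible set $\{0<\alpha\le 1\}$.

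For existence of a minimizer, note that $f\ge 0$ and $B_i>0$. So $F(\alpha)\to+\infty$ whenever some $\alpha_i\to 0$, while $F$ is bounded above on any fixed interior point. We may therefore restrict to a compact box $[\varepsilon,1]^n$ for small $\varepsilon>0$, and continuity gives a minimizer $\alpha^*$ with $\alpha^*_i>0$ for all $i$. Strict convexity of the log term also makes $\alpha^*$ unique, although uniqueness is not needed here.

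\emph{Step 2 (subdifferential of $f$).} By the standard support-function (Danskin) formula, $\partial f(\alpha)=\arg\max_{u\in U}\langle\alpha,u\rangle$. This set is the exposed face of $U$ in direction $\alpha$, and it is already convex and compact, so no convex hull is needed. Every $u$ in this face has the form $u=(v_i(x_i))_i$ for some $x\in\arg\max_{x\in X}\sum_i\alpha_iv_i(x_i)$. The reason is that $\langle\alpha,u\rangle$ equals the objective at the preimage $x$, so the preimage of a maximizing $u$ is itself a maximizer. This is the step I expect to need the most care, since it is exactly where convexity of $X$ and affineness of $v_i$ enter. Without them, $\partial f$ would be the convex hull of achievable utility vectors, and a subgradient need not be realized by any single outcome.

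\emph{Step 3 (KKT and reading off the equilibrium).} The only constraints are $\alpha_i\le 1$, they are linear, and $\alpha^*$ lies in the open region $\alpha>0$. Hence the KKT conditions hold: there exist $g\in\partial f(\alpha^*)$ and $\mu\ge 0$ with
\[
g_i-\frac{B_i}{\alpha^*_i}+\mu_i=0,\qquad \mu_i(1-\alpha^*_i)=0\quad\text{for all } i.
\]
By Step 2, pick $x^*\in\arg\max_{x\in X}\sum_i\alpha^*_iv_i(x_i)$ with $g_i=v_i(x^*_i)$, and set $p^*_i=\alpha^*_iv_i(x^*_i)$. Multiplying the stationarity condition by $\alpha^*_i$ gives
\[
p^*_i=B_i-\mu_i\alpha^*_i\le B_i,
\]
which is budget feasibility. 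If $\alpha^*_i<1$, complementary slackness forces $\mu_i=0$, so $p^*_i=B_i$. Equivalently, $p^*_i<B_i$ implies $\alpha^*_i=1$, which is no-unnecessary-pacing.

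Finally, $x^*$ is a bid-maximizing outcome for $b=\alpha^*\cdot v$, and $p^*$ is the pay-your-bid payment. So $\alpha^*$, together with outcome $x^*$ and payments $p^*$, is an \abspe of $\mathcal M$, as claimed.
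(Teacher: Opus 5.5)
Your proposal is correct and follows essentially the paper's argument: a Danskin subgradient of the inner max-function, first-order (KKT) optimality for the outer problem read off as budget feasibility and no-unnecessary-pacing, and convexity of $X$ plus affinity of $v_i$ to realize that subgradient by a single maximizer $x^*$. Phrasing the subgradient step through the convex utility set $U$, where the paper takes a convex combination of maximizers, and adding the existence argument for $\alpha^*$ (which the paper simply assumes) are only cosmetic differences.
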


\begin{remark}
    If $X$ is non-convex but the mechanism is allowed to randomize over bid-maximizing allocations, we can instead work in the space of randomized allocations. Applying~\Cref{thm:convex-program} to this randomized allocation space yields an \abspe whose outcome is a distribution over allocations in $X$ supported on bid-maximizing allocations in the original space $X$.
\end{remark}

\Cref{thm:convex-program} shows that the Eisenberg-Gale convex-program characterization of first-price pacing extends well beyond the single-item auction setting. In particular, the same convex optimization framework applies to bid-maximizing pay-your-bid mechanisms whose allocation rule is determined by solving a matching, capacity-constrained assignment, or other optimization problem. As a result, pacing equilibria remain efficiently computable both through the convex-program characterization and, operationally, through the PACE-style dynamics developed below.

The convex program (CP1) in \Cref{thm:convex-program} not only computes an \textsf{Abs-PE}, but also characterizes the equilibrium pacing vector. \Cref{lem:abspetoCP} shows that every \abspe corresponds to an optimal solution of (CP1). Since (CP1) admits a unique minimizer, the equilibrium pacing vector is unique.

\begin{restatable}{lemma}{PYBAbspeToCP}\label{lem:abspetoCP}
    Suppose the assumptions of~\Cref{thm:convex-program} hold. Let $\alpha$ be an \abspe of $\mathcal M$ and $x(\alpha)\in\arg\max_{x\in X}\sum_i \alpha_i v_i(x_i)$ be the induced outcome. Then $\alpha$ is an optimal solution of \emph{(CP1)}.
\end{restatable}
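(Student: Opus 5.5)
We show that $\alpha$ satisfies the first-order optimality (KKT) conditions of (CP1); since (CP1) is convex, these conditions suffice for optimality. Write $\phi(\alpha)=\max_{x\in X}\sum_i\alpha_i v_i(x_i)$ and $F(\alpha)=\phi(\alpha)-\sum_i B_i\log\alpha_i$.

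\textbf{Convexity and subgradient of $\phi$.} The function $\phi$ is a pointwise maximum of functions that are linear in $\alpha$, so it is convex. By Danskin's theorem, the vector $g=(v_i(x_i(\alpha)))_i$ lies in $\partial\phi(\alpha)$, where $x(\alpha)$ is the maximizer actually selected by the mechanism. It is not needed that $\partial\phi(\alpha)$ equal the convex hull of all maximizers' value vectors; we only use that this particular $g$ is a subgradient. The term $-\sum_i B_i\log\alpha_i$ is strictly convex and differentiable on $\alpha>0$.

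\textbf{Every equilibrium multiplier is positive.} Suppose $\alpha_i=0$. Then $p_i=\alpha_i v_i(x_i)=0<B_i$, and no-unnecessary-pacing forces $\alpha_i=1$, a contradiction. Hence $\alpha\in(0,1]^n$ lies in the domain of $F$.

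\textbf{Variational inequality.} We verify that for all feasible $\alpha'\in(0,1]^n$,
\[
\sum_i\Bigl(v_i(x_i(\alpha))-\frac{B_i}{\alpha_i}\Bigr)(\alpha_i'-\alpha_i)\ge 0 .
\]
We treat each coordinate separately, writing $p_i=\alpha_i v_i(x_i(\alpha))$.
\begin{itemize}
\item If $p_i=B_i$, then $v_i(x_i(\alpha))=B_i/\alpha_i$, so the $i$-th term vanishes.
\item If $p_i<B_i$, then $\alpha_i=1$, so $\alpha_i'-\alpha_i\le 0$. Also $v_i-B_i/\alpha_i=(p_i-B_i)/\alpha_i<0$. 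The $i$-th term is therefore a product of two nonpositive numbers and is $\ge 0$.
\end{itemize}
Budget feasibility rules out $p_i>B_i$, so these cases are exhaustive.

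\textbf{Concluding optimality.} Combining the subgradient inequality $\phi(\alpha')\ge\phi(\alpha)+g\cdot(\alpha'-\alpha)$ with convexity of $-B_i\log$ gives
\[
F(\alpha')\ge F(\alpha)+\sum_i\Bigl(g_i-\frac{B_i}{\alpha_i}\Bigr)(\alpha_i'-\alpha_i)\ge F(\alpha).
\]
Any $\alpha'$ with some coordinate equal to $0$ has $F(\alpha')=+\infty$, so it cannot improve on $\alpha$. Hence $\alpha$ is optimal for (CP1).

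\textbf{Main subtlety.} The step that needs care is the non-differentiability of $\phi$ when $\arg\max$ is not a singleton. The argument above avoids any characterization of the full subdifferential: it only uses that the value vector of the specific selected maximizer $x(\alpha)$ is a subgradient. This follows directly from $\phi(\alpha')\ge\sum_i\alpha_i' v_i(x_i(\alpha))$ together with $\phi(\alpha)=\sum_i\alpha_i v_i(x_i(\alpha))$.

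The argument uses neither convexity of $X$ nor affinity of $v_i$; those assumptions matter only for the existence direction, \Cref{thm:convex-program}.
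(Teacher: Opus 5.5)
Your proof is correct and follows essentially the same route as the paper. Both take the Danskin subgradient $g=v(x(\alpha))$ and use budget feasibility plus no-unnecessary-pacing to verify first-order optimality: the paper writes down explicit KKT multipliers $\lambda_i$ for the constraints $\beta_i\le 1$, while you prove the equivalent variational inequality directly, and your explicit check that $\alpha>0$ is a step the paper leaves implicit.
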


\begin{corollary}[Uniqueness of \textsf{Abs-PE}]\label{cor:unique-absp}
    Since $\max_{x\in X}\sum_i \alpha_i v_i(x_i)$ is convex and the regularizer $-\sum_i B_i\log\alpha_i$ is strictly convex for $\alpha>0$, the objective in \emph{(CP1)} is strictly convex on $(0,1]^n$ and thus has a unique minimizer $\alpha^*$. Hence bid-maximizing pay-your-bid mechanisms admit a unique~\textsf{Abs-PE}.
\end{corollary}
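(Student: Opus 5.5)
The argument combines three facts: \Cref{thm:convex-program} gives existence of an \abspe, \Cref{lem:abspetoCP} says every \abspe minimizes (CP1), and (CP1) has at most one minimizer. The third fact is the new ingredient, so I will prove it and then conclude.

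\textbf{Domain.} Since $B_i>0$, the term $-B_i\log\alpha_i$ is $+\infty$ when $\alpha_i\le 0$. So the effective domain of (CP1) is $(0,1]^n$.

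I also need every \abspe to lie in this domain. Suppose $\alpha$ is an \abspe with $\alpha_i=0$. Individual rationality gives $p_i(b)\le \alpha_i v_i(x_i(b))=0<B_i$. The no-unnecessary-pacing condition then forces $\alpha_i=1$, a contradiction. Hence every \abspe has $\alpha\in(0,1]^n$.

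\textbf{Convexity.} Write $f(\alpha)=\max_{x\in X}\sum_i\alpha_i v_i(x_i)$. For each fixed $x$, the map $\alpha\mapsto\sum_i\alpha_iv_i(x_i)$ is linear. So $f$ is a pointwise supremum of linear functions, hence convex. It is finite because $X$ is compact and $0\le v_i\le\bar V$, which gives $0\le f(\alpha)\le n\bar V$ on $[0,1]^n$.

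The regularizer $-\sum_iB_i\log\alpha_i$ is separable, and each summand has second derivative $B_i/\alpha_i^2>0$. It is therefore strictly convex on $(0,1]^n$. A convex function plus a strictly convex one is strictly convex, so the objective $F$ of (CP1) is strictly convex on the convex set $(0,1]^n$.

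\textbf{Uniqueness of the minimizer.} Suppose $\alpha\neq\alpha'$ both minimize $F$ with value $F^*$. Their midpoint lies in $(0,1]^n$. Strict convexity gives $F\bigl(\tfrac12(\alpha+\alpha')\bigr)<F^*$, a contradiction. So (CP1) has at most one minimizer.

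\textbf{Existence of the minimizer.} A minimizer does exist, although \Cref{thm:convex-program} already presupposes this. Since $f\ge 0$, $F(\alpha)\to+\infty$ whenever some $\alpha_i\downarrow 0$, while $F(\mathbf 1)\le n\bar V$. So the sublevel set $\{F\le F(\mathbf 1)\}$ is contained in a box $[\varepsilon,1]^n$ for some $\varepsilon>0$. On that compact box $F$ is continuous, since $f$ is convex and finite on an open neighborhood. So $F$ attains its minimum there.

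\textbf{Conclusion.} \Cref{thm:convex-program} gives at least one \abspe, namely $\alpha^*$. If $\alpha$ and $\alpha'$ are any two \abspe, \Cref{lem:abspetoCP} shows both are optimal for (CP1). By uniqueness of the minimizer, $\alpha=\alpha'=\alpha^*$.

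The claim concerns only the pacing vector. The equilibrium outcome $x^*$ need not be unique, since the $\arg\max$ may contain several elements.

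The only delicate step is the boundary. The strict convexity argument must take place inside $(0,1]^n$, and this is why the fact that no \abspe has a zero coordinate, from the domain step, is needed before invoking \Cref{lem:abspetoCP}.
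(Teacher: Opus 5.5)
Your proof is correct and follows the paper's argument. Theorem~\ref{thm:convex-program} gives existence, Lemma~\ref{lem:abspetoCP} shows every \textsf{Abs-PE} is optimal for (CP1), and strict convexity of the (CP1) objective on $(0,1]^n$ forces a unique minimizer. Your extra steps make explicit two details the paper leaves implicit: individual rationality together with no-unnecessary-pacing rules out a zero coordinate in any \textsf{Abs-PE}, so the lemma applies inside $(0,1]^n$, and coercivity guarantees that a minimizer is attained.
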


From a platform-design perspective, uniqueness ensures that budget management is predictable: the equilibrium pacing multipliers $\{\alpha_i^*\}_{i\in N}$ are uniquely determined. The allocation itself, however, need not be unique and may depend on tie-breaking among bid-maximizing outcomes. Nevertheless, all such outcomes generate the same total revenue since, for any maximizing outcome $x(\alpha^*)$, $\sum_i p_i= \sum_i \alpha_i^* v_i\bigl(x_i(\alpha^*)\bigr) = \max_{x\in X}\sum_i \alpha_i^* v_i(x_i)$.
Moreover, we show that, by convex duality, a maximizing allocation at $\alpha^*$ is computable via an Eisenberg-Gale-type convex program over $X$, generalizing the formulation of~\citet{cole2017convex} for Fisher markets with quasilinear utilities. 

\begin{restatable}[EG-type solution]{proposition}{PYBdualEG}\label{prop:eg-dual}
    Suppose $\mathcal M=(x,p)$ is a bid-maximizing pay-your-bid mechanism that satisfies the assumptions of~\Cref{thm:convex-program}. Then an optimal allocation $x^* = x(\alpha^*\!\cdot\!v)$ at an \abspe $\alpha^*$ of $\mathcal M$ can be obtained by solving the following convex program:
    \begin{align*}
      \max_{\delta_i\in[0,B_i],u,x}\quad \tag{CP2}
        & \sum_i \bigl( B_i \log u_i - \delta_i \bigr) \\ 
        & x\in X,\\
        & u_i \le v_i(x_i) + \delta_i \quad \forall i.
    \end{align*}
    Here $\delta_i$ can be interpreted as leftover budget for buyer $i$: when $\delta_i=0$ buyer $i$ spends her budget entirely and the objective is equivalent to $\sum_i B_i\log v_i(x_i)$; when $\delta_i>0$, buyer $i$ is below budget and the objective is linear in $v_i(x_i)$.
\end{restatable}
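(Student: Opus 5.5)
Our plan is to obtain (CP2) as the convex dual of (CP1), in the same way the Eisenberg--Gale program for Fisher markets with quasilinear utilities dualizes the Shmyrev-type program, and then to read off the equilibrium allocation from the KKT conditions.

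\textbf{Step 1: derive the Lagrangian dual of (CP2).} Attach multipliers $\beta_i\ge 0$ to the constraints $u_i\le v_i(x_i)+\delta_i$. This gives
\[
L=\sum_i\bigl(B_i\log u_i-\delta_i\bigr)+\sum_i\beta_i\bigl(v_i(x_i)+\delta_i-u_i\bigr).
\]
We maximize over $u_i>0$, $\delta_i\in[0,B_i]$ and $x\in X$ separately.
\begin{itemize}
\item Over $u_i$: the maximizer is $u_i=B_i/\beta_i$, with value $B_i\log B_i-B_i\log\beta_i-B_i$.
\item Over $\delta_i$: the term is $(\beta_i-1)\delta_i$, whose maximum is $B_i\max(\beta_i-1,0)$.
\item Over $x$: the term is $\max_{x\in X}\sum_i\beta_i v_i(x_i)$.
\end{itemize}
The dual is therefore $\min_{\beta\ge0}\ \max_x\sum_i\beta_iv_i(x_i)-\sum_iB_i\log\beta_i+\sum_iB_i(\beta_i-1)^+$, up to constants. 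The penalty $B_i(\beta_i-1)^+$ replaces the hard constraint $\beta\le1$. We will show that any $\beta_i>1$ can be lowered to $1$ without increasing the objective: the $\max_x$ term is monotone in $\beta$ with subgradient $v_i(x_i)\le\bar V$, and the remaining terms decrease at rate $B_i$ in $\beta_i$ when $\beta_i>1$. We expect this to be the main obstacle. Direct subgradient comparison may fail when $v_i(x_i)$ exceeds $B_i$, so we will instead argue via KKT that at an optimum $\beta_i>1$ forces $\delta_i=B_i$, which contradicts the stationarity condition derived in Step 2. Alternatively, we may simply restrict $\delta_i\le B_i$ and use exactness. With this in hand the dual coincides with (CP1), so $\beta^*=\alpha^*$ is the unique minimizer by \Cref{cor:unique-absp}.

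\textbf{Step 2: strong duality and complementary slackness.} (CP2) is a concave maximization over a convex compact set, using that $X$ is convex and $v_i$ is affine. Slater's condition holds: take any $x$, any $\delta_i\in(0,B_i)$, and a small positive $u_i$. Strong duality therefore holds, and an optimal primal $(x^*,u^*,\delta^*)$ and dual $\alpha^*$ satisfy the KKT conditions:
\begin{itemize}
\item $x^*\in\arg\max_{x\in X}\sum_i\alpha_i^*v_i(x_i)$, from the Lagrangian maximization over $x$;
\item $u_i^*=B_i/\alpha_i^*$;
\item $u_i^*=v_i(x_i^*)+\delta_i^*$, since $\alpha_i^*>0$;
\item $\delta_i^*>0$ implies $\alpha_i^*\ge1$, and hence $\alpha_i^*=1$.
\end{itemize}

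\textbf{Step 3: check the equilibrium conditions.} From Step 2 we get $\alpha_i^*v_i(x_i^*)=B_i-\alpha_i^*\delta_i^*\le B_i$, so the payments $p_i^*=\alpha_i^*v_i(x_i^*)$ are budget-feasible. If $p_i^*<B_i$, then $\delta_i^*>0$, which forces $\alpha_i^*=1$; this is the no-unnecessary-pacing condition. Since $x^*$ is bid-maximizing at $\alpha^*$, it is an outcome of $\mathcal M$ at the unique \abspe of \Cref{thm:convex-program} and \Cref{cor:unique-absp}. Hence solving (CP2) yields the required allocation $x^*=x(\alpha^*\cdot v)$.
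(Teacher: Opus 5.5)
Your proof is correct, but it runs in the opposite direction from the paper's.

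\textbf{The paper's route.} It starts from (CP1). It first shows $\alpha_i^*\ge \ell_i=\min\{1,B_i/\bar V\}$, so the outer minimization can be restricted to the compact box $\prod_i[\ell_i,1]$. It then applies Sion's minimax theorem to swap $\min_\alpha$ and $\max_x$, and solves the inner minimization in closed form to get $\max_{x\in X}\sum_i\psi_i(v_i(x_i))$. Finally, it checks that partially maximizing (CP2) over $(u,\delta)$ for fixed $x$ gives $\sum_i\psi_i(v_i(x_i))$ plus a constant.

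\textbf{Your route.} You Lagrangian-dualize (CP2) directly and recover (CP1) as its dual. You get strong duality from Slater rather than Sion, so you never need the a priori lower bound on $\alpha^*$. Your route pays off in the last step. Complementary slackness gives $u_i^*=B_i/\alpha_i^*=v_i(x_i^*)+\delta_i^*$, $\delta_i^*>0\Rightarrow\alpha_i^*=1$, and $x^*$ bid-maximizing at $\alpha^*$. This verifies budget feasibility and no-unnecessary-pacing explicitly for every optimal solution of (CP2). The paper instead closes with a brief appeal to ``strong duality and \Cref{thm:convex-program}'' that implicitly relies on the saddle-point property. Conversely, the paper's route produces the closed form $\psi_i$ and the bound $\alpha_i^*\ge\ell_i$. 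These are reused in \Cref{thm:abspe-lw}(2), \Cref{corr:lw-abspe}, and the PACE box projection, and your proof does not deliver them.

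\textbf{Two small points.}

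First, the step you flag as the main obstacle is not one. The term $\max_{x\in X}\sum_i\beta_iv_i(x_i)$ is nondecreasing in $\beta_i$ because $v_i\ge 0$. The map $\beta_i\mapsto -B_i\log\beta_i+B_i(\beta_i-1)$ is strictly increasing on $(1,\infty)$, since $\log\beta_i<\beta_i-1$. So lowering any $\beta_i>1$ to $1$ strictly decreases the dual objective, however large $v_i(x_i)$ is. Your KKT detour is valid but unnecessary.

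Second, the feasible set of (CP2) is not compact, because the domain constraint $u_i>0$ is open. Primal attainment therefore needs one more line. For fixed $x$, the optimal choice is $\delta_i=(B_i-v_i(x_i))^+$ and $u_i=\max\{B_i,v_i(x_i)\}$. You may therefore restrict to $u_i\in[B_i,\bar V+B_i]$ and use compactness of $X$.
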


\begin{remark}
    Algorithmically, the tractability of pacing equilibrium in the bid-maximizing pay-your-bid mechanisms setting boils down to managing $X$: if we have efficient routines to optimize over~$X$, then the convex program is tractable. 
\end{remark}

The convex-program characterization also provides a natural framework for establishing Pareto efficiency.
Let $U$ denote the vector of buyer utilities $U_i:=v_i(x_i)+\delta_i$, and let $R$ denote seller revenue $R:=\sum_i(B_i-\delta_i)$. Since $\log(\cdot)$ is strictly increasing (and $U_i>0$ at feasible solutions we consider), any Pareto improvement in $(U,R)$ would strictly improve the objective of (CP2), contradicting optimality. Hence, every optimal solution of (CP2) is Pareto optimal in $(U,R)$.

\begin{corollary}[Pareto efficiency]\label{cor:pareto-eff-abspe}
Let $(x^*,\delta^*,u^*)$ be an optimal solution of \emph{(CP2)}. Then the outcome $(x^*,\delta^*)$ is Pareto efficient with respect to buyer utilities $(U_i)_{i\in N}$ and seller utility $R$.
\end{corollary}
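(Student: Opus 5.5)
The proof is a contradiction argument. Suppose $(x^*,\delta^*,u^*)$ is optimal for (CP2) and that some feasible outcome $(x',\delta')$ with $x'\in X$ and $\delta'_i\in[0,B_i]$ weakly improves every buyer utility $U_i$ and the seller revenue $R$, with at least one strict improvement. Set $u'_i:=v_i(x'_i)+\delta'_i$, so that $(x',\delta',u')$ is feasible for (CP2). The aim is to show this triple has strictly larger objective than the optimum.

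\textbf{Step 1: rewrite the objective.} The objective of (CP2) is
\[
\sum_i B_i\log u_i-\sum_i\delta_i=\sum_i B_i\log u_i + R-\sum_i B_i .
\]
Since $\sum_i B_i$ is a constant, the objective is strictly increasing in each $u_i$ (because $B_i>0$) and in $R$.

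\textbf{Step 2: compare the two solutions.} At the optimum the constraint $u_i\le v_i(x_i)+\delta_i$ binds. Otherwise we could raise $u_i$ and strictly increase the objective. Hence $u^*_i=U_i^*$ for all $i$. Positivity also holds: $u^*_i>0$, since $u_i=0$ gives an objective of $-\infty$, while the feasible point $\delta_i=B_i$ gives a finite value. Now the Pareto improvement gives $u'_i=U'_i\ge U^*_i=u^*_i>0$ for every $i$ and $R'\ge R^*$, with at least one inequality strict. By the strict monotonicity from Step 1, the objective at $(x',\delta',u')$ strictly exceeds the objective at $(x^*,\delta^*,u^*)$. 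This contradicts optimality, so no Pareto improvement exists.

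\textbf{Main obstacle.} The only delicate point is justifying that the utility constraint binds and that $u^*>0$. Both follow from the elementary observation in Step 2, so no further tools are needed. In particular, none of the Abs-PE machinery (Theorem~\ref{thm:convex-program}, Proposition~\ref{prop:eg-dual}) is required for this corollary itself.
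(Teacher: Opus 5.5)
Your proof is correct and follows the same route as the paper: a Pareto improvement in $(U,R)$ strictly increases the (CP2) objective $\sum_i B_i\log u_i + R - \sum_i B_i$, contradicting optimality. You also spell out two points the paper leaves implicit: that the constraint $u_i\le v_i(x_i)+\delta_i$ binds at the optimum, and that $u_i^*>0$.
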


\paragraph{PACE-style dynamics.}
Our EG-type characterization suggests a direct generalization of the PACE updates of~\citet{gao2021online} for computing \abspe. 
Let $F(\alpha)=\max_{x\in X}\sum_i \alpha_i v_i(x_i)-\sum_i B_i\log\alpha_i$ over $\alpha\in(0,1]^n$ be the inner problem in~\Cref{thm:convex-program}. 
Given an iterate $\alpha^t$, let $x^t\in \arg\max_{x\in X} \sum_i \alpha_i^t v_i(x_i)$, and set $g^t=(v_i(x_i^t))_{i\in N}.$ Since $\phi(\alpha):=\max_{x\in X}\sum_i \alpha_i v_i(x_i)$ is a pointwise maximum of linear functions in $\alpha$, then $g^t\in \partial \phi(\alpha^t)$ is a valid subgradient.
Moreover, under bounded valuations $0\le v_i(x_i)\le \bar V$, optimal multipliers satisfy $\alpha_i^*\in[\ell_i,1]$ with $\ell_i:=\min\{1,B_i/\bar V\}$ (as shown in the proof of~\Cref{prop:eg-dual}).
Thus, one can apply standard dual-averaging/mirror-descent methods to minimize $F$ over $\prod_i[\ell_i,1]$ using only a bid-maximization oracle, yielding closed-form PACE-like updates of the form $\alpha_i^{t+1}=\prod_{[\ell_i,1]}\!\left(\frac{B_i}{\bar g_i^t}\right)$, where $\bar g_i^t$ denotes the dual average (time-averaged utilities) for buyer $i$. Convergence rates then follow from standard convex optimization~analysis.

\subsection{Liquid Welfare Guarantees} 
The previous subsection establishes that bid-maximizing pay-your-bid mechanisms admit a well-structured equilibrium that can be computed efficiently and yields Pareto-efficient outcomes. Beyond computational tractability and structural guarantees, a natural question is whether the resulting allocations are economically efficient. For platforms operating with budget-constrained participants, liquid welfare provides a standard benchmark for evaluating allocation quality. We establish liquid welfare guarantees for pacing equilibria in this class.
Recall that the liquid welfare of a buyer is the minimum of the value she receives and her budget. More precisely, for any buyer $i$, the liquid welfare she achieves from allocation $x_i$ is $\mathrm {LW}(x_i) = \min\{v_i(x_i), B_i\}$. Let $\mathrm {LW}(x) = \sum_i \mathrm{LW}(x_i)$ denote the total liquid welfare.
We establish complementary multiplicative and additive guarantees for \abspe allocations in bid-maximizing pay-your-bid mechanisms. In particular, equilibrium revenue coincides with liquid welfare and every \abspe achieves at least half of the optimal liquid welfare. Under the convexity assumptions of \Cref{thm:convex-program}, we further obtain an additive guarantee in terms of the equilibrium pacing multipliers, together with an equilibrium-independent version depending only on budgets and the valuation bound $\bar V$. 

\begin{restatable}[LW guarantees for \abspe allocations]{theorem}{abspeLW}  \label{thm:abspe-lw}
Consider a bid-maximizing pay-your-bid mechanism with \abspe $\alpha^*$, and let $x^*$ be the corresponding allocation.  Then the following guarantees hold:
\begin{enumerate}
    \item Equilibrium revenue equals liquid welfare, and $\mathrm{LW}(x^*) \ge \frac12\mathrm{OPT}_{\mathrm{LW}}.$

    \item If, in addition, the assumptions of \Cref{thm:convex-program} hold, then $\mathrm{LW}(x^*) \ge \mathrm{OPT}_{\mathrm{LW}} - \sum_{i\in N} B_i\log\frac1{\alpha_i^*}.$
    Combining this with part~(1),
    \(
    \mathrm{LW}(x^*) \ge
    \mathrm{OPT}_{\mathrm{LW}} -
    \min\left\{
        \frac12\mathrm{OPT}_{\mathrm{LW}},
        \sum_{i\in N} B_i\log\frac1{\alpha_i^*}
    \right\}.
    \)
\end{enumerate}
\end{restatable}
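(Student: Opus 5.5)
For part (1), the plan is to first show that revenue equals liquid welfare at $\alpha^*$, and then run a bid-maximization argument against the optimal liquid-welfare allocation. Fix an \abspe $\alpha^*$ with outcome $x^*$ and payments $p_i^*=\alpha_i^* v_i(x_i^*)$. Split the buyers into $A=\{i:\alpha_i^*<1\}$ and $N\setminus A$.

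\textbf{Revenue equals liquid welfare.} For $i\in A$, no-unnecessary-pacing forces $p_i^*=B_i$. Then $v_i(x_i^*)=B_i/\alpha_i^*\ge B_i$, so $\mathrm{LW}(x_i^*)=B_i=p_i^*$. For $i\notin A$, we have $p_i^*=v_i(x_i^*)\le B_i$ by budget feasibility, so $\mathrm{LW}(x_i^*)=v_i(x_i^*)=p_i^*$. Summing over buyers gives $\sum_i p_i^*=\mathrm{LW}(x^*)$.

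\textbf{Half of the optimum.} Let $y$ attain $\mathrm{OPT}_{\mathrm{LW}}$. Since $x^*$ maximizes $\sum_i\alpha_i^*v_i(\cdot)$ over $X$,
\[
\mathrm{LW}(x^*)=\sum_i\alpha_i^*v_i(x_i^*)\ge\sum_{i\notin A}v_i(y_i)\ge\sum_{i\notin A}\min\{v_i(y_i),B_i\}.
\]
For $i\in A$, $\min\{v_i(y_i),B_i\}\le B_i=\mathrm{LW}(x_i^*)$, so $\sum_{i\in A}\min\{v_i(y_i),B_i\}\le\mathrm{LW}(x^*)$. Adding the two bounds gives $\mathrm{OPT}_{\mathrm{LW}}\le 2\,\mathrm{LW}(x^*)$. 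This argument uses neither convexity nor affineness.

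For part (2), I would compare $y$ with $x^*$ buyer by buyer and absorb the gap using $\log(1/\alpha_i^*)\ge 1-\alpha_i^*$. Bid maximality gives $\sum_i\alpha_i^* v_i(y_i)\le\sum_i p_i^*=\mathrm{LW}(x^*)$. It then suffices to show that each term satisfies
\[
\min\{v_i(y_i),B_i\}\le \alpha_i^*v_i(y_i)+B_i\log\frac1{\alpha_i^*}.
\]
\textbf{Per-buyer inequality.} If $\alpha_i^*=1$, this is trivial. If $\alpha_i^*<1$, let $w=\min\{v_i(y_i),B_i\}$. Then $w-\alpha_i^* v_i(y_i)\le w(1-\alpha_i^*)\le B_i(1-\alpha_i^*)\le B_i\log(1/\alpha_i^*)$. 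Summing over buyers and chaining with bid maximality yields $\mathrm{LW}(x^*)\ge\mathrm{OPT}_{\mathrm{LW}}-\sum_iB_i\log(1/\alpha_i^*)$.

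This is again purely elementary. The assumptions of \Cref{thm:convex-program} enter only to guarantee that $\alpha^*$ is the unique CP1 solution (\Cref{cor:unique-absp}) and that $\alpha_i^*\ge\min\{1,B_i/\bar V\}$. The latter makes the bound finite and lets one state the equilibrium-independent version $\sum_i B_i\log\max\{1,\bar V/B_i\}$.

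\textbf{Combined bound.} Take the better of the two lower bounds: part (1) gives $\mathrm{LW}(x^*)\ge\mathrm{OPT}_{\mathrm{LW}}-\tfrac12\mathrm{OPT}_{\mathrm{LW}}$, and the per-buyer argument gives the additive form. Together they yield the $\min$ expression in the statement.

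\textbf{Main obstacle.} The main point is to be careful about the outcome used in part (1). When tie-breaking makes the allocation non-unique, the argument must use the outcome actually selected at equilibrium, which satisfies both bid maximality and the no-unnecessary-pacing condition. With that outcome fixed, everything else is routine.
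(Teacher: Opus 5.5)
Your proof is correct. Part (1) is essentially the paper's argument; the paper partitions buyers by whether $P_i=B_i$ rather than by whether $\alpha_i^*<1$, which makes no difference.

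For part (2) you take a genuinely different route. The paper goes through the dual of (CP1). It defines $\phi(x)=\sum_i\psi_i(v_i(x_i))$, with $\psi_i(u)=u$ for $u\le B_i$ and $\psi_i(u)=B_i+B_i\log(u/B_i)$ otherwise. It shows $\phi\ge\mathrm{LW}$ pointwise and computes $\phi(x^*)=\mathrm{LW}(x^*)+\sum_iB_i\log(1/\alpha_i^*)$ exactly. It then invokes Proposition~\ref{prop:eg-dual} to get $\phi(x^*)\ge\phi(x^{\mathrm{LW}})$, which rests on Sion's theorem and hence on convexity of $X$ and affineness of $v_i$. You instead use only bid maximality plus the per-buyer inequality $\min\{v_i(y_i),B_i\}-\alpha_i^*v_i(y_i)\le B_i(1-\alpha_i^*)\le B_i\log(1/\alpha_i^*)$.

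Your route buys more than you claim. It needs none of the assumptions of \Cref{thm:convex-program}. Its intermediate bound $\mathrm{OPT}_{\mathrm{LW}}-\mathrm{LW}(x^*)\le\sum_iB_i(1-\alpha_i^*)$ is sharper than the logarithmic one. Because $\sum_iB_i(1-\alpha_i^*)\le\sum_{i:\alpha_i^*<1}B_i\le\mathrm{LW}(x^*)$, this single inequality already implies the factor-$\tfrac12$ guarantee, so it dominates the paper's combined $\min$ bound. It is also exactly tight on the instance of \Cref{lem:lw-tight-exp}: the gap there is $1-\varepsilon$, while $\sum_iB_i\log(1/\alpha_i^*)=\log m$. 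What the paper's route buys is structural insight: it identifies the $\log$ term as the exact slack between $\mathrm{LW}$ and the EG surrogate $\phi$ that $x^*$ maximizes.

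One correction to your side remark. The bound $\alpha_i^*\ge\min\{1,B_i/\bar V\}$ does not use the convexity assumptions either. It follows from $B_i=\alpha_i^*v_i(x_i^*)\le\alpha_i^*\bar V$ at any \abspe of a pay-your-bid mechanism. Likewise $\alpha_i^*>0$, and hence finiteness of the bound, is automatic from no-unnecessary-pacing. So with your argument, \Cref{corr:lw-abspe} also holds without the convexity hypotheses.
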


The additive guarantee in \Cref{thm:abspe-lw} depends on the equilibrium pacing multipliers. Using the lower bound $\alpha_i^*\ge\min\left\{1,\frac{B_i}{\bar V}\right\}\; \forall i\in N$ established in the proof of \Cref{prop:eg-dual} gives the following equilibrium-independent additive bound.

\begin{corollary}[Equilibrium-independent additive LW guarantee]\label{corr:lw-abspe}
Under the assumptions of \Cref{thm:convex-program}, $\mathrm{LW}(x^*)\ge \mathrm{OPT}_{\mathrm{LW}} -  \sum_{i\in N} B_i \left[\log\frac{\bar V}{B_i}\right]_+.$
\end{corollary}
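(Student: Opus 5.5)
The corollary follows by inserting the per-buyer lower bound $\alpha_i^*\ge \ell_i:=\min\{1,B_i/\bar V\}$ into part~(2) of \Cref{thm:abspe-lw}. Under the assumptions of \Cref{thm:convex-program}, that result gives
\[
\mathrm{LW}(x^*)\ \ge\ \mathrm{OPT}_{\mathrm{LW}}-\sum_{i\in N}B_i\log\frac{1}{\alpha_i^*}.
\]
So it suffices to show, for each buyer separately, that $\log(1/\alpha_i^*)\le[\log(\bar V/B_i)]_+$. Since $B_i>0$, multiplying by $B_i$ and summing then finishes the argument.

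For the per-buyer bound I use $\alpha_i^*\ge \ell_i$, established in the proof of \Cref{prop:eg-dual}. If it needs to be rederived, the argument is as follows. Suppose $\alpha_i^*<1$. By the no-unnecessary-pacing condition, buyer $i$ must exhaust her budget: $p_i^*=\alpha_i^* v_i(x_i^*)=B_i$. Bounded valuations give $v_i(x_i^*)\le\bar V$, so $\alpha_i^*=B_i/v_i(x_i^*)\ge B_i/\bar V$. Otherwise $\alpha_i^*=1$. In either case $\alpha_i^*\ge \ell_i$.

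Because $\log$ is increasing, $\log(1/\alpha_i^*)\le\log(1/\ell_i)=\max\{0,\log(\bar V/B_i)\}=[\log(\bar V/B_i)]_+$. This holds in both cases, $B_i\ge\bar V$ and $B_i<\bar V$.

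The only subtle point is well-definedness when $\alpha_i^*$ could be $0$ or $v_i(x_i^*)=0$. The objective of (CP1) forces $\alpha^*>0$, and budget exhaustion with $B_i>0$ forces $v_i(x_i^*)>0$ whenever $\alpha_i^*<1$. So no real obstacle arises; the corollary is a direct substitution.
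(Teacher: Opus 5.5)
Your proposal is correct and is exactly the paper's argument: substitute the bound $\alpha_i^*\ge\min\{1,B_i/\bar V\}$ (derived in the proof of \Cref{prop:eg-dual} from no-unnecessary-pacing and $v_i\le\bar V$) into part~(2) of \Cref{thm:abspe-lw}, using $\log\bigl(1/\min\{1,B_i/\bar V\}\bigr)=[\log(\bar V/B_i)]_+$. Your positivity remark is a harmless extra, and it also follows directly from the equilibrium conditions: $\alpha_i^*=0$ would give zero payment, so no-unnecessary-pacing would force $\alpha_i^*=1$.
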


Moreover, the multiplicative guarantee in \Cref{thm:abspe-lw} is tight for the class of bid-maximizing pay-your-bid mechanisms. Indeed, first-price single-item auctions are a special case of this class, and the following construction shows that the liquid welfare ratio can be made arbitrarily close to~$1/2$.

\begin{restatable}{lemma}{PYPfppeBound}\label{lem:lw-tight-exp}
Fix an integer $m\ge 2$ and let $\varepsilon:=1/m$. There is an instance of first-price single-item auction markets for which an \fppe allocation $x^*$ satisfies
\(
\frac{\mathrm{LW}(x^*)}{\max_{y\in X}\mathrm{LW}(y)} = \frac{1}{2-\varepsilon}.
\)
\end{restatable}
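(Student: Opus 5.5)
We construct an explicit two-buyer instance of first-price single-item auction markets. In it one \fppe gives every item to a single buyer, with ties broken in that buyer's favor. Liquid welfare then equals revenue, which is only one budget's worth. The optimum instead splits the items so that both budgets are essentially exhausted in value.

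\textbf{Instance.} Take $m$ items and two buyers. Buyer~1 has budget $B_1=1$ and value $v_{1j}=1$ for every item $j\in[m]$. Buyer~2 has budget $B_2=1-\varepsilon=(m-1)\varepsilon$ and value $v_{2j}=\varepsilon$ for every item $j$.

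\textbf{Step 1: exhibit an \fppe.} Set $\alpha_1=\varepsilon$ and $\alpha_2=1$, so both buyers bid exactly $\varepsilon$ on every item. Every item is tied. Fractional tie splitting is allowed in \fppe, so we allocate every item entirely to buyer~1. We then check each \fppe condition in turn.
\begin{itemize}
\item Each item goes to a (tied) highest bidder at price $\varepsilon$, and each positively priced item is fully allocated.
\item Buyer~1 pays $m\varepsilon=1=B_1$, so she is budget-feasible and her pacing below one is justified.
\item Buyer~2 pays $0<B_2$, and she is unpaced, $\alpha_2=1$, so no-unnecessary-pacing holds.
\end{itemize}
Hence this is an \fppe. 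Its liquid welfare is
\[
\mathrm{LW}(x^*)=\min\{m,1\}+\min\{0,B_2\}=1,
\]
which matches revenue, as in \Cref{thm:abspe-lw}(1).

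\textbf{Step 2: compute the optimum.} Give item~1 to buyer~1 and items $2,\dots,m$ to buyer~2. Buyer~1 gets value $1=B_1$. Buyer~2 gets value $(m-1)\varepsilon=1-\varepsilon=B_2$. This allocation therefore has liquid welfare $2-\varepsilon$.

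Conversely, any allocation satisfies $\mathrm{LW}\le B_1+B_2=2-\varepsilon$. So $\max_y \mathrm{LW}(y)=2-\varepsilon$, and the ratio is exactly $1/(2-\varepsilon)$.

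\textbf{Main obstacle.} The delicate step is Step~1. The instance relies on the exact tie at price $\varepsilon$ and on breaking every tie toward buyer~1. We must make sure the \fppe definition of \citet{conitzer2022pacing} permits this split: it does, since any split among tied highest bidders that respects budgets and the no-unnecessary-pacing condition is allowed.

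If one insists on strict bids instead, we would lower $v_{2j}$ slightly below $\varepsilon$ and take the limit. That yields only ratios approaching $1/(2-\varepsilon)$ rather than equality, so the exact statement relies on the tie.
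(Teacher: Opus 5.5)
Your proof is correct and uses the paper's construction: two buyers with per-item values $1$ and $\varepsilon$, pacing vector $(\varepsilon,1)$, every tied item assigned to buyer~1, and a benchmark allocation that gives buyer~1 a single item. The only difference is that you set $B_2=1-\varepsilon$ instead of the paper's $B_2=1$, which makes the bound $\max_y \mathrm{LW}(y)\le B_1+B_2=2-\varepsilon$ immediate; with $B_2=1$ one needs a trade-off argument (buyer~1 must receive a full unit to reach her cap, at the cost of $\varepsilon$ of buyer~2's value), and the paper's parenthetical justification, as written, only gives $\le 2$.
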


\section{General Abstract Mechanisms}\label{sec:abstract-mech}

In this section, we analyze pacing in general abstract mechanisms, specified only by an allocation rule and a payment rule defined over bid profiles. This abstraction encompasses platform allocation systems that need not fit standard auction models, including mechanisms with reserves, eligibility constraints, and other platform-specific business rules. Working at this level of generality, we identify monotonicity conditions on payments under which an \abspe exists and admits strong structural and economic properties. We then study equilibrium computation via adaptive bidding dynamics.

\paragraph{Assumptions.}
Our abstract analysis relies on monotonicity conditions on the payment rule. We distinguish a \emph{single standing assumption}, which underlies all existence and structural results, from \emph{additional assumptions} that are invoked only for stronger guarantees or for computational convergence.

\textbf{Standing assumption.} Throughout Sections \ref{subsec:existence}--\ref{subsec:computation}, we maintain Assumption~\ref{ass:pymt-opp-inverse-mnt} (\emph{opponent inverse monotonicity}): holding buyer $i$’s bid fixed, lowering the other buyers’ bids weakly increases $i$’s payment. Intuitively, decreasing bids can only make a buyer pay weakly more, usually because that causes her to win more goods.

\textbf{Additional structural assumptions.} To obtain sharper economic guarantees, we impose two further conditions. Assumption~\ref{ass:collec-pymt-mnt} (\emph{collective strict monotonicity}) is used only to establish uniqueness of the pacing equilibrium. If, in addition, Assumption~\ref{ass:monotone-rev} (\emph{monotone revenue}) holds, then the unique equilibrium pacing vector maximizes revenue among all budget-feasible pacing vectors. 

\textbf{Computational regularity.} Finally, in the computational section (\Cref{subsec:computation}), we impose two algorithm-specific regularity conditions. Assumption~\ref{ass:homog} (\emph{common-scale responsiveness}) is required for the budget dynamics of~\citet{borgs2007dynamics} (\Cref{alg:borgs}), while Assumption~\ref{ass:mult-OBM} (\emph{own-bid responsiveness}) is used for the multiplicative-weights updates (\Cref{alg:MW}). These assumptions play no role in the existence or structural~results.

Unless stated otherwise, all assumptions are imposed on the payment rule of the \emph{original} mechanism; for tie-breaking results, we assume the stated properties for the selected payment rule at tie-breaking.

\paragraph{Examples.}
Several natural mechanism families satisfy the assumptions used for our equilibrium guarantees. In particular, Assumptions~\ref{ass:pymt-opp-inverse-mnt}--\ref{ass:monotone-rev} hold for mechanisms with separable strictly increasing payment rules $p_i(b)=c_i(b_i)$, including all-pay formats (e.g., all-pay auctions with or without reserves, weighted all-pay contests and Tullock-style contests) and bid-dependent entry-fee formats in which agents pay regardless of allocation.
They also hold for standard rank-based first-price mechanisms, including single-item, multi-unit, and position auctions (as well as variants with reserve prices), and for standard bid-priority capacity-constrained mechanisms with first-price payments, such as priority-based resource allocation under shared capacity.
More generally, Assumption~\ref{ass:pymt-opp-inverse-mnt} holds for mechanisms with payments \(p_i(b)=b_i^\gamma x_i(b), \, \gamma>0\), whenever \(x_i(\cdot)\) is monotone in the natural sense (i.e., weakly decreasing in opponents' bids and weakly increasing in own bid). This class includes pay-your-bid mechanisms as the special case \(\gamma=1\), such as (weighted) proportional-sharing mechanisms.

The additional conditions used only for convergence of the learning dynamics are also satisfied by standard payment formats. For example, Assumption~\ref{ass:homog} holds for mechanisms with positively homogeneous payment rules, including standard first-price auctions, and Assumption~\ref{ass:mult-OBM} holds additionally for mechanisms whose payments have positive elasticity in own bid, such as power payments $p_i(b)=b_i^\mu,\,\mu>0$, or payment rules \(p_i(b)=b_i^\gamma x_i(b),\, \gamma>0\) with \(x_i(\cdot)\) weakly increasing in own bid.

\subsection{Existence of \abspe}
\label{subsec:existence}
Our main result establishes that any abstract mechanism with an arbitrary allocation rule and a payment rule satisfying Assumption~\ref{ass:pymt-opp-inverse-mnt} admits a pacing equilibrium under appropriate tie-breaking.
We establish existence in three steps, following a lattice-based approach inspired by the structure of \textsf{FPPE}. First, we show that under Assumption~\ref{ass:pymt-opp-inverse-mnt}, the set of budget-feasible multipliers is closed under coordinate-wise maxima, and therefore forms a join-semilattice. 
Second, since discontinuities in general mechanisms can make existence sensitive to tie-breaking, we introduce a smoothing framework that ensures expected payments and allocations are continuous in each multiplier. This yields a coordinate-wise greatest budget-feasible multiplier, which we show satisfies the no-unnecessary-pacing condition and is therefore an \abspe of the smoothed mechanism. Finally, we transfer this existence result back to the original mechanism via a limit-based tie-breaking construction  as the perturbation vanishes.
The following subsections develop this argument in detail and establish the~result.

\subsubsection{Join-semilattice property}
\label{sec:join-semilattice}
Consider the set of budget-feasible pacing multipliers $\mathcal A~:=~\{\alpha~\in~[0,1]^n: p_i(\alpha\cdot v)\le B_i \ \forall i\}$, ordered coordinate-wise.  
Under the following standing monotonicity condition on payments, we show that $\mathcal A$ is closed under the join operation $\vee$ given by component-wise maximum; equivalently, $(\mathcal A,\vee)$ forms a join-semilattice.

\begin{assumption}[Opponent inverse monotonicity]
\label{ass:pymt-opp-inverse-mnt}
     Consider two profiles $b$ and $b'$ where $b_i=b_i'$ and $b_{-i}'\leq~b_{-i}$ coordinate‐wise, then $p_i(b_i,b_{-i}') \ge p_i(b_i,b_{-i})$.
\end{assumption}

Assumption~\ref{ass:pymt-opp-inverse-mnt} states that, when holding buyer $i$’s bid fixed, decreasing opponents’ bids weakly increases $i$’s payment. Intuitively, facing weaker competition can only make buyer $i$’s payment weakly higher.
This assumption underlies our structural and algorithmic results and is maintained throughout Sections~\ref{subsec:existence}-\ref{subsec:computation}.
As we show next, this condition implies that budget feasibility is preserved under component-wise maxima.

\begin{lemma}[Join-closure] \label{lem:join-closure}
    Given any two budget-feasible pacing vectors $\alpha^1$ and $\alpha^2$, their componentwise maximum $\alpha^* =\max(\alpha^1, \alpha^2)$ is budget-feasible. 
\end{lemma}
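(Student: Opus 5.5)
Fix a buyer $i$ and compare her payment under $\alpha^*=\max(\alpha^1,\alpha^2)$ with her payment under whichever of $\alpha^1,\alpha^2$ attains the maximum in her own coordinate. Then apply Assumption~\ref{ass:pymt-opp-inverse-mnt} to that comparison.

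Concretely, choose $k\in\{1,2\}$ with $\alpha_i^k=\alpha_i^*$; such a $k$ exists because $\alpha_i^*=\max(\alpha_i^1,\alpha_i^2)$. Write $b^*=\alpha^*\cdot v$ and $b^k=\alpha^k\cdot v$. Buyer $i$'s own bid is identical in the two profiles:
\[
b_i^*=\alpha_i^* v_i=\alpha_i^k v_i=b_i^k .
\]
For every opponent $j\neq i$ we have $\alpha_j^k\le\alpha_j^*$. Since $v_j\ge 0$ pointwise, this gives $b_j^k=\alpha_j^k v_j\le \alpha_j^* v_j=b_j^*$ in the order on $V$ defined in Section~\ref{sec:model}. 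So $b^k_{-i}\le b^*_{-i}$ coordinate-wise.

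Apply Assumption~\ref{ass:pymt-opp-inverse-mnt} with $b=b^*$ and $b'=b^k$. The two profiles agree in coordinate $i$ and $b'_{-i}\le b_{-i}$, so the assumption yields
\[
p_i(b^*)\le p_i(b^k).
\]
Budget feasibility of $\alpha^k$ gives $p_i(b^k)\le B_i$, hence $p_i(\alpha^*\cdot v)\le B_i$. Because $i$ was arbitrary, $\alpha^*$ is budget-feasible. It also lies in $[0,1]^n$, since each of its coordinates is the maximum of two numbers in $[0,1]$.

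The only step needing care is choosing the witness $k$ separately for each buyer $i$: no single one of $\alpha^1,\alpha^2$ needs to dominate the other, so different buyers may use different $k$. A secondary point to check is that scaling by $\alpha_j\le\alpha_j^*$ really gives a smaller bid function in the order on $V$; this follows from nonnegativity of valuations.
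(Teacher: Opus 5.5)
Your proposal is correct and matches the paper's argument: for each buyer $i$, pick the vector $\alpha^k$ that attains $\alpha_i^*$, note that $\alpha_{-i}^*\ge\alpha_{-i}^k$, and apply Assumption~\ref{ass:pymt-opp-inverse-mnt} to get $p_i(\alpha^*\cdot v)\le p_i(\alpha^k\cdot v)\le B_i$. Your remarks that the witness $k$ may depend on $i$ and that $v_j\ge 0$ is what orders the bid functions are exactly the details this argument relies on.
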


The proof follows directly from observing that in either $\alpha^1$ or $\alpha^2$, buyer $i$ had the exact same multiplier -- say it was in $\alpha^d$ for $d \in \{1,2\}$ (breaking ties arbitrarily toward 1). Since $\alpha_i^*=\alpha_i^d$ and $\alpha_{-i}^* \ge \alpha_{-i}^d$, then Assumption~\ref{ass:pymt-opp-inverse-mnt} ensures that $p_i(\alpha_i^*v_i,\alpha_{-i}^*v_{-i}) \leq p_i(\alpha_i^dv_i,\alpha_{-i}^dv_{-i})$. Since $\alpha^d$ is budget-feasible, then so is $\alpha^*$.

\subsubsection{Mechanism perturbation} \label{section:mechanism-perturb}
The join-semilattice structure suggests constructing an equilibrium pacing vector as a coordinate-wise greatest budget feasible vector, mirroring the maximum-element characterization of \fppe in the first-price single-item setting. However, join-closure alone is not sufficient: allocations and payments in $\mathcal M$ can be discontinuous in bids (e.g., due to tie-breaking among equal bids, changes in the set of binding constraints, integrality requirements), so $\mathcal A$ need not be closed and coordinate-wise suprema may fail to be attained. Consequently, the maximum-element argument can break down, and existence may depend on arbitrary resolution choices such as tie-breaking.

To obtain a robust existence statement, we internalize the resolution of discontinuities via small, atomless perturbations of bids. This smoothing yields continuity of the expected allocation and payment functions in each buyer's pacing multiplier, restoring the regularity needed for the maximum-element argument.

\paragraph{Smoothing.}
We construct a perturbed version of the abstract mechanism $\mathcal M$, in which each buyer’s bid is slightly randomized prior to computing outcomes and payments.
Formally, for $\delta\in(0,1)$, we define the perturbed mechanism $\tilde{\mathcal M}_\delta$ by independently drawing $\epsilon_i\sim \mathrm{Unif}[1-\delta,1]$ for each buyer $i$ and randomizing the buyer’s bid via a multiplicative perturbation of their valuation function: we replace buyer $i$'s original value $v_i(x)$ with a perturbed value $\tilde v_i(x)\coloneqq\epsilon_iv_i(x)$ for every outcome $x\in X$. Note that $\tilde v_i\in V$ by closure of $V$ under positive scaling. 

In the following, we show that the randomized perturbation ensures that both the expected allocation and payment functions become continuous functions in each buyer's own pacing multiplier.

\begin{restatable}[Continuous outcomes under smoothing]{proposition}{ContinuousPayments}\label{prop:cont-alloc-pymt}
    Fix $\delta\in (0,1)$, and let $\epsilon=(\epsilon_1,\dots,\epsilon_n)$ be a vector of i.i.d.\ $\mathrm{Unif}[1-\delta,1]$ random variables.
    For any fixed $\alpha_{-i}\in[0,1]^{n-1}$, the functions $\tilde x_i(\alpha v)\coloneqq \mathbb{E}_{\epsilon}\bigl[x_i(\alpha_i \epsilon_i v_i,\alpha_{-i}\epsilon_{-i} v_{-i})\bigr]$ and $\tilde p_i(\alpha v)\coloneqq \mathbb{E}_{\epsilon}\bigl[p_i(\alpha_i\epsilon_i v_i,\alpha_{-i}\epsilon_{-i} v_{-i})\bigr]$, denoting buyer $i$'s expected allocation and payment respectively, are continuous in $\alpha_i$.
\end{restatable}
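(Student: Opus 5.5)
The plan is to exploit the multiplicative structure of the perturbation. Multiplying by $\alpha_i$ turns the uniform noise on buyer $i$'s bid into a uniform distribution over a bid scale whose support moves continuously with $\alpha_i$. Integrating against a bounded function over an interval whose endpoints move continuously gives a continuous function, whatever discontinuities the mechanism has in bids.

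\textbf{Step 1: reduce to one variable.} Fix $\alpha_{-i}$ and define, for a scalar $t\ge 0$,
\[
g(t):=\mathbb{E}_{\epsilon_{-i}}\bigl[p_i(t v_i,\alpha_{-i}\epsilon_{-i}v_{-i})\bigr],
\qquad
h(t):=\mathbb{E}_{\epsilon_{-i}}\bigl[x_i(t v_i,\alpha_{-i}\epsilon_{-i}v_{-i})\bigr].
\]
Both are bounded. For $g$, individual rationality and bounded valuations give $0\le p_i\le t\,v_i(x_i)\le \bar V$ for $t\le 1$. For $h$, compactness of $X$ gives a bound on $\|x_i\|$. I will assume, as is implicit in defining the expectations, that $x$ and $p$ are measurable, so that $g$ and $h$ are measurable. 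Since the $\epsilon_j$ are independent, Fubini gives $\tilde p_i(\alpha v)=\mathbb{E}_{\epsilon_i}[g(\alpha_i\epsilon_i)]$, and similarly for $\tilde x_i$ with $h$.

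\textbf{Step 2: change variables for $\alpha_i>0$.} Substituting $t=\alpha_i\epsilon_i$, which is uniform on $[\alpha_i(1-\delta),\alpha_i]$, gives
\[
\tilde p_i(\alpha v)=\frac{1}{\alpha_i\delta}\int_{\alpha_i(1-\delta)}^{\alpha_i} g(t)\,dt .
\]
The map $(a,b)\mapsto\int_a^b g$ is Lipschitz with constant $\sup|g|$. Both endpoints are linear in $\alpha_i$, and the prefactor $1/(\alpha_i\delta)$ is continuous on $(0,1]$. Hence $\tilde p_i$ is continuous in $\alpha_i$ on $(0,1]$, and the same argument applies componentwise to $\tilde x_i$ with $h$. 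This is the heart of the proof. It uses no property of $\mathcal M$ beyond boundedness: this is exactly why smoothing absorbs discontinuities such as ties.

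\textbf{Step 3: the endpoint $\alpha_i=0$.} This is the step I expect to be the main obstacle, since the change of variables degenerates there.

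For payments the endpoint is straightforward. Individual rationality gives $0\le \tilde p_i(\alpha v)\le \alpha_i\bar V\to 0$. At $\alpha_i=0$ the bid is the zero function, so individual rationality forces $p_i=0$, and therefore $\tilde p_i\to 0=\tilde p_i|_{\alpha_i=0}$.

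For the allocation, I would first try to argue that buyer $i$'s payoff-relevant outcome converges to its zero-bid value. The route is to combine invariance to zero bids, applied to $Z=\{i\}$, with the continuity-as-bids-vanish assumption. However, that assumption is stated for payments of the remaining buyers, not for $x_i$ itself. If it does not suffice, I would state continuity of $\tilde x_i$ on $(0,1]$ and continuity of $\tilde p_i$ on all of $[0,1]$. Only the payment continuity is needed for the subsequent maximum-element argument. Alternatively, one can note that at $\alpha_i=0$ buyer $i$'s allocation is value-irrelevant, since $v_i(x_i)$ is multiplied by $\alpha_i=0$ in any bid-based quantity.
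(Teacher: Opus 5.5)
Your proposal is correct and is essentially the paper's argument. The paper fixes $\epsilon_{-i}$ and shows that the law of $\alpha_i\epsilon_i$ moves continuously in total variation, explicitly $\|F_{\alpha_i+\mu}-F_{\alpha_i}\|_{TV}=\mu/(\delta(\alpha_i+\mu))$; this makes the expectations of the bounded $x_i$ and $p_i$ continuous, and it finishes with dominated convergence over $\epsilon_{-i}$, whereas you integrate out $\epsilon_{-i}$ first by Fubini and express the same fact as an average of a bounded function over an interval with moving endpoints. Your endpoint handling also matches the paper, which proves continuity of $\tilde x_i$ only on $(0,1]$ and extends $\tilde p_i$ to $\alpha_i=0$ via individual rationality; that restriction is genuinely needed, since $\tilde x_i$ can jump at $\alpha_i=0$ (e.g., a first-price single-item auction with $\alpha_{-i}=0$ that leaves the item unallocated when all bids are zero).
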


\Cref{prop:cont-alloc-pymt} shows that the perturbations smooth out the mechanism's allocation and payment functions. 
We therefore refer to the perturbed mechanism $\tilde{\mathcal M}_\delta$ as the \emph{smoothed} mechanism, in which each buyer maximizes expected utility subject to satisfying her budget constraint in expectation over the perturbations (with $-\infty$ utility if the expected budget constraint is violated). This restores continuity while preserving the strategic structure of the \emph{original} mechanism $\mathcal M$. We therefore define an \abspe with respect to the smoothed mechanism.

\begin{definition}[$\delta$-smooth \textsf{Abs-PE}]
    A $\delta$-smooth \abspe of $\mathcal M$ is an \abspe of the smoothed mechanism $\tilde{\mathcal M}_\delta$. That is, it is a pacing vector $\alpha\in[0,1]^n$ such that $\tilde p_i(\alpha v)\le B_i$ for all $i$, and the no-unnecessary-pacing condition holds (with respect to $\tilde p$).
\end{definition}

\subsubsection{Existence of $\delta$-smooth \textsf{Abs-PE}}
We now work with the smoothed mechanism $\tilde{\mathcal M}_\delta$ and define the set of pacing multipliers that are budget-feasible with respect to expected payments, $\mathcal A_\delta := \{\alpha\in[0,1]^n:\ \tilde p_i(\alpha v)\le B_i\ \ \forall i\}.$ 
First, we remark that Assumption~\ref{ass:pymt-opp-inverse-mnt} is preserved under smoothing: fix a buyer $i$ and bid vectors $b,b'$ with $b_i=b_i'$ and $b'_{-i}\le b_{-i}$, then for every $\epsilon\ge 0$, $p_i(\epsilon\cdot b')\ge p_i(\epsilon\cdot b)$, and so $\tilde p_i(b')\ge \tilde p_i(b)$ after taking expectations.
Hence, the join-closure property from \Cref{lem:join-closure} carries over to $\mathcal A_\delta$. 
Moreover, smoothing ensures the expected payment function $\tilde p$ is Lipschitz in the log-domain over $(0,1]^n$, which implies compactness of $\mathcal A_\delta$. 
Combined with join-closure, compactness implies that $\mathcal A_\delta$ admits a coordinate-wise greatest element $\alpha_\delta^*\in\mathcal A_\delta$, which will serve as our candidate pacing equilibrium~vector.

\begin{restatable}[Log-Lipschitz payments under smoothing]{proposition}{logLip}\label{prop:log-lip}
    For every buyer $i$, the expected payment $\tilde p_i(\alpha v) \coloneqq \mathbb{E}_{\epsilon}\bigl[p_i(\alpha_i \epsilon_i v_i,\alpha_{-i} \epsilon_{-i} v_{-i})\bigr]$ is Lipschitz continuous on $(0,1]^n$ in the log domain. In particular, for any $\alpha,\alpha'\in(0,1]^n$,
    \( |\tilde p_i(\alpha'v)-\tilde p_i(\alpha v)|
          \le \frac{n\bar V}{\delta}\,\|\log \alpha' - \log \alpha\|_\infty,
    \)
    where $\log$ is applied coordinate-wise.
\end{restatable}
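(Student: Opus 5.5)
The plan is to change variables so that the multiplicative perturbation becomes additive. Write $\beta_j=\log\alpha_j$ and $\eta_j=\log\epsilon_j$. Then buyer $j$'s perturbed bid is $e^{\beta_j+\eta_j}v_j$, so it depends only on the sum $s_j=\beta_j+\eta_j$. Hence
\[
\tilde p_i(\alpha v)=\int g(s)\,\prod_j f(s_j-\beta_j)\,ds ,
\qquad g(s):=p_i\bigl(e^{s_1}v_1,\dots,e^{s_n}v_n\bigr),
\]
where $f$ is the density of $\eta_j=\log\epsilon_j$ with $\epsilon_j\sim\mathrm{Unif}[1-\delta,1]$. Explicitly, $f(t)=e^{t}/\delta$ on $[\log(1-\delta),0]$ and $f(t)=0$ elsewhere. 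The aim is to show that shifting one coordinate $\beta_j$ by $h$ changes $\tilde p_i$ by at most $(\bar V/\delta)|h|$. Summing over the $n$ coordinates by a telescoping argument then gives the stated constant $n\bar V/\delta$ times $\|\log\alpha'-\log\alpha\|_\infty$.

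\textbf{Per-coordinate step.} Change the $j$-th coordinate only, from $\alpha$ to $\alpha^{(j)}$ say, and condition on $\epsilon_{-j}$. The function $t\mapsto g(\dots,t,\dots)$ is bounded: by individual rationality and bounded valuations, $0\le p_i\le b_i(x_i)\le \bar V$ whenever $\alpha_i\epsilon_i\le 1$. This is the bounded bid region we are working in, since $s_j\le\beta_j\le 0$. The difference of the two conditional expectations is therefore at most $\bar V\cdot\tfrac12\|f(\cdot-\beta_j)-f(\cdot-\beta_j')\|_{L^1}\le \bar V\,\|f(\cdot-\beta_j)-f(\cdot-\beta_j')\|_{L^1}$. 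It remains to bound the translation modulus of $f$ by $|h|/\delta$ up to a constant.

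\textbf{Translation bound for $f$.} I would bound the $L^1$ translation modulus by $|h|$ times the total variation of $f$. The density $f$ jumps from $0$ to $(1-\delta)/\delta$ at the left endpoint, increases continuously to $1/\delta$, and drops to $0$ at $0$. Its total variation is therefore $(1-\delta)/\delta+\delta/\delta+1/\delta=2/\delta$, which gives $\|f(\cdot-h)-f\|_{L^1}\le 2|h|/\delta$. With the factor $\tfrac12$ from the total-variation distance, this gives exactly $\bar V|h|/\delta$ per coordinate.

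To be safe with the total-variation constant, I would instead use $\sup|g|\cdot\mathrm{TV}(\text{laws})$. The TV distance between the laws of $\log\epsilon_j+\beta_j$ and $\log\epsilon_j+\beta_j'$ equals $\tfrac12\|f(\cdot-h)-f\|_1\le |h|/\delta$.

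\textbf{Main obstacle.} The main obstacle is the bookkeeping. First, the bound $p_i\le\bar V$ must hold for all perturbed profiles; this needs $\alpha_j\epsilon_j\le1$, which holds. Second, the log-domain shift must correspond exactly to a translation of the perturbation density, so that the discontinuities of $p_i$ are absorbed by the density and no continuity of $p_i$ is needed. Summing the $n$ single-coordinate changes gives the $n\bar V/\delta$ constant.
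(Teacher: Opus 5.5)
Your proof is correct and follows essentially the same route as the paper: bound the change from moving one coordinate by $\bar V$ (from individual rationality and nonnegativity of payments) times the total-variation distance between the laws of that buyer's perturbed bid, then telescope over the $n$ coordinates to get the constant $n\bar V/\delta$. The only difference is in how that TV distance is bounded. The paper computes it exactly in the original domain from the overlap of the two uniform intervals, $(1-\alpha_k/\alpha_k')/\delta \le (\log\alpha_k'-\log\alpha_k)/\delta$. You bound it in the log domain via $\|f(\cdot-h)-f\|_{L^1}\le |h|\,\mathrm{TV}(f)=2|h|/\delta$, which has the minor advantage of holding for every shift without separately handling non-overlapping intervals.
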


Proposition~\ref{prop:log-lip}, together with the boundary regularity conditions on the mechanism, yields compactness of the budget-feasible set.

\begin{restatable}{lemma}{compactness}\label{lem:compact-space}
    The set $\mathcal A_{\delta}:=\{\alpha\in[0,1]^n: \tilde p_i(\alpha v)\le B_i\ \forall i\}$ is compact.
\end{restatable}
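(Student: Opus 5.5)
Since $\mathcal A_\delta\subseteq[0,1]^n$ is bounded, it suffices to prove that $\mathcal A_\delta$ is closed. Take a sequence $\alpha^t\in\mathcal A_\delta$ with $\alpha^t\to\alpha$; we must show $\tilde p_i(\alpha v)\le B_i$ for every $i$. The natural tool is \Cref{prop:log-lip}, which makes $\tilde p_i$ continuous (indeed Lipschitz in $\log\alpha$) on $(0,1]^n$. If $\alpha\in(0,1]^n$, then $\log\alpha^t\to\log\alpha$ coordinate-wise for all large $t$. Hence $\tilde p_i(\alpha^t v)\to\tilde p_i(\alpha v)$, and the inequality $\tilde p_i(\alpha^t v)\le B_i$ passes to the limit.

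The main obstacle is the boundary of $[0,1]^n$. There some coordinates of $\alpha$ are zero, log-Lipschitz continuity gives nothing, and $\log\alpha^t_j\to-\infty$. Let $Z=\{j:\alpha_j=0\}$. I will split into two cases.

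\emph{Case $i\in Z$.} By individual rationality, $p_i(b)\le b_i(x_i(b))\le \alpha_i\epsilon_i\bar V$. Taking expectations gives $\tilde p_i(\alpha v)\le \alpha_i\bar V$, and this equals $0\le B_i$ at the limit point. So buyer $i$'s constraint holds trivially there.

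\emph{Case $i\notin Z$.} Here I invoke the boundary regularity assumption from \Cref{sec:model}: payments vary continuously as others' bids vanish, i.e.\ $p_i(b^t_{-Z},b^t_Z)\to p_i(b_{-Z},0^Z)$. Apply it pointwise for each fixed realization $\epsilon$, with $b^t=\alpha^t\epsilon v\to\alpha\epsilon v$. I need continuity of $p_i$ jointly in the non-$Z$ coordinates as well. The stated assumption covers sequences converging to $(b_{-Z},0^Z)$ in the bounded bid region, which I read as giving this joint convergence. If it does not, I would instead combine it with the zero-bid invariance and with the log-Lipschitz continuity in the non-$Z$ coordinates after smoothing.

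This yields pointwise convergence in $\epsilon$, possibly outside a null set of $\epsilon$ where ties or discontinuities occur. On the $Z$ coordinates, which are zero at every $\epsilon$, the perturbation does not help, and this is exactly why the boundary assumption is needed. Payments are bounded by $\bar V$ via individual rationality, so dominated convergence gives $\tilde p_i(\alpha^t v)\to\tilde p_i(\alpha v)$, and hence $\tilde p_i(\alpha v)\le B_i$.

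One point needs care. The pointwise-in-$\epsilon$ convergence for the non-$Z$ coordinates may fail on measure-zero tie sets. The cleanest fix is to avoid handling this by hand. First use \Cref{prop:log-lip} to move $\alpha^t_{-Z}$ to $\alpha_{-Z}$ at cost $\frac{n\bar V}{\delta}\|\log\alpha^t_{-Z}-\log\alpha_{-Z}\|_\infty\to0$, keeping $\alpha^t_Z$ fixed; the log-Lipschitz bound holds on all of $(0,1]^n$, so $\alpha^t_Z>0$ is allowed. If some $\alpha^t_j=0$, I reduce via zero-bid invariance. It then remains to show $\tilde p_i(\alpha_{-Z}v_{-Z},\alpha^t_Zv_Z)\to\tilde p_i(\alpha_{-Z}v_{-Z},0^Z)$. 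This follows from the vanishing-bids assumption applied pointwise in $\epsilon$ with the non-$Z$ bids fixed, together with bounded convergence. With closedness established, compactness follows by Heine–Borel.
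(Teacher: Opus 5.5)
Your proposal is correct and follows essentially the paper's proof. It bounds $\mathcal A_\delta$ inside $[0,1]^n$ and proves closedness, gets continuity on $(0,1]^n$ from \Cref{prop:log-lip}, and at boundary points splits into $i\in Z$ (handled by individual rationality) and $i\notin Z$ (handled by the vanishing-bids continuity assumption applied pointwise in $\epsilon$, plus dominated convergence).

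Your fallback of first using the log-Lipschitz bound to move the non-$Z$ coordinates to their limits is a mild refinement. It needs the vanishing-bids assumption only with the surviving bids held fixed, whereas the paper applies that assumption directly to the jointly moving sequence.
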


Since $\mathcal A_\delta$ is nonempty, compact, and closed under the component-wise maximum, it admits a coordinate-wise greatest element.

\begin{restatable}[Pareto-dominant budget-feasible multiplier]{proposition}{ParetoDominantBFM}\label{prop:pareto-dominant-BFM}
    The set $\mathcal A_\delta$ of budget-feasible multipliers in the smoothed mechanism admits a Pareto-dominant element $\alpha^*$.
\end{restatable}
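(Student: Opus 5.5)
The plan is to build the Pareto-dominant element directly as the coordinate-wise supremum of $\mathcal A_\delta$, and then use compactness and join-closure to show this supremum actually lies in $\mathcal A_\delta$.

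\textbf{Nonemptiness.} First I would check that $\mathcal A_\delta$ is nonempty. The natural candidate is $\alpha=0$. Under zero bids, invariance to zero bids together with the continuity-as-bids-vanish condition should force $\tilde p_i(0)=0\le B_i$. Alternatively, individual rationality gives $p_i(b)\le b_i(x_i(b))=0$ when $b_i=0$, which is cleaner and is the route I would take.

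\textbf{Construction.} Define $\bar\alpha_i:=\sup\{\alpha_i:\alpha\in\mathcal A_\delta\}$ for each $i$. Fix $i$. Because $\mathcal A_\delta$ is compact by \Cref{lem:compact-space}, the continuous projection $\alpha\mapsto\alpha_i$ attains its maximum on $\mathcal A_\delta$. So for each $i$ there is some $\alpha^{(i)}\in\mathcal A_\delta$ with $\alpha^{(i)}_i=\bar\alpha_i$.

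\textbf{Joining.} Set $\alpha^*:=\alpha^{(1)}\vee\cdots\vee\alpha^{(n)}$. Join-closure (\Cref{lem:join-closure}) carries over to $\mathcal A_\delta$, as remarked after the definition of $\mathcal A_\delta$, since Assumption~\ref{ass:pymt-opp-inverse-mnt} is preserved under smoothing. Applying join-closure $n-1$ times by induction gives $\alpha^*\in\mathcal A_\delta$.

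\textbf{Dominance.} It remains to show $\alpha^*=\bar\alpha$. On one hand, $\alpha^*_i\ge\alpha^{(i)}_i=\bar\alpha_i$. On the other hand, $\alpha^*_i\le\bar\alpha_i$ because $\alpha^*\in\mathcal A_\delta$ and $\bar\alpha_i$ is the supremum of the $i$-th coordinates. Hence every $\alpha\in\mathcal A_\delta$ satisfies $\alpha\le\bar\alpha=\alpha^*$ coordinate-wise, so $\alpha^*$ is the Pareto-dominant (greatest) element, and it is unique.

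\textbf{Main obstacle.} The only delicate point is attainment of the supremum, which is exactly why compactness is needed. Join-closure alone only yields directedness, and a directed set need not contain its supremum. With \Cref{lem:compact-space} assumed, this step is immediate. One could alternatively take a maximizing sequence of joins, which is monotone and hence convergent, and then use closedness, but attainment of the maximum per coordinate avoids any limit argument.
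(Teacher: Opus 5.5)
Your proof is correct and follows essentially the paper's route: define the coordinate-wise supremum, realize each coordinate by some element of $\mathcal A_\delta$, and join these witnesses using join-closure together with the compactness from \Cref{lem:compact-space}. The only difference is that you attain each coordinate's supremum directly, since the projection $\alpha\mapsto\alpha_i$ achieves its maximum on the compact set, whereas the paper joins $\rho$-approximate witnesses and then passes to a convergent subsequence as $\rho\to 0$; your version is slightly cleaner, and your check that $0\in\mathcal A_\delta$ via individual rationality supplies the nonemptiness that the paper only asserts.
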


It remains only to rule out unnecessary pacing: if a paced buyer had slack budget, her multiplier could be increased while preserving feasibility, contradicting maximality.

\begin{restatable}[Guaranteed existence of $\delta$-smooth \textsf{Abs-PE}]{proposition}{ExistenceDelta}\label{prop:existence-delta}
The Pareto-dominant multiplier vector~$\alpha^*$ has no unnecessarily paced buyers, and hence forms a $\delta$-smooth \abspe of $\mathcal M$.
\end{restatable}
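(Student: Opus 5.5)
We argue by contradiction, exploiting the maximality of $\alpha^*$ from \Cref{prop:pareto-dominant-BFM} together with the continuity of expected payments from \Cref{prop:cont-alloc-pymt}. Budget feasibility of $\alpha^*$ with respect to $\tilde p$ holds by construction, since $\alpha^*\in\mathcal A_\delta$. It therefore suffices to verify the no-unnecessary-pacing condition. Suppose some buyer $i$ has $\tilde p_i(\alpha^* v)<B_i$ but $\alpha_i^*<1$. We will produce a budget-feasible vector that strictly exceeds $\alpha^*$ in coordinate $i$, which contradicts the fact that $\alpha^*$ is the coordinate-wise greatest element of $\mathcal A_\delta$.

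\textbf{Step 1: raise buyer $i$ while keeping her feasible.} Define $\alpha^\eta:=(\alpha_i^*+\eta,\alpha^*_{-i})$ for small $\eta>0$. By \Cref{prop:cont-alloc-pymt}, $\tilde p_i$ is continuous in $\alpha_i$ with $\alpha_{-i}^*$ fixed. The strict slack $\tilde p_i(\alpha^*v)<B_i$ therefore persists for all sufficiently small $\eta$ with $\alpha_i^*+\eta\le 1$. One subtlety arises if $\alpha_i^*=0$, where the log-Lipschitz bound of \Cref{prop:log-lip} does not apply. Here we rely on continuity in own multiplier, which \Cref{prop:cont-alloc-pymt} asserts on all of $[0,1]$. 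If needed, this can be backed up by the boundary-continuity assumption as others' bids vanish, combined with opponent inverse monotonicity.

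\textbf{Step 2: the other buyers remain feasible.} Take any $j\neq i$. In moving from $\alpha^*$ to $\alpha^\eta$, buyer $j$'s own multiplier is unchanged and her opponents' multipliers weakly increase. By Assumption~\ref{ass:pymt-opp-inverse-mnt}, which the paper has shown is preserved under smoothing, $\tilde p_j(\alpha^\eta v)\le\tilde p_j(\alpha^* v)\le B_j$. Hence $\alpha^\eta\in\mathcal A_\delta$.

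\textbf{Step 3: conclude.} Since $\alpha^\eta_i>\alpha^*_i$, the vector $\alpha^*$ does not dominate $\alpha^\eta$ coordinate-wise, contradicting \Cref{prop:pareto-dominant-BFM}. So every buyer with slack has $\alpha_i^*=1$, and $\alpha^*$ is a $\delta$-smooth \abspe.

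The only genuinely delicate point is Step 1, namely that continuity in own multiplier holds with strict slack, including the edge case $\alpha_i^*=0$. Everything else is immediate from join-closure and monotonicity.
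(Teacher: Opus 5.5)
Your proof is correct and follows the paper's argument essentially step for step. You raise the slack buyer's multiplier slightly using continuity of $\tilde p_i$ in $\alpha_i$ (\Cref{prop:cont-alloc-pymt}), keep every other buyer feasible via opponent inverse monotonicity of $\tilde p$, and contradict the coordinate-wise maximality of $\alpha^*$. For the edge case $\alpha_i^*=0$, the individual-rationality bound $\tilde p_i(\alpha_i,\alpha_{-i})\le \alpha_i\bar V$ from the paper's proof of \Cref{prop:cont-alloc-pymt} already gives the needed slack. Your backup appeal to vanishing-bid continuity is therefore unnecessary, and it concerns opponents' bids rather than buyer $i$'s own bid in any case.
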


\Cref{prop:existence-delta} generalizes the defining structural property \textsf{FPPE}: the coordinate-wise greatest budget-feasible vector is an equilibrium vector. The same characterization extends to any smoothed abstract mechanism, under the opponent inverse monotonicity condition on payments.

\subsubsection{Existence of \textsf{Abs-PE}}\label{subsubsec:tie-breaking}
Having established the existence of an \abspe in the smoothed mechanism $\tilde{\mathcal M}_{\delta}$, we transfer this result back to the original (possibly discontinuous) mechanism $\mathcal M$ by adopting the limit-based tie-breaking/selection approach of~\cite{babaioff2020non}: for each bid profile $b$, and in particular at points where the outcome rule $(x,p)$ is discontinuous, the mechanism can select any outcome in the convex hull of all limit points of $(x(b^n),p(b^n))$ as $b^n \to b$. The key step is to show that any limit point of smoothed equilibrium outcomes lies in the tie-breaking correspondence $\hat X(b)$ defined below.

\begin{definition}[Tie‐breaking correspondence]\label{def:tie-break-corresp}
Consider the mechanism $\mathcal M=(x,p)$.  For each bid profile $b=\alpha\!\cdot\!v$, define
\(
  \hat x(b)=\Bigl\{(x^*,p^*) :\exists\,b^n\to b\text{ with }x(b^n)\to x^*,\,p(b^n)\to p^*\Bigr\},
\)
and let $\hat X(b)=\mathrm{conv}\,\hat x(b).$
We define a tie‐breaking rule for $\mathcal M$ as any pair $(x',p')$ satisfying $(x'(b),p'(b))\in \hat X(b) \, \forall b$.  We say $\mathcal M'~=~(x',p')$ is equivalent to $\mathcal M=(x,p)$ up to tie-breaking.
\end{definition}

\begin{remark}[Extended tie-breaking]
    Standard tie-breaking selects among outcomes that are tied at the same bid profile. Here we use an extended, limit-based notion: $\hat x(b)$ collects all outcome-payment pairs that can arise as limits of $(x(b^n),p(b^n))$ along sequences $b^n\to b$. 
    The convexification $\hat X(b)=\mathrm{conv}\,\hat x(b)$ can be interpreted as randomized tie-breaking among such limit outcomes (equivalently, selecting expected outcomes).
    In \fppe and \textsf{Pos-FPPE}, discontinuities arise only from exact ties, and this construction recovers the usual (randomized) tie-breaking; for abstract mechanisms, discontinuities can be more general, so we work with the correspondence $\hat X(\cdot)$ to obtain a well-defined selection at every bid profile.
\end{remark}

At bid profiles where $(x,p)$ is discontinuous, a tie-breaking rule specifies the realized outcome by selecting a point in $\hat X(b)$. Accordingly, we say that statements hold \emph{under appropriate tie-breaking} if they hold for some such selection rule.
With this terminology in place, we use the \abspe of the smoothed mechanism $\tilde{\mathcal M}_\delta$ from \Cref{prop:existence-delta} to construct an \abspe for the original mechanism $\mathcal M$ under an appropriate tie-breaking rule by taking limits as the perturbation parameter $\delta\to 0$.

\begin{restatable}[Guaranteed existence of \textsf{Abs-PE}]{theorem}{existence}\label{thm:existence}
    Consider any mechanism $\mathcal M$ whose payments satisfy Assumption~\ref{ass:pymt-opp-inverse-mnt}. Given valuations and budget profiles, there exists a tie-breaking rule for $\mathcal M$ for which an \abspe exists.
\end{restatable}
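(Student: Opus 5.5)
The plan is to take a limit of the smoothed equilibria from \Cref{prop:existence-delta} as $\delta\to0$ and show that the limit, together with suitable limiting outcomes, is an \abspe of $\mathcal M$ under a tie-breaking rule selecting from $\hat X(\cdot)$.

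\textbf{Step 1: extract limits.} Fix a sequence $\delta_k\to 0$ and let $\alpha^k$ be the $\delta_k$-smooth \abspe given by \Cref{prop:existence-delta}. By compactness of $[0,1]^n$, pass to a subsequence with $\alpha^k\to\bar\alpha$. Set $b^*=\bar\alpha\cdot v$. The expected payments $\tilde p^{\delta_k}(\alpha^k v)\in[0,n\bar V]^n$ and expected allocations lie in bounded sets, since $X$ is compact and $\mathrm{conv}X$ is compact. Pass to a further subsequence so that $\tilde x^{\delta_k}(\alpha^k v)\to\bar x$ and $\tilde p^{\delta_k}(\alpha^k v)\to\bar p$.

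\textbf{Step 2: the limit lies in $\hat X(b^*)$.} This is the main obstacle. Each $(\tilde x^{\delta_k},\tilde p^{\delta_k})(\alpha^kv)$ is an average of $(x,p)$ evaluated at bid profiles $\alpha^k\epsilon\cdot v$, with $\epsilon\in[1-\delta_k,1]^n$, all lying within $\|\cdot\|$-distance $O(\delta_k+\|\alpha^k-\bar\alpha\|)\bar V$ of $b^*$. I would show that for every $\eta>0$, eventually all such pairs $(x(b),p(b))$ lie in the $\eta$-neighborhood of the compact set $\hat x(b^*)$. If not, a sequence $b^j\to b^*$ would have outcomes at distance at least $\eta$ from $\hat x(b^*)$, yet by boundedness it has a convergent subsequence whose limit lies in $\hat x(b^*)$ by definition, which is a contradiction. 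Averages of points in the $\eta$-neighborhood of $\hat x(b^*)$ lie in the $\eta$-neighborhood of $\mathrm{conv}\,\hat x(b^*)$. Letting $\eta\to0$ and using closedness of the convex hull of a compact set in finite dimensions gives $(\bar x,\bar p)\in\hat X(b^*)$. Two technical points need care here. First, $\hat x(b^*)$ must be closed, which follows by a diagonal argument. Second, buyers with $\bar\alpha_i=0$ must be handled. For those buyers the bid region still converges, and their payments tend to $0$ by individual rationality. The zero-bid continuity assumption ensures that the other buyers' payments are consistent with $p(b^*)$ restricted to the positive-bid buyers.

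\textbf{Step 3: define the tie-breaking rule and verify the equilibrium conditions.} Let $(x',p')$ agree with $(x,p)$ everywhere except at $b^*$, where it selects $(\bar x,\bar p)$. This is a valid tie-breaking rule, because $(x(b),p(b))\in\hat x(b)$ via constant sequences. Budget feasibility holds because $\tilde p_i^{\delta_k}(\alpha^kv)\le B_i$ for every $k$, so $\bar p_i\le B_i$. For no-unnecessary-pacing, suppose $\bar p_i<B_i$. Then $\tilde p_i^{\delta_k}(\alpha^kv)<B_i$ for all large $k$, and the smoothed no-unnecessary-pacing condition forces $\alpha_i^k=1$ for those $k$. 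Hence $\bar\alpha_i=1$. Therefore $\bar\alpha$ is an \abspe of $\mathcal M'=(x',p')$, which is equivalent to $\mathcal M$ up to tie-breaking. This argument uses Assumption~\ref{ass:pymt-opp-inverse-mnt} only through \Cref{prop:existence-delta}.
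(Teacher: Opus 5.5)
Your proposal is correct and follows essentially the same route as the paper: take a subsequence of $\delta_k$-smooth equilibria along which the multipliers and expected outcomes converge, pass budget feasibility and no-unnecessary-pacing to the limit, and let the tie-breaking rule select the limit outcome at $b^*=\bar\alpha\cdot v$ after showing that it lies in $\hat X(b^*)$. Your Step~2 (the $\eta$-neighborhood argument together with closedness of $\hat x(b^*)$) makes rigorous the membership claim that the paper only sketches, whereas your remark about zero-bid continuity is unnecessary: $\hat x(b^*)$ is defined through arbitrary sequences $b^n\to b^*$, so coordinates with $\bar\alpha_i=0$ need no special treatment.
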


Notably, \Cref{thm:existence} establishes existence of an \abspe for some tie-breaking rule. This qualification is substantive: different resolutions of discontinuities can change the implemented outcome and payments thereby changing utilities and the set of budget-feasible vectors, and an \abspe need not exist for every tie-breaking rule. The theorem guarantees that at least one choice of tie-breaking admits an \textsf{Abs-PE}\endnote{See~\cite{babaioff2020non} for a similar qualification regarding equilibrium existence and a brief discussion of the role of tie-breaking.}.

This result shows that pacing equilibria retain a lot of structure beyond the bid-maximizing pay-your-bid mechanisms studied in \Cref{sec:bid-max-pyb}. 
In the smoothed mechanism, the greatest budget-feasible vector is the least restrictive pacing vector consistent with all buyers’ budgets, giving the platform a natural benchmark for setting pacing multipliers. 
It can also be reached by a simple monotone procedure that repeatedly increases the multipliers of unnecessarily paced buyers. 
More broadly, the join-semilattice structure of the budget-feasible set supports our existence result, showing that a simple monotonicity condition on payments is enough to extend pacing equilibrium to a significantly broader class of mechanisms.

\subsection{Properties of \abspe}
\label{subsec:properties}
In this section, we establish structural and revenue properties of \abspe for abstract mechanisms under additional monotonicity conditions on payments. 
Since $\mathcal M=(x,p)$ may be discontinuous in bids, equilibrium statements for the unperturbed model are understood relative to a tie-breaking rule under which an \abspe exists (such a tie-breaking rule is guaranteed to exist by~\Cref{thm:existence}). Accordingly, whenever we invoke an assumption on payments, it is meant to hold for the payment rule induced by the relevant tie-breaking. For readability, we state the assumptions using a generic payment rule $p$. We continue to assume Assumption~\ref{ass:pymt-opp-inverse-mnt} throughout this section.

\subsubsection{Uniqueness and revenue-maximization}
Fix a tie-breaking rule $(x',p')$ such that the induced mechanism $\mathcal M'=(x',p')$ admits an \textsf{Abs-PE}. We give conditions on the induced payment rule $p'$ under which the equilibrium pacing vector is unique and revenue-maximizing among budget-feasible pacing vectors.
The key requirement is strict responsiveness of group payments to coordinated increases in pacing multipliers whenever the group pays a positive amount.

\begin{assumption}[Collective strict monotonicity]
\label{ass:collec-pymt-mnt}
    For any nonempty subset of buyers $G\subseteq N$, and any two pacing vectors $\alpha,\alpha'\in[0,1]^n$ such that $\alpha_{-G}=\alpha'_{-G}$ and $\alpha_i>\alpha_i'$ for all $i\in G$, we have $\sum_{i\in G}p_i(\alpha v)~>~\sum_{i\in G}p_i(\alpha' v)$ whenever $\sum_{i\in G}p_i(\alpha' v)>0.$
\end{assumption}

Assumption~\ref{ass:collec-pymt-mnt} rules out the possibility of two distinct equilibrium pacing vectors. Intuitively, since coordinate-wise maxima preserve budget feasibility (by join-closure), collective strict monotonicity implies that an equilibrium multiplier cannot be strictly increased on any nonempty set of buyers that is paying a positive amount. Hence multiple equilibria cannot coexist.

\begin{restatable}[Uniqueness under tie-breaking]{proposition}{uniquness}\label{prop:uniqueness}
If $p'$ satisfies Assumption~\ref{ass:collec-pymt-mnt}, then $\mathcal M'$ admits a unique \abspe~$\alpha^*$.
\end{restatable}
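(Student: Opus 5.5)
Assume two distinct \abspe $\alpha^1\neq\alpha^2$ of $\mathcal M'$ (existence is given by the choice of tie-breaking). Form their join $\alpha^*=\max(\alpha^1,\alpha^2)$. By \Cref{lem:join-closure}, applied to $p'$ (which satisfies Assumption~\ref{ass:pymt-opp-inverse-mnt} by the standing convention), $\alpha^*$ is budget-feasible. We may assume without loss of generality that $\alpha^*\neq\alpha^1$; otherwise swap the two vectors. It therefore suffices to show that no \abspe $\alpha$ can have a budget-feasible $\alpha^*\ge\alpha$ with $\alpha^*\neq\alpha$. Applying this with $\alpha=\alpha^1$ gives the contradiction.

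Fix such $\alpha,\alpha^*$ and let $G=\{i:\alpha^*_i>\alpha_i\}$, which is nonempty. Every $i\in G$ has $\alpha_i<1$, so by no-unnecessary-pacing $p'_i(\alpha v)=B_i>0$. Hence $\sum_{i\in G}p'_i(\alpha v)=\sum_{i\in G}B_i>0$.

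The natural next step is to apply Assumption~\ref{ass:collec-pymt-mnt} to $G$ and conclude that $G$'s total payment strictly exceeds $\sum_{G}B_i$, which would violate budget feasibility of $\alpha^*$. The obstacle is that the assumption requires $\alpha_{-G}=\alpha'_{-G}$. We have $\alpha^*_{-G}=\alpha_{-G}$ by the definition of $G$: coordinates outside $G$ satisfy $\alpha^*_j\ge\alpha_j$ and are not strictly larger, so they are equal. So the hypothesis holds exactly, and no intermediate vector is needed.

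The assumption then gives $\sum_{i\in G}p'_i(\alpha^*v)>\sum_{i\in G}p'_i(\alpha v)=\sum_{i\in G}B_i$. Therefore some $i\in G$ has $p'_i(\alpha^* v)>B_i$, contradicting budget feasibility of $\alpha^*$. Hence $\alpha^1=\alpha^*=\alpha^2$.

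The only delicate point I expect is the caveat that $p'$ is a selection from the tie-breaking correspondence. Join-closure must be invoked for $p'$ itself, which is why the section's convention requires both Assumptions~\ref{ass:pymt-opp-inverse-mnt} and~\ref{ass:collec-pymt-mnt} to hold for the selected payment rule. I will state this explicitly.
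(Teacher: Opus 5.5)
Your proposal is correct and follows essentially the same route as the paper. Both proofs apply join-closure to $p'$ and show that an \abspe cannot be strictly dominated by a budget-feasible vector, using the set $G=\{i:\alpha^*_i>\alpha_i\}$, no-unnecessary-pacing, and Assumption~\ref{ass:collec-pymt-mnt}. The only difference is framing: you argue by contradiction from two distinct equilibria, while the paper directly concludes that the join equals each equilibrium.
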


\Cref{prop:uniqueness} shows that the predictability of pacing is not limited to the optimization-based mechanisms of Section~\ref{sec:bid-max-pyb}. Under the payment monotonicity Assumptions \ref{ass:pymt-opp-inverse-mnt} and \ref{ass:collec-pymt-mnt}, the platform obtains a unique pacing policy despite the more general mechanism class.

Beyond uniqueness, a natural design question is whether the resulting pacing policy is also the most favorable from the platform's perspective. In particular, one may ask whether it maximizes revenue among budget-feasible pacing vectors. The following lemma shows that, under the additional assumption that total payments are monotone in the bid vector, the unique pacing equilibrium indeed achieves the maximum revenue among all pacing vectors satisfying budget feasibility. 

\begin{assumption}[Monotone revenue]\label{ass:monotone-rev}
    Let $\mathrm{Rev}_p(b):=\sum_{i\in N} p_i(b)$ denote the revenue induced by~$p$. We assume $\mathrm{Rev}_p$ is monotone in bids: if $\bar b\ge b$ coordinate-wise, then $\mathrm{Rev}_p(\bar b)\ge \mathrm{Rev}_p(b)$.
\end{assumption}

Revenue monotonicity yields revenue optimality from the maximum-element characterization: let $\alpha^*$ denote the unique \abspe pacing vector from~\Cref{prop:uniqueness}. If some other budget-feasible pacing vector exceeded $\alpha^*$ in any coordinate, then by join-closure their coordinate-wise join would remain budget-feasible and would strictly dominate $\alpha^*$, contradicting maximality. Hence $\alpha^*$ dominates all feasible vectors, and by revenue monotonicity it maximizes revenue.

\begin{restatable}[Revenue maximization under tie-breaking]{lemma}{revMax}\label{lem:rev-max-tiebreak}
    Suppose $p'$ satisfies Assumptions~\ref{ass:collec-pymt-mnt} and~\ref{ass:monotone-rev}. Then the (unique) \abspe of $\mathcal M'$ maximizes revenue among all budget-feasible pacing vectors.
\end{restatable}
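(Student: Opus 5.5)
Let $\alpha^*$ denote the unique \abspe of $\mathcal M'$ from \Cref{prop:uniqueness}. The plan is in two steps. First, show that $\alpha^*$ coordinate-wise dominates every budget-feasible pacing vector of $\mathcal M'$. Second, invoke Assumption~\ref{ass:monotone-rev} on the induced bid profiles $\alpha^*\!\cdot\! v \ge \alpha\!\cdot\! v$ to get $\mathrm{Rev}_{p'}(\alpha^*\!\cdot\! v)\ge \mathrm{Rev}_{p'}(\alpha\!\cdot\! v)$. Since scaling nonnegative valuations by larger multipliers gives pointwise larger bid functions, the second step is immediate once domination is established.

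For domination, take any budget-feasible $\alpha$ and suppose, for contradiction, that $G:=\{i:\alpha_i>\alpha_i^*\}$ is nonempty. Set $\hat\alpha:=\max(\alpha,\alpha^*)$. It is budget-feasible by \Cref{lem:join-closure}, since Assumption~\ref{ass:pymt-opp-inverse-mnt} is maintained for $p'$. Each $i\in G$ has $\alpha_i^*<\alpha_i\le 1$, so by no-unnecessary-pacing $p'_i(\alpha^*\!\cdot\! v)=B_i>0$. Hence $\sum_{i\in G}p'_i(\alpha^*\!\cdot\! v)=\sum_{i\in G}B_i>0$. Now $\hat\alpha$ and $\alpha^*$ agree off $G$ and $\hat\alpha_i>\alpha_i^*$ on $G$, so Assumption~\ref{ass:collec-pymt-mnt} gives
\[
\sum_{i\in G}p'_i(\hat\alpha\!\cdot\! v)>\sum_{i\in G}B_i .
\]
This contradicts budget feasibility of $\hat\alpha$. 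Therefore $G=\emptyset$, i.e., $\alpha\le\alpha^*$.

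The main obstacle is making sure the argument does not silently rely on the smoothed-mechanism maximality of \Cref{prop:pareto-dominant-BFM}, which is stated for $\tilde{\mathcal M}_\delta$ rather than for $\mathcal M'$. The argument above avoids this. It uses only the equilibrium property of $\alpha^*$ (binding budgets for paced buyers) together with join-closure and collective strict monotonicity, all of which are stated for $p'$. So no compactness or limit argument is needed.

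One point to double-check is the positivity hypothesis in Assumption~\ref{ass:collec-pymt-mnt}. It is guaranteed because $B_i>0$ and the buyers in $G$ are exactly the strictly paced ones at $\alpha^*$. With domination in hand, revenue monotonicity finishes the proof.
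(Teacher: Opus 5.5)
Your proof is correct and follows the paper's argument. The paper also forms $\beta=\bar\alpha\vee\alpha$, uses join-closure to make $\beta$ budget-feasible, and invokes the maximality of the equilibrium vector from the proof of \Cref{prop:uniqueness} to get $\bar\alpha\ge\alpha$ before applying revenue monotonicity; that maximality is exactly your no-unnecessary-pacing plus collective-strict-monotonicity contradiction on $G$, which you simply carry out inline rather than cite.
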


\subsubsection{Shill-proofness} 
A natural question is whether the platform can increase revenue by introducing fake bids into the mechanism. We show that this form of manipulation provides no benefit: the revenue-maximizing pacing equilibrium is shill-proof. More precisely, for every set of fake buyers $F$ with valuations $v^F$ and budgets $B^F$, the seller's revenue at the revenue-maximizing \abspe of the shilled instance $(N\cup~F,(v^N,v^F),(B^N,B^F))$ does not exceed the revenue at the revenue-maximizing \abspe of the base instance $(N,v^N,B^N)$.

\begin{restatable}[Shill-proofness up to tie-breaking]{proposition}{shillProof}\label{prop:shill-proof}
    There exists a tie-breaking rule $(x',p')$ under which an \abspe exists and for which the revenue-maximizing \abspe is shill-proof.
\end{restatable}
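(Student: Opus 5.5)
The plan is to mirror the \fppe shill-proofness argument of \citet{conitzer2022pacing}, using the maximum-element characterization and join-closure in the smoothed mechanism, and then pass to the limit $\delta\to 0$ as in \Cref{thm:existence}.

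\textbf{Step 1: the smoothed mechanism.} Fix $\delta$. Let $\alpha^{F}_\delta$ be the greatest budget-feasible element (\Cref{prop:pareto-dominant-BFM}) of the shilled instance $(N\cup F,(v^N,v^F),(B^N,B^F))$, and let $\alpha_\delta$ be that of the base instance. I will compare the revenue collected from the real buyers $N$ with the revenue collected from all buyers. The first claim is that the restriction $\beta:=(\alpha^F_\delta)_N$ is budget-feasible in the base instance. To see this, take the shilled profile and lower the shills' multipliers to zero. By Assumption~\ref{ass:pymt-opp-inverse-mnt}, which is preserved under smoothing, each real buyer's expected payment weakly increases. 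So this comparison points the wrong way for feasibility, and I must handle it carefully.

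The correct direction comes from revenue accounting, not from feasibility. Every buyer $i$ whose multiplier satisfies $\alpha_i<1$ at an equilibrium spends exactly $B_i$ by no-unnecessary-pacing. Meanwhile, shill payments are not revenue in the sense intended here, because the platform pays itself. The relevant quantity is therefore $\sum_{i\in N}\tilde p_i$. The bound I aim for is $\sum_{i\in N}\tilde p_i(\alpha^F_\delta)\le \sum_{i\in N}\tilde p_i(\alpha_\delta)$, and I will argue buyer by buyer.

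\textbf{Step 2: buyer-by-buyer comparison.} Let $\gamma:=(\beta,0^F)$. By invariance to zero bids, $\gamma$ restricted to $N$ corresponds to base-instance outcomes. I show $\beta\le\alpha_\delta$ through the following chain.
\begin{enumerate}
\item Let $S$ be the set of real buyers with $\tilde p_i(\beta)>B_i$ in the base instance.
\item Rather than working with $S$ directly, consider the join $\alpha_\delta\vee\beta$ and argue it lies in the base set. For $i$ with $\beta_i\le(\alpha_\delta)_i$, feasibility follows from join-closure as in \Cref{lem:join-closure}.
\item For $i$ with $\beta_i>(\alpha_\delta)_i$, note that $(\alpha_\delta)_i<1$, so buyer $i$ is budget-binding in the base instance.
\end{enumerate}
This route is circular, so I will instead use the simpler monotone argument. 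For each $i\in N$, either $(\alpha^F_\delta)_i=1$, or buyer $i$ pays exactly $B_i$ in the shilled equilibrium. In the second case the payment is at most $B_i$, and under the base equilibrium buyer $i$ either pays $B_i$ or is unpaced.

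Hence the main obstacle is the unpaced real buyers, $(\alpha^F_\delta)_i=1$ with slack. For these I need $\tilde p_i(\alpha^F_\delta)\le\tilde p_i(\alpha_\delta)$. Because both vectors have $\alpha_i=1$, it suffices to show $(\alpha_\delta)_{N}\le\beta$ coordinatewise, together with adding the shills. Assumption~\ref{ass:pymt-opp-inverse-mnt} then says that raising opponents' bids, including from $0$ for the shills, weakly lowers $i$'s payment, and this gives the bound.

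\textbf{Step 3: establishing $(\alpha_\delta)_N\le\beta$.} Consider $(\alpha_\delta,\alpha^F_{\delta,F})\vee\alpha^F_\delta$ in the shilled instance. For a real buyer $i$ taking the coordinate from $\alpha_\delta$, the opponents' multipliers weakly exceed those in $(\alpha_\delta,0^F)$. So her payment is at most her base payment, which is at most $B_i$. Shills and the other coordinates are handled by join-closure. This vector is therefore budget-feasible in the shilled instance, and maximality gives $(\alpha_\delta)_N\le(\alpha^F_\delta)_N$.

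Consequently, every budget-binding base buyer stays at most $B_i$, and every buyer that is unpaced in the shilled instance was unpaced in the base instance and now faces weakly stronger opponents, so her payment does not increase. Summing these bounds yields the revenue inequality for $\tilde{\mathcal M}_\delta$.

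\textbf{Step 4: passing to the limit.} Let $\delta\to 0$ along a subsequence on which the multipliers, outcomes and payments converge, for both instances simultaneously. The construction from \Cref{thm:existence} gives a tie-breaking rule $(x',p')$ with limit points in $\hat X$ under which both limits are \abspe, and the weak inequality survives the limit.

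The delicate point is that the revenue-maximizing \abspe under this tie-breaking must be these limits. For this I would use the fact that the limit of greatest feasible vectors dominates any \abspe under the selected tie-breaking, or else simply define the tie-breaking rule to make it so. I expect this limit and selection step to be the main technical obstacle.
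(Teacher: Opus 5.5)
Your smoothed-level comparison of the two maximal elements is essentially sound, though you reach it by a detour. The direct route: $(\alpha_\delta,0^F)$ is itself budget-feasible in the shilled smoothed instance, by zero-bid invariance for the real buyers and individual rationality for the shills. Maximality of $\alpha^F_\delta$ then gives $(\alpha_\delta)_N\le(\alpha^F_\delta)_N$, and you split on the \emph{base} instance. Your sentence ``every buyer that is unpaced in the shilled instance was unpaced in the base instance'' has the implication backwards; Step~3 gives the converse, which is the direction you actually use.

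The genuine gap is Step~4. Shill-proofness concerns the revenue-maximizing \abspe of the shilled instance under the tie-breaking rule, but your inequality controls only one shilled equilibrium: the limit of the smoothed greatest elements. Your first proposed fix, that this limit dominates every \abspe under the selected tie-breaking, is false in general. $\hat X(b)$ contains limits along arbitrary sequences $b^n\to b$, including ones from above, while smoothing only scales bids down. So a selection at another profile can make a vector feasible, even an equilibrium, without it lying below your limit.

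More fundamentally, without Assumption~\ref{ass:collec-pymt-mnt} the shilled instance can have several equilibria even where the mechanism is continuous, so no tie-breaking choice removes them. A non-maximal one $\alpha'$ can collect more from real buyers: a buyer paced at $\alpha'$ pays $B_i$ there but may pay less at the maximal vector. Your per-buyer argument cannot reach such an $\alpha'$, because for a base-unpaced buyer $\alpha'_{-i}$ may lie below $(\alpha_\delta)_{-i}$, and then Assumption~\ref{ass:pymt-opp-inverse-mnt} points the wrong way. Your second fix (``define the tie-breaking rule to make it so'') would require controlling selections at every bid profile of every shilled market. Even then it could not remove equilibria at continuity points.

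The missing idea is a bound valid for an \emph{arbitrary} shilled equilibrium, so one never needs to know which equilibrium is revenue-maximizing. The paper takes any \abspe $\bar\alpha$ of the shilled instance, under any tie-breaking for $N\cup F$ that admits one. Zeroing the shills and using opponent inverse monotonicity and zero-bid invariance gives $p_i'^{\,N\cup F}(\bar b)\le\min\{p_i'^{\,N}(\bar\alpha_N\cdot v^N),\,B_i\}$ for each $i\in N$. Then, in the smoothed base instance started at $\bar\alpha_N$, it repeatedly lowers a budget violator's multiplier continuously until her budget binds. By Assumption~\ref{ass:pymt-opp-inverse-mnt} this never decreases $\sum_i\min\{\tilde p_i,B_i\}$, so it ends at a budget-feasible vector whose revenue is at least the capped revenue at $\bar\alpha_N$ (\Cref{lem:alpha-existence-by-continuity}). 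Letting $\delta\to0$, the paper chooses the base tie-breaking at $\bar\alpha_N\cdot v^N$ and at the limit profile to select the limiting outcomes. This yields a budget-feasible base vector whose revenue dominates the real-buyer revenue at $\bar\alpha$, which the paper then compares with the revenue-maximizing budget-feasible base vector. This aggregate ``capped revenue does not drop under projection'' step is what frees the argument from the maximal element, and it is absent from your proposal.
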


\begin{remark}
In bid-maximizing pay-your-bid mechanisms, the \abspe pacing vector $\alpha^*$ is unique. Under Assumption~\ref{ass:monotone-rev}, $\alpha^*$ is revenue-maximizing among all budget-feasible pacing vectors. Moreover, since the equilibrium revenue is invariant to tie-breaking in this class of mechanisms, the shill-proofness guarantee of \Cref{prop:shill-proof} applies to this unique equilibrium as well.
\end{remark}

\subsection{Computation of \abspe via Adaptive Bidding Algorithms} \label{subsec:computation}
We study decentralized pacing dynamics in which buyers adjust their pacing multipliers based on spending in the smoothed mechanism \(\tilde{\mathcal M}_\delta\). Under our standing opponent inverse monotonicity assumption, we analyze two update rules: the budget-adjustment dynamics of~\citet{borgs2007dynamics}, which converge under common-scale payment responsiveness (Assumption~\ref{ass:homog}), and a multiplicative-weights (MW) update, which extends convergence to a broader class of mechanisms whose payments satisfy the responsiveness condition only with respect to each buyer's own bid (Assumption~\ref{ass:mult-OBM}). We show that the former converges to an approximate \abspe, while the latter satisfies the defining equilibrium conditions in time average. We first formalize the notion of an approximate equilibrium.

\begin{definition}[Approximate \textsf{Abs-PE}]\label{def:approx-abs-pe}
    Given an abstract mechanism $\mathcal M$, a valuation profile $v$ and a budget profile $B$, a vector of pacing multipliers $\alpha\in [0,1]^n$ is called a $\gamma$-approximate \abspe of $\mathcal M$ if, for every buyer $i$, it satisfies:
    \begin{itemize}
        \item Approximate budget feasibility: $p_i(\alpha_iv_i,\alpha_{-i}v_{-i})\le B_i(1+\gamma)$,
        \item Approximate no-unnecessary-pacing: if $p_i(\alpha_iv_i,\alpha_{-i}v_{-i}) \le (1-\gamma)B_i$, then $\alpha_i \ge 1-\gamma$.
    \end{itemize}
    Moreover, given a smoothing parameter $\delta\in (0,1)$, we say that a pacing vector is a \emph{$(\delta,\gamma)$-approximate \abspe} of $\mathcal{M}$ if it is a $\gamma$-approximate \abspe of the smoothed mechanism $\tilde{\mathcal{M}_{\delta}}$. 
\end{definition}

In the remainder of this section, we analyze the two dynamics on \(\tilde M_\delta\). Both convergence analyses use the log-domain Lipschitz continuity of expected payments established in~\Cref{prop:log-lip}. Accordingly, we initialize \(\alpha^0\in(0,1]^n\), and both update rules preserve positivity.

\subsubsection{Budget dynamics of~\citet{borgs2007dynamics}}
\citet{borgs2007dynamics} study a simple pacing heuristic for budget-constrained buyers who repeatedly participate in a first-price \(m\)-item auction mechanism. Each buyer adjusts her pacing multiplier based on spending: the multiplier is decreased when expenditure exceeds the budget target and increased otherwise. To avoid instability and induce fractional allocations in expectation, they inject small random perturbations into values. In the first-price single-item setting, they show that these updates converge efficiently to an approximate equilibrium. We extend their convergence analysis to general mechanisms whose payments respond sufficiently to a common scaling of all bids. Under Assumption~\ref{ass:homog}, Algorithm~\ref{alg:borgs}, applied to \(\tilde M_\delta\), converges efficiently to a \(\gamma\)-approximate \abspe. Hence, convergence of these spending-based feedback dynamics is not specific to first-price auction markets. In particular, standard first-price payments satisfy \Cref{ass:homog} with equality for \(\kappa_i=1\) for all \(i\).

\begin{assumption}[Common-scale responsiveness]
\label{ass:homog}
For every buyer $i$, there exists $\kappa_i>0$ such that, for every
bid profile $b$ and every $\lambda\ge 1$, $p_i(\lambda b) \ge \lambda^{\kappa_i}p_i(b).$
\end{assumption}

\begin{algorithm}[h]
\caption{Budget dynamics of~\cite{borgs2007dynamics}}
\label{alg:borgs}
\begin{algorithmic}[1]
\Require Initial pacing multipliers $\alpha^0\in(0,1]^n$, step size $\eta>0$
\For{$t=0,1,\ldots$}
    \State Run $\tilde{\mathcal M}_{\delta}$ with bids $b^t=\alpha^t\!\cdot\!\tilde v$
    \For{each buyer $i$}
        \If{$\tilde p_i(b^t)>B_i$}
            \State $\alpha_i^{t+1}\gets \alpha_i^t e^{-\eta}$
        \Else
            \State $\alpha_i^{t+1}\gets \min\{\alpha_i^t e^{\eta},1\}$
        \EndIf
    \EndFor
\EndFor
\end{algorithmic}
\end{algorithm}

\begin{restatable}{theorem}{borgs}
\label{thm:borgs-conv}
    Consider a mechanism $\mathcal M$ whose payments satisfy Assumption~\ref{ass:homog}. Fix $\delta\in(0,1)$ and run~\Cref{alg:borgs} on the smoothed mechanism $\tilde{\mathcal M}_\delta$ with budgets $B$ and approximation parameter $\gamma\in(0,1)$. Let $\underline{\kappa}:=\min_{i\in N}\kappa_i.$
    Then there exist $\eta>0$ and $T_\gamma<\infty$ such that, for all $t\ge T_\gamma$, the algorithm returns a $(\delta,\gamma)$-approximate \abspe of $\mathcal M$, provided that $\eta \le \gamma/C$, where $C=\frac{n\bar V}{\delta \min_i B_i}$. 
    More precisely, approximate budget feasibility holds for all $t\ge T_\gamma^{\mathrm{BF}} := \frac{1}{\underline{\kappa}\eta} \ln\!\left( \frac{\bar V/\min_i B_i}{1+\gamma} \right),$ and one may take $T_\gamma \ge 2T_\gamma^{\mathrm{BF}} -\frac{1}{\eta}\ln\!\left(\min_i\alpha_i^0\right).$
\end{restatable}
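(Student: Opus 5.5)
The plan is to work throughout in the log domain, writing $\beta_i^t:=\log\alpha_i^t\le 0$, so that every step of \Cref{alg:borgs} moves each coordinate by exactly $\pm\eta$ (or clips at $0$). Two tools carry the argument. The first is \Cref{prop:log-lip}: since $\|\beta^{t+1}-\beta^t\|_\infty\le\eta$, every expected payment changes by at most $\frac{n\bar V}{\delta}\eta$ per round. With $\eta\le\gamma/C$ and $C=\frac{n\bar V}{\delta\min_iB_i}$, this is at most $\gamma\min_iB_i\le\gamma B_i$. The second is a combination of Assumption~\ref{ass:homog} with the standing Assumption~\ref{ass:pymt-opp-inverse-mnt}, both of which are preserved by smoothing (the former because $\epsilon$ scales with $\lambda$). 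Together they give the following: if $b'_i=\lambda^{-1}b_i$ and $b'_{-i}\ge\lambda^{-1}b_{-i}$ with $\lambda\ge1$, then
\[
\tilde p_i(b')\le\tilde p_i(\lambda^{-1}b)\le\lambda^{-\kappa_i}\tilde p_i(b)\le\lambda^{-\underline\kappa}\bar V,
\]
where the last inequality uses individual rationality.

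\emph{Step 1 (invariance of approximate feasibility).} I will show that once $\tilde p_i(b^t)\le B_i(1+\gamma)$, it stays so forever. If $\tilde p_i(b^t)\le B_i$, the Lipschitz bound gives $\tilde p_i(b^{t+1})\le B_i(1+\gamma)$. If $B_i<\tilde p_i(b^t)\le B_i(1+\gamma)$, then buyer $i$ decreases while opponents move by at most $\eta$. Here I would argue with a slightly sharper comparison (own bid down by $e^{-\eta}$, opponents at most up by $e^{\eta}$), or else simply tolerate a one-step overshoot by shrinking $\eta$ by a constant. I expect the clean version to require tracking the minimum of $\beta_i^t-\beta_j^t$.

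\emph{Step 2 (reaching feasibility within $T^{\mathrm{BF}}_\gamma$).} This is the main obstacle, because opponents may be decreasing simultaneously, which by Assumption~\ref{ass:pymt-opp-inverse-mnt} pushes $i$'s payment up. Suppose buyer $i$ is over budget throughout $[s,t]$, so $\beta_i$ falls by $(t-s)\eta$. Every opponent $j$ falls by at most $(t-s)\eta$, hence $b^t_{-i}\ge e^{-(t-s)\eta}b^s_{-i}$ while $b^t_i=e^{-(t-s)\eta}b^s_i$ exactly. The comparison above then yields $\tilde p_i(b^t)\le e^{-\underline\kappa(t-s)\eta}\bar V$. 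This falls below $B_i(1+\gamma)$, indeed below $B_i$ up to the stated constant, once $t-s\ge T^{\mathrm{BF}}_\gamma=\frac{1}{\underline\kappa\eta}\ln\frac{\bar V/\min_iB_i}{1+\gamma}$. So no buyer can remain infeasible for longer than $T^{\mathrm{BF}}_\gamma$ consecutive rounds, and Step 1 then keeps her approximately feasible from that point on.

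\emph{Step 3 (no unnecessary pacing).} Once all buyers are approximately feasible, suppose $\tilde p_i(b^t)\le(1-\gamma)B_i$ but $\alpha_i^t<1-\gamma$. I will argue that $i$ has been increasing for a while. By the Lipschitz bound, a buyer whose payment is well below budget cannot have been over budget within the preceding $\Omega(1)$ steps, so she has been rising by $\eta$ per step. Buyer $i$ can decrease only while over budget. By Step 2 these decreasing phases last at most $T^{\mathrm{BF}}_\gamma$ rounds in total after initialization, so $\beta_i^t\ge\log\min_i\alpha_i^0-T^{\mathrm{BF}}_\gamma\eta$ at all times. Hence within a further $T^{\mathrm{BF}}_\gamma-\frac1\eta\log\min_i\alpha_i^0$ rounds of rising, $\alpha_i$ reaches $1$. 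This is the source of the bound $T_\gamma\ge2T^{\mathrm{BF}}_\gamma-\frac1\eta\ln\min_i\alpha_i^0$.

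I expect the delicate part to be Step 3's interaction with Step 1: showing that after the burn-in a buyer can no longer oscillate below $1-\gamma$. I would handle this by proving that, after $T^{\mathrm{BF}}_\gamma$, a decrease of $\alpha_i$ requires $\tilde p_i>B_i$. Since one step of movement changes payments by at most $\gamma B_i$, such a decrease cannot coexist with $\tilde p_i\le(1-\gamma)B_i$ at the next step. It follows that any buyer with $\tilde p_i\le(1-\gamma)B_i$ must be on a monotone upward run that ends at $1$.
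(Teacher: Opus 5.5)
Your architecture matches the paper's: a log-Lipschitz step bound, homogeneity combined with opponent inverse monotonicity, feasibility within $T_\gamma^{\mathrm{BF}}$, then a backward argument for no-unnecessary-pacing. Your Step~2 window computation is also sound. Two steps, however, are not closed.

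\textbf{Step~1 is left open, though your own comparison settles it.} Apply it to a single round with $\lambda=e^{\eta}$. If $\tilde p_i(\alpha^tv)>B_i$, then $\alpha_i^{t+1}=e^{-\eta}\alpha_i^t$. Since clipping only moves multipliers up, $\alpha_{-i}^{t+1}\ge e^{-\eta}\alpha_{-i}^t$. Hence $\tilde p_i(\alpha^{t+1}v)\le\tilde p_i(e^{-\eta}\alpha^tv)\le e^{-\kappa_i\eta}\tilde p_i(\alpha^tv)<\tilde p_i(\alpha^tv)$, which is the paper's Lemma~\ref{lem:borgs-pymt-update}. A few corrections to your sketch of this step:
\begin{itemize}
\item What matters is how far opponents can \emph{fall}; rising opponents only lower $i$'s payment.
\item No tracking of $\beta_i^t-\beta_j^t$ is needed.
\item The ``tolerate an overshoot'' fallback fails, because the Lipschitz bound alone allows an upward drift of $\gamma B_i$ per round.
\end{itemize}

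\textbf{Step~3 does not go through as written.} The Lipschitz bound only shows that $\tilde p_i(\alpha^tv)\le(1-\gamma)B_i$ forces $\tilde p_i(\alpha^{t-1}v)\le B_i$, i.e., that the last step was a rise. It does not give $\tilde p_i(\alpha^{t-1}v)\le(1-\gamma)B_i$, so nothing iterates, and your closing claim of a ``monotone upward run'' is a non sequitur.

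The missing ingredient is the upward half of the same one-step comparison. If $i$ rises unclipped (guaranteed since $\alpha_i^t<1-\gamma<1$), then $\alpha_{-i}^t\le e^{\eta}\alpha_{-i}^{t-1}$. Opponent inverse monotonicity and Assumption~\ref{ass:homog} with $\lambda=e^{\eta}$ then give
\[
\tilde p_i(\alpha^tv)\ge\tilde p_i(e^{\eta}\alpha^{t-1}v)\ge e^{\kappa_i\eta}\tilde p_i(\alpha^{t-1}v)\ge\tilde p_i(\alpha^{t-1}v).
\]
So both the underpayment condition and $\alpha_i^{t-1}=e^{-\eta}\alpha_i^t<1-\gamma$ hold at $t-1$. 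Backward induction then shows that $i$ rose in every round back at least to $T_\gamma^{\mathrm{BF}}$.

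\textbf{Your global lower bound is unjustified and in general false.} You substitute the bound $\beta_i^t\ge\log\min_i\alpha_i^0-\eta T_\gamma^{\mathrm{BF}}$ ``at all times''. But a buyer with $B_i<\tilde p_i\le(1+\gamma)B_i$ is approximately feasible yet keeps decreasing, so decreasing rounds are not bounded in total by $T_\gamma^{\mathrm{BF}}$.

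For a counterexample, take a single buyer who always wins an item of value $\bar V$ and pays her bid. Then $\tilde p(\alpha v)=(1-\delta/2)\alpha\bar V$. Start at $\alpha^0=1$, with $B<(1-\delta/2)\bar V$ and $(1+\gamma)(1-\delta/2)>1$. She descends to about $B/((1-\delta/2)\bar V)$, which is below $e^{-\eta T_\gamma^{\mathrm{BF}}}=(1+\gamma)B/\bar V$. The same example shows she stays strictly above $B$ for more than $T_\gamma^{\mathrm{BF}}$ rounds. So Step~2 only bounds the time spent above $(1+\gamma)B_i$, not the time spent above $B_i$.

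With the backward induction in hand, you do not need any global bound. The trivial bound $\alpha_i^{s}\ge\alpha_i^0e^{-\eta s}$ at $s=T_\gamma^{\mathrm{BF}}$ yields $\alpha_i^t\ge\alpha_i^0e^{\eta(t-2T_\gamma^{\mathrm{BF}})}$, and hence the stated $T_\gamma$.
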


\subsubsection{Multiplicative-weights algorithm}
While \Cref{ass:homog} is satisfied by standard first-price auction mechanisms, its restriction on how payments respond to a common scaling of the entire bid profile may fail in more general mechanisms. For the multiplicative-weights dynamics, convergence can be established under the weaker condition that a buyer's payment responds sufficiently to increases in her own bid, holding opponents' bids~fixed.

\begin{assumption}[Own-bid responsiveness]\label{ass:mult-OBM}
For each buyer $i$, there exists $\mu_i>0$ such that for any fixed opponents' bids $b_{-i}$, and all $\alpha_i,\alpha_i'\in(0,1]$ with $\alpha_i>\alpha_i'$,
\(
  p_i(\alpha_i v_i,\, b_{-i}) \ge
  p_i(\alpha_i' v_i,\, b_{-i})\cdot\Bigl(\tfrac{\alpha_i}{\alpha_i'}\Bigr)^{\mu_i}.
\)
\end{assumption}

Assumption~\ref{ass:mult-OBM} captures a minimum responsiveness of payments to scaling buyer $i$’s bid: holding opponents’ bids fixed, raising $i$’s bid from $\alpha_i' v_i$ to $\alpha_i v_i$ must increase her payment by at least a factor of $(\alpha_i/\alpha_i')^{\mu_i}$, i.e., at least polynomially in the scale factor. Intuitively, it requires payments to exhibit positive elasticity in a buyer’s own bid. 
Under opponent inverse monotonicity, this condition is strictly weaker than Assumption~\ref{ass:homog}.
    \begin{algorithm}[h]
    \caption{Multiplicative-weights updates}
    \label{alg:MW}
    \begin{algorithmic}[1]
    \Require Initial pacing multipliers $\alpha^0\in(0,1]^n$, step size $\eta>0$
    \For{$t=0,1,\ldots,T$}
        \State Run $\tilde{\mathcal M}_{\delta}$ with bids $b^t=\alpha^t\!\cdot\!\tilde v$
        \For{each buyer $i=1,\ldots,n$}
            \State Observe excess payment $g_{t,i}=\tilde p_i(b^t)-B_i$
            \State Update $\alpha_i^{t+1}\gets \min\{\alpha_i^t e^{-\eta g_{t,i}},1\}$
        \EndFor
    \EndFor
    \end{algorithmic}
    \end{algorithm}

Fix a smoothing parameter $\delta\in(0,1)$ and consider the smoothed mechanism $\tilde{\mathcal M}_\delta$. Since smoothing preserves own-bid responsiveness of expected payments, Assumption~\ref{ass:mult-OBM} holds for $\tilde p$ as well.
We analyze the multiplicative-weights update in~\Cref{alg:MW} and establish two complementary guarantees: budget feasibility holds in time average, while violations of no-unnecessary-pacing occur on only a vanishing fraction of rounds.
Together, these imply convergence on average to a $(\delta,\gamma)$-approximate \textsf{Abs-PE} of $\mathcal M$. \Cref{thm:MW-avg-conv} formalizes the result.

\begin{restatable}[Average convergence of MW]{theorem}{MWconv}
\label{thm:MW-avg-conv}
Consider a mechanism $\mathcal M$ whose payments satisfy \Cref{ass:mult-OBM}. Fix $\delta\in(0,1)$, $\gamma\in(0,1)$, and define $R_B:=\max_{i\in N}\frac{\sum_{j\neq i}B_j}{B_i}.$
Let $\{\alpha^t\}_{t\ge 0}$ be the iterates produced by~\Cref{alg:MW} on $\tilde{\mathcal M}_\delta$ with step size $\eta$ satisfying $\frac{1}{(\gamma-\zeta)\min_{i\in N}\{\mu_iB_i\}} \ln\left(\frac{1-\gamma}{1-\gamma-\zeta}\right)
\le \eta \le \min\left\{\frac{\delta}{\bar V},\frac{\zeta\delta}{(1-\gamma)R_B\bar V} \right\},$ where $0<\zeta<\min\{\gamma,1-\gamma\}$ is arbitrary and the interval is assumed nonempty. Then:
\begin{enumerate}
\item \emph{Average budget feasibility.} For every buyer $i$ and every $T\ge 1$,
\(
\frac{1}{T}\sum_{t=0}^{T-1}\tilde p_i(\alpha^t v)
\le B_i + \frac{\ln(1/\bar\alpha_i)}{\eta\,T},
\)
where \(\bar\alpha_i>0\) is a uniform lower bound on buyer \(i\)'s iterates.
In particular, if $T \ge \frac{\ln(1/\bar\alpha_i)}{\eta\,\gamma B_i}$, then
\(
\frac{1}{T}\sum_{t=0}^{T-1}\tilde p_i(\alpha^t v)~\le~ (1+\gamma)B_i.  
\)

\item \emph{Vanishing unnecessary pacing.} Under Assumption~\ref{ass:mult-OBM}, for every buyer $i$,
\(
\frac{1}{T}\,
\#\Bigl\{t\in\{0,\dots,{T-1}\}:\ 
\tilde p_i(\alpha^t v)<(1-\gamma)B_i
\ \text{and}\ 
\alpha_i^t < 1-\gamma\Bigr\}
\;\xrightarrow[T\to\infty]{}\; 0.
\)
\end{enumerate}
Consequently, the MW iterates are approximately budget-feasible in time average, and the no-unnecessary-pacing condition is violated on only a vanishing fraction of rounds; we refer to this as convergence on average to a $(\delta,\gamma)$-approximate \textsf{Abs-PE}.
\end{restatable}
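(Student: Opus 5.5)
Part~1 is a standard mirror-descent / online-gradient argument in log-space. Part~2 is a potential-style argument based on Assumption~\ref{ass:mult-OBM}. Throughout, write $\beta_i^t:=\ln\alpha_i^t\le 0$. The update then reads $\beta_i^{t+1}=\min\{\beta_i^t-\eta g_{t,i},0\}$, a projected gradient step of size $\eta$ on the linear losses $\beta\mapsto g_{t,i}\beta$.

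\emph{Uniform lower bound.} First I would establish $\bar\alpha_i>0$. Expected payments are bounded by $\bar V$, so a single step can decrease $\beta_i$ by at most $\eta\bar V\le\delta$. Moreover, $\beta_i$ decreases only when $\tilde p_i(\alpha^t v)>B_i$. By individual rationality, $\tilde p_i\le \alpha_i^t\bar V$, so this requires $\alpha_i^t>B_i/\bar V$. Hence $\alpha_i^t\ge \min\{\alpha_i^0,(B_i/\bar V)e^{-\eta\bar V}\}=:\bar\alpha_i$ for all $t$.

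\emph{Average budget feasibility.} Telescoping the update gives
\[
\beta_i^T-\beta_i^0\ \ge\ -\eta\sum_{t<T} g_{t,i}.
\]
The projection onto $\{\beta\le0\}$ only increases $\beta$ when it binds, and it binds only when $g_{t,i}<0$, so the inequality survives truncation. Rearranging,
\[
\sum_{t<T}(\tilde p_i-B_i)\ \le\ \frac{\beta_i^0-\beta_i^T}{\eta}\ \le\ \frac{\ln(1/\bar\alpha_i)}{\eta}.
\]
Dividing by $T$ gives the first bound. The stated $T$ threshold then yields the bound $(1+\gamma)B_i$.

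\emph{Vanishing unnecessary pacing.} Call a round \emph{bad} if $\tilde p_i<(1-\gamma)B_i$ and $\alpha_i^t<1-\gamma$. On a bad round $g_{t,i}<-\gamma B_i$, so $\beta_i$ rises by more than $\eta\gamma B_i$, unless it gets capped at $0$, which leaves the bad region anyway. Since $\beta_i\in[\ln\bar\alpha_i,0]$, the sum of increments minus the sum of decrements is bounded by $\ln(1/\bar\alpha_i)$. It therefore suffices to show that decrements after bad episodes cannot repeatedly undo the gains. This is where Assumption~\ref{ass:mult-OBM} and the step-size window enter.

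The plan is to show that once $\alpha_i$ climbs from the bad region up to level $1-\gamma-\zeta$, it cannot immediately fall back. Raising one's own multiplier by a factor $r$ multiplies the payment by at least $r^{\mu_i}$, and the lower bound on $\eta$ is exactly $\mu_i B_i\eta(\gamma-\zeta)\ge\ln\frac{1-\gamma}{1-\gamma-\zeta}$. Consequently, a constant-fraction climb forces the payment to rise toward $B_i$, and overshoot is controlled.

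What threatens the argument is the opponents' moves, which by Assumption~\ref{ass:pymt-opp-inverse-mnt} can lower $i$'s payment only if they raise their bids. I would bound the total opponent-induced drop per round using Proposition~\ref{prop:log-lip}: the per-round change in $\tilde p_i$ is at most $(n\bar V/\delta)\,\eta\max_j|g_{t,j}|$. The upper constraint $\eta\le \zeta\delta/((1-\gamma)R_B\bar V)$ is designed so that this drift, summed against opponents' budgets, stays within a $\zeta$-fraction margin.

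Combining these, each bad round contributes a net upward potential gain of order $\eta\gamma B_i$ that is never fully canceled. The number of bad rounds up to $T$ is then $o(T)$; in fact it is $O(\ln(1/\bar\alpha_i)/(\eta\zeta B_i))$ plus lower-order terms. This last coupling step, handling opponent drift via $R_B$, is the one I expect to be the main obstacle and the most delicate calculation.
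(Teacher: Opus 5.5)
Your Part~1 is fine and matches the paper's telescoping argument. Your lower bound uses only individual rationality: $\tilde p_i(\alpha v)\le\alpha_i\bar V$, so $\alpha_i$ can only decrease from above $B_i/\bar V$. This is more explicit than the paper's, which picks $\hat\alpha_i$ with $\sup_{\alpha_{-i}}\tilde p_i(\hat\alpha_i v_i,\alpha_{-i}v_{-i})\le B_i/2$ by continuity. There is a sign slip: the cap at $0$ \emph{lowers} $\beta_i$, so the telescoped relation is $\beta_i^T-\beta_i^0\le-\eta\sum_{t<T}g_{t,i}$. Your rearranged bound follows from this, not from the $\ge$ you wrote.

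Part~2 has a genuine gap. A potential argument on $\beta_i$ bounds the number of bad rounds only if the total decrement of $\beta_i$ is bounded, and it is not. The iterates may oscillate around budget exhaustion forever, each overshoot producing a decrement. So ``each bad round's gain is never fully canceled'' is exactly what needs proof, and the hysteresis you sketch does not supply it. Indeed, $\alpha_i$ can legitimately fall back below $1-\gamma$ whenever $\tilde p_i>B_i$. What must be excluded is re-entry into the bad set, which is a joint condition on payment and multiplier.

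The missing idea is backward propagation: if round $t$ is bad, so is round $t-1$. Then the bad rounds form an initial segment $\{0,\dots,t\}$. On it, $\alpha_i$ grows unclipped by a factor exceeding $e^{\eta\gamma B_i}$ per round, so $t<\frac{1}{\eta\gamma B_i}\ln\frac{1-\gamma}{\alpha_i^0}$ and the bad fraction is $O(1/T)$. Backward propagation is proved by splitting on the sign of $g_{t-1,i}$.
\begin{itemize}
\item If $g_{t-1,i}>0$, the own-coordinate log-Lipschitz bound with $\eta\le\delta/\bar V$, plus opponent drift, limits the drop $\tilde p_i(\alpha^{t-1}v)-\tilde p_i(\alpha^tv)$ to $g_{t-1,i}+\zeta B_i$. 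Badness at $t$ forces the drop to exceed $g_{t-1,i}+\gamma B_i$, contradicting $\zeta<\gamma$.
\item If $g_{t-1,i}\le 0$, own-bid monotonicity plus drift gives $\tilde p_i(\alpha^{t-1}v)<(1-\gamma+\zeta)B_i$. So $\alpha_i$ climbed by a factor at least $e^{\eta(\gamma-\zeta)B_i}$. Assumption~\ref{ass:mult-OBM} with the lower bound on $\eta$ then gives $\tilde p_i(\alpha^{t-1}v)<(1-\gamma)B_i\cdot\frac{1-\gamma-\zeta}{1-\gamma}+\zeta B_i=(1-\gamma)B_i$, with $\alpha_i^{t-1}\le\alpha_i^t<1-\gamma$.
\end{itemize}

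Here $\zeta B_i$ must bound the opponent-induced drift, and your estimate $(n\bar V/\delta)\eta\max_j|g_{t,j}|$ is too crude for that. The quantity $|g_{t,j}|$ can be of order $\bar V$, and there is no factor $1-\gamma$, so the stated step-size window does not absorb it. The estimate that works combines three facts:
\begin{itemize}
\item by Assumption~\ref{ass:pymt-opp-inverse-mnt}, only opponents' \emph{upward} moves can lower $\tilde p_i$;
\item an upward log-step of buyer $j$ is at most $\eta(B_j-\tilde p_j)\le\eta B_j$, because payments are nonnegative;
\item with $i$'s own multiplier held below $1-\gamma$, the total-variation argument of Proposition~\ref{prop:log-lip} applies to a payment bounded by $\alpha_i\bar V<(1-\gamma)\bar V$, by individual rationality.
\end{itemize}
Together these give drift at most $\frac{(1-\gamma)\bar V}{\delta}\eta\sum_{j\ne i}B_j\le\zeta B_i$. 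This is precisely what $R_B$ and the factor $1-\gamma$ in the upper bound on $\eta$ encode.
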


\begin{remark}[Coordinate-specific step-size bounds]
The upper bound on the step size in Theorem~\ref{thm:MW-avg-conv} uses generic bounds on how expected payments change with the pacing multipliers. These bounds can be improved when the mechanism admits tighter coordinate-specific estimates. In particular, suppose there exist constants $L_{ij}\ge 0$ such that, for any $\alpha,\alpha'\in(0,1]^n$ that differ only in coordinate $j$, with $\alpha_j'\ge\alpha_j$,
\(
\left|\tilde p_i(\alpha'v)-\tilde p_i(\alpha v)\right| \le L_{ij}|\log\alpha_j'-\log\alpha_j|,
\)
where, for $j\neq i$, this condition is required only when $\alpha_i=\alpha_i'~<~1-\gamma$. 
Letting $M_B:=\min_i\mu_iB_i$, the convergence guarantee continues to hold for any step size satisfying $\frac{1}{(\gamma-\zeta)M_B} \ln\!\left(\frac{1-\gamma}{1-\gamma-\zeta}\right) \le \eta \le \min\left\{\min_i\frac{1}{L_{ii}},\; \min_i\frac{\zeta B_i}{\sum_{j\neq i}L_{ij}B_j}\right\}.$
The generic upper bound in Theorem~\ref{thm:MW-avg-conv} is recovered from the smoothing estimates $L_{ii}=\frac{\bar V}{\delta},$ and $L_{ij}=\frac{(1-\gamma)\bar V}{\delta},\, j\neq i.$
Hence, mechanisms with tighter coordinate-specific payment bounds may admit a larger range of learning rates.
\end{remark}

Taken together, the budget dynamics in Algorithms~\ref{alg:borgs} and~\ref{alg:MW} show that approximate pacing equilibria can be approached through decentralized adjustment based on realized spending. Under the structural conditions identified above, pacing operates as a scalable feedback system: multipliers respond to spending deviations from budget targets, with the dynamics of~\cite{borgs2007dynamics} converging to an approximate \abspe and the multiplicative-weights dynamics satisfying the equilibrium conditions in time average. This provides a dynamic foundation for budget management in general allocation environments.

\section{Experiments}
The previous sections establish structural, economic, and computational properties of pacing equilibria across increasingly general allocation mechanisms. We complement our theoretical analysis with numerical experiments that explore aspects of pacing equilibria not captured by our theoretical results.
First, Section~\ref{sec:posfppe} shows that a \posfppe need not constitute a market equilibrium because buyers may prefer alternative feasible allocations at the induced prices. We ask how frequently such profitable deviations arise and how large the resulting utility gains are. 
Second, Section~\ref{sec:bid-max-pyb} provides worst-case liquid welfare guarantees.
We examine how close \abspe allocations are to the liquid welfare optimum across two allocation environments within the bid-maximizing pay-your-bid class.
Finally, we study the performance of the proposed pacing dynamics in Algorithms~\ref{alg:borgs} and \ref{alg:MW}, comparing their observed convergence behavior with the equilibrium conditions and theoretical guarantees established in Section~\ref{sec:abstract-mech}.

Across the experiments, we generate synthetic instances representing several canonical platform-allocation environments including first-price position auctions as well as shared-capacity allocation, matching, and proportional sharing mechanisms. 
When generating budgets, we use variants of the rule $B_i=\lambda_i \frac{1}{n}\sum_j v_{ij},$ where $\frac{1}{n}\sum_j v_{ij}$ is buyer $i$’s expected value under a uniform random assignment. 
This follows the calibration used by~\cite{conitzer2022pacing}, who set $B_i=\frac{1}{n}\sum_j v_{ij}$ to generate a ``good mixture'' of paced and unpaced buyers. In experiments that vary market structure, we rescale the distribution of $\lambda_i$ so that the fraction of paced buyers remains approximately constant across treatments. This lets us isolate the effect of the market characteristic being varied, while keeping the overall budget tightness roughly fixed.

\subsection{Position Auctions and Market Equilibrium}\label{sec:exps-position}

Section~\ref{sec:posfppe} shows that buyers may obtain higher utility by reoptimizing at \posfppe prices than from their \posfppe allocations. While \Cref{thm:no-me-1} shows that such gains can arise under mild conditions, it does not indicate their economic significance in finite position-auction markets. We compare each buyer's \posfppe utility with the maximum utility she could obtain by reoptimizing at the induced prices.

Consider a \posfppe allocation  $x^*$ and its equilibrium prices $p^*$. Let $U_i^D(p^*)$ denote buyer $i$'s maximum quasilinear utility at $p^*$, subject to her budget and the feasibility constraint of at most one slot per auction. 
For $U_i^D(p^*)>0$, we define buyer $i$'s \emph{relative demand gap} as $r_i = \frac{U_i^D(p^*) - U_i(x_i^*,p^*)}{U_i^D(p^*)}$, which measures the fraction of the optimal utility at $p^*$ that she misses under her assigned allocation $x^*$. Buyers with $U_i^D(p^*)=0$ are excluded. We classify a buyer as having a \emph{demand violation} when $r_i>0.05$, corresponding to a loss of more than $5\%$ of her attainable utility.
We generate markets with \(m=10\) auctions, values drawn independently from \(U(0,1)\), and geometric slot qualities \(q_k=\rho^{k-1}\). The baseline has \(n=20\), \(s=5\), and \(\rho=0.6\). We vary slot similarity \(\rho\in\{0.20,0.35,0.50,0.65,0.80,0.95\}\), the number of slots \(s\in\{2,3,4,5,6,8\}\) holding \(n=20\), and market thickness \(n/s\in\{1,2,4,8,16\}\) by varying \(n\) and holding \(s=5\). Budgets are chosen so that approximately \(70\%\) of buyers are paced, and each point averages 50 independently generated markets.

\begin{figure}[h]
    \centering
    \includegraphics[width=\textwidth]{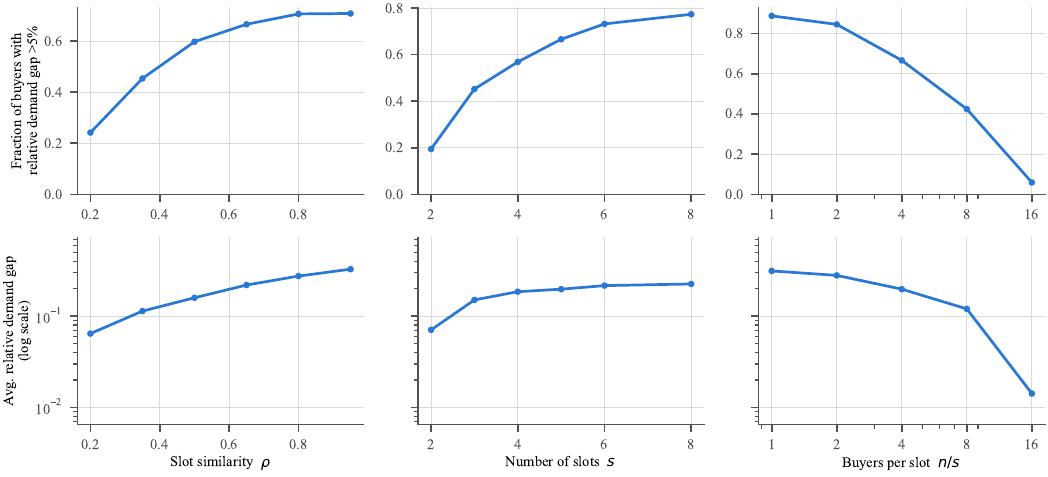}
    \caption{Demand violations at \posfppe prices. The top row shows the fraction of buyers with a demand violation, while the bottom row reports the average relative demand gap across buyers. The columns vary slot similarity, the number of slots per auction, and the number of buyers per~slot.}
    \label{fig:sec3-demand-violation}
\end{figure}

Figure~\ref{fig:sec3-demand-violation} shows that when slots are similar and competition is limited, the average fraction of attainable utility lost is larger, and such losses arise for a larger fraction of buyers. As \(\rho\) increases, both the incidence and magnitude of demand violations increase. Holding \(\rho=0.6\) fixed, adding more slots produces a similar pattern. In contrast, as the number of buyers per slot increases, both the frequency of violations and the average fraction of attainable utility lost decrease.
We further examine how demand gaps vary with market thickness, equilibrium pacing multipliers, and budget tightness in Appendix~\ref{app:sec-exps-position}.

\subsection{Liquid Welfare Performance of \abspe}\label{sec:exps-LW}

We next examine how close the liquid welfare of \abspe allocations is to the optimum in finite bid-maximizing pay-your-bid markets. We further compare the realized welfare losses with the worst-case additive and multiplicative guarantees from \Cref{thm:abspe-lw} in Appendix~\ref{sec:app-LW-tightness}.

We consider two bid-maximizing pay-your-bid environments: (1) a one-to-one matching mechanism in which the allocation space is $X=\left\{x\geq0:\sum_i x_{ij}\leq1\forall j,\;\sum_jx_{ij}\leq1\forall i\right\}$, and (2) a shared-capacity mechanism, with $X=\left\{x\geq 0:\sum_i x_{ij}\leq c_j\forall j,\;x_{ij}\leq \bar x_{ij}\forall i.j\right\}$; here $\bar x_{ij}\sim U(0.4,1)$ is buyer $i$'s usage limit for resource $j$, and $c_j=\frac{1}{2}\sum_i \bar x_{ij}$ is the total capacity.
In both environments, the number of items or resources equals the number of buyers, and values are drawn independently from $U(0,1)$. We vary $n\in\{10,20,40,80\}$ and generate 100 markets for each environment and market size. Budgets are calibrated separately so that approximately half of the buyers are paced.

\begin{figure}[h]
    \centering
    \includegraphics[width=0.9\textwidth]{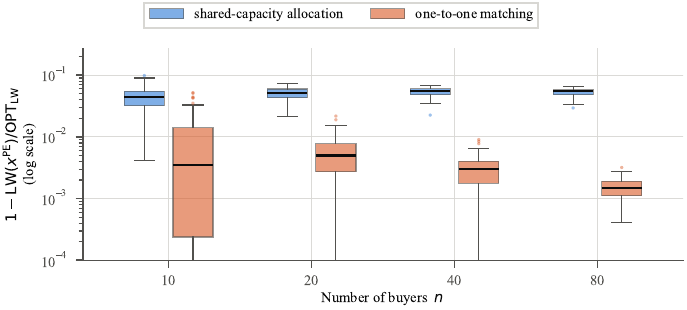}
    \caption{Liquid welfare loss at \abspe. Each box summarizes the relative LW loss $1-LW(x^{PE})/OPT_{LW}$ across 100 independently generated markets. The vertical axis is logarithmic and truncated at \(10^{-4}\).}
    \label{fig:sec4-lw-loss}
\end{figure}

Figure~\ref{fig:sec4-lw-loss} reports, for each simulated market, the relative liquid welfare loss $1-\frac{LW(x^*)}{OPT_{LW}}$, where $x^*$ is the \abspe allocation and $OPT_{LW}=\max_{x\in X}LW(x)$ denotes the optimal liquid welfare.
The results show that the loss remains small as market size grows. In the shared-capacity environment, the median loss stays close to $5\%$ as $n$ increases from 10 to 80. In the matching environment, median losses remain below $0.5\%$ and decrease further with market size.
Thus, relative liquid welfare loss does not increase with market size over the range considered. However, its magnitude varies across allocation environments, with smaller losses in matching than in shared-capacity allocation.

\subsection{Computing Approximate \abspe in Smoothed Mechanisms}
Section~\ref{sec:bid-max-pyb} extends the PACE framework of~\cite{gao2021online} to computing pacing equilibria in bid-maximizing pay-your-bid mechanisms. Appendix~\ref{app:exps-pace} compares the resulting pacing multipliers with the corresponding solutions of (CP1) from~\Cref{thm:convex-program}. In this subsection, we focus on the pacing dynamics for general mechanisms from Section~\ref{sec:abstract-mech}.

\paragraph{Last-iterate budget dynamics.}

We first evaluate the budget dynamics of \cite{borgs2007dynamics}, who prove convergence for first-price single-slot auctions and note the extension to multi-slot auctions. We generalize this guarantee to abstract mechanisms whose payments increase sufficiently when all bids are scaled up (\Cref{thm:borgs-conv}). 
We test Algorithm~\ref{alg:borgs} on two pay-your-bid mechanisms: (1) first-price position auctions, as a benchmark within the multi-slot auction setting, and (2) proportional sharing, as an example outside the auction environment.  
We vary $n\in\{10,20,30\}$ and generate 25 markets for each mechanism and market size. Position-auction markets have 20 auctions with $s=n$ slots each. In the proportional sharing mechanism, there are 20 items, and the allocation rule is given by $x_{ij}(b)=\frac{b_{ij}}{\sum_h b_{hj}}$. 
Payments are smoothed with $\delta=0.2$, following the perturbation framework in~\Cref{section:mechanism-perturb}. Budgets are calibrated so that the mean equilibrium pacing multiplier is approximately 0.5 in both environments.
We initialize the pacing multipliers independently from \(\operatorname{Unif}(0,1]\), and use the maximum step size allowed by \Cref{thm:borgs-conv} for $\gamma=0.1$.

\begin{figure}[h]
    \centering
    \includegraphics[width=\textwidth]{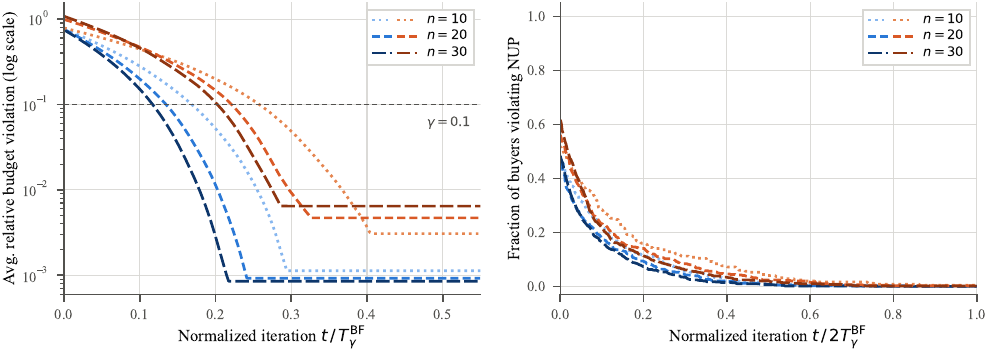}
    \caption{Convergence of the budget-adjustment dynamics of~\citet{borgs2007dynamics} in first-price proportional-sharing (blue) and position-auction (orange) mechanisms.}
    \label{fig:sec5-borgs}
\end{figure}

We examine how quickly the dynamics satisfy the approximate equilibrium conditions relative to the convergence bounds in \Cref{thm:borgs-conv}.
Figure~\ref{fig:sec5-borgs} reports the average relative budget violation, defined as the budget excess normalized by the buyer’s budget, and the fraction of buyers violating the approximate no-unnecessary-pacing (NUP) condition. Time is normalized by the corresponding bounds in \Cref{thm:borgs-conv}. 
The average budget violation falls below $\gamma=0.1$ after about $11\%$ to $25\%$ of the budget-feasibility horizon $T_\gamma^{BF}$ across the specifications considered. The fraction of buyers violating approximate NUP reaches zero within the corresponding guaranteed horizon in all cases. Hence, these simulations converge significantly faster than the worst-case iteration bounds suggest. Appendix~\ref{app:exps-worst-buyer} reports analogous results for maximum violations across buyers.

\paragraph{Multiplicative-weights dynamics.}

\Cref{thm:MW-avg-conv} replaces common-scale responsiveness with the weaker own-bid responsiveness condition (Assumption~\ref{ass:mult-OBM}) and establishes convergence of the approximate equilibrium conditions in time average. We test these dynamics in nonlinear-payment mechanisms. We consider the same environments as in the last-iterate budget dynamics experiment, but replace pay-your-bid payments with power payments: in proportional sharing, buyer $i$'s payment is $p_i(b)=b_i^2 x_i(b)$, and in the position mechanism a buyer assigned rank $r$ pays $b_i^2 q_r$, where $q_r=1-0.001(r-1)$. 

\begin{figure}[h]
    \centering
    \includegraphics[width=\textwidth]{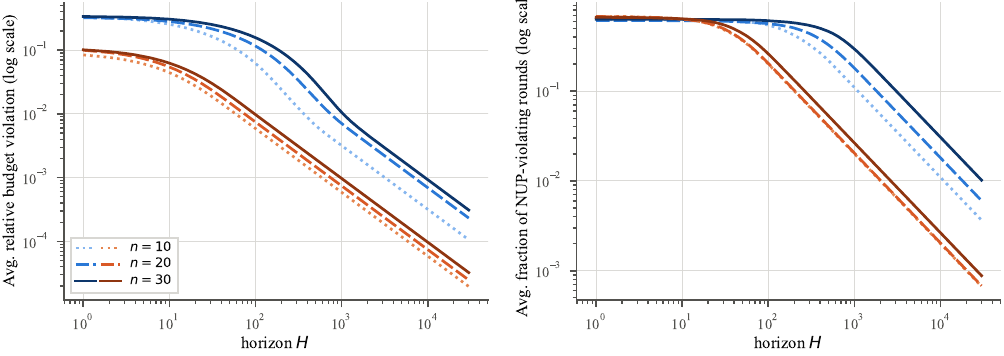}
    \caption{Time-averaged convergence of the multiplicative-weights dynamics under power payments for proportional sharing (blue) and first-price position auctions (orange).}
    \label{fig:sec5-mw}
\end{figure}

We run the dynamics for $H=3\times 10^4$ iterations. Figure~\ref{fig:sec5-mw} reports the average relative budget violation and the fraction of rounds violating approximate NUP. Both are computed for the time-averaged iterates. The average relative budget violation declines approximately inversely with the averaging horizon, consistent with the $O(1/H)$ dependence in \Cref{thm:MW-avg-conv}. The fraction of NUP-violating rounds also declines toward zero. By $H=3\times 10^4$, both time-averaged equilibrium violations are small across all specifications. Appendix~\ref{app:exps-worst-buyer} reports the corresponding worst-buyer measures and shows that the same qualitative pattern holds for the maximum violations across buyers.
\section{Conclusion}

In digital platforms, the allocation mechanism is only one component of a larger and more complex decision system. Participants often do not directly choose how to compete in each individual allocation opportunity. Instead, they submit campaign-level inputs -- such as bids, values, targeting criteria, and budgets -- and the platform's budget-management system determines how those inputs are translated into participation across many allocation events. This makes the design problem different from the design of a standalone auction. The relevant question is not only whether an allocation rule has desirable properties in isolation, but whether the aggregate system induced by allocation rules, payment rules, budgets, and pacing multipliers behaves~predictably.

This paper studies that question for pacing-based budget management in general mechanisms. The first-price single-item auction model provides an important benchmark, where pacing equilibria are known to have strong structural and computational properties. Our results show that these properties are not simply consequences of first-price payments, nor do they extend automatically to richer environments. Even in first-price position auctions, the market-equilibrium interpretation of the single-item model can fail. Thus, the single-item theory should not be applied mechanically to platforms with multi-slot allocation, capacity constraints, matching constraints, or other complex feasibility restrictions.

At the same time, the paper shows that many of the favorable properties of first-price pacing can be recovered once the right mechanism-level conditions are identified. In bid-maximizing pay-your-bid mechanisms, pacing equilibria remain well structured under mild convexity assumptions on the allocation and valuation spaces: they are uniquely determined, admit a convex-program characterization, and support efficiency and computational guarantees. In more general abstract mechanisms, payment monotonicity, smoothing, and appropriate tie-breaking provide conditions under which pacing equilibria exist and can be approached through expenditure-based adjustment dynamics. These results suggest that the robustness of pacing depends less on the label of the auction format and more on how the mechanism's allocation and payment rules respond to paced~bids. 

Taken together, our results support a more integrated view of platform design: the choice of a pacing rule and the choice of an allocation mechanism are not independent design decisions. A pacing multiplier is a simple operational control, but its effect depends on how the mechanism maps scaled bids into outcomes and payments. If this mapping preserves the relevant monotonicity and continuity properties, pacing can serve as a tool for regulating expenditure while remaining compatible with the platform's allocation and computational~objectives.  

From a broader perspective, our results do not suggest that there is a single best mechanism or that one budget management architecture is best for all platforms. Platforms differ in their objectives, information environments, allocation constraints, and implementation requirements, and may place different weights on considerations such as revenue, stability, bidder control, or computational simplicity. Our contribution is instead to develop a unified framework for evaluating when multiplicative pacing is compatible with a platform's allocation and payment environments. By analyzing increasingly general classes of mechanisms -- from single-item auctions, to position auctions, to bid-maximizing pay-your-bid mechanisms, and finally to general abstract mechanisms -- we identify the structural properties that support predictable, computable, and robust pacing outcomes. This moves the study of pacing from a single-item auction benchmark toward a general theory of automated budget management in complex platform markets.

\bibliographystyle{plainnat} 
\bibliography{refs}

\appendix
\section{Missing Proofs}
\subsection{Missing Proofs from \Cref{sec:posfppe}}

\subsubsection{Proof of \Cref{thm:no-me-1}}
\MEnonEquiv*
\begin{proof}
    Given market prices $p_{jk}$, bidder $i$'s demand problem is
    \begin{align*}
    \max_{x_i \ge 0}\ &\sum_{j,k} (v_{ij}q^j_k-p_{jk})x_{ijk}
    \\ &\text{s.t.} \quad
    \sum_{j,k} p_{jk}x_{ijk} \leq B_i.\tag{$D_i$} 
    \end{align*}
    Let $\lambda_i\ge 0$ be the Lagrange multiplier for the budget constraint.
    The KKT conditions for $(D_i)$ give $(1+\lambda_i)p_{jk}\ \ge\ v_{ij}q_k^j$ for all $(j,k),$ and $(1+\lambda_i)p_{jk}\ =\ v_{ij}q_k^j$ whenever $x_{ijk}>0$. 
    
    Now consider a \posfppe solution $\alpha$ with induced allocation $x$ and prices $p$. We show that there are instances where $x_i\notin D_i(p)$, hence $(x,p)$ is not a market equilibrium.
    Fix a bidder $i$ and an auction $j$ in which bidder $i$ receives a positive fraction of some slot $k$ with $q_k^j>0$ and is \emph{not tied} with all lower-ranked bidders (i.e., there exists a strictly lower slot $k'>k$ with $q_{k'}^j>0$ allocated to some bidder $i'$ satisfying
    $\alpha_{i'}v_{i'j}<\alpha_iv_{ij}$). Such an instance exists, e.g., whenever there are at least two distinct paced bids in some auction with at least two positive-quality slots.
    
    Since $x_{ijk}>0$, the \posfppe pricing rule gives $p_{jk}=q_k^j\,\alpha_i v_{ij}$, and thus bidder $i$'s bang-per-buck for slot $k$ is
    \(
    \frac{v_{ij}q_k^j}{p_{jk}}
    =\frac{v_{ij}q_k^j}{q_k^j\alpha_i v_{ij}}
    =\frac{1}{\alpha_i}.
    \)
    For the lower slot $k'$, we have $p_{jk'}=q_{k'}^j\,\alpha_{i'}v_{i'j}$ and hence
    \[
    \frac{v_{ij}q_{k'}^j}{p_{jk'}}
    =\frac{v_{ij}q_{k'}^j}{q_{k'}^j\alpha_{i'}v_{i'j}}
    =\frac{v_{ij}}{\alpha_{i'}v_{i'j}}
    >\frac{v_{ij}}{\alpha_iv_{ij}}
    =\frac{1}{\alpha_i}
    =\frac{v_{ij}q_k^j}{p_{jk}},
    \]
    where the strict inequality uses $\alpha_{i'}v_{i'j}<\alpha_iv_{ij}$.
    Therefore, bidder $i$ does \emph{not} maximize $\frac{v_{ij}q_k^j}{p_{jk}}$ across slot-goods in this auction. This contradicts the KKT condition that must hold for any optimal solution $x_i\in D_i(p)$, so $(x,p)$ is not a market equilibrium.
\end{proof}

\subsubsection{Proof of \Cref{prop:no-me-pos}}
\posMEnonEqui*
\begin{proof} 
    Consider a single auction $M=\{1\}$ with three bidders $N=\{1,2,3\}$ and three slots $S=\{1,2,3\}$ with qualities $q^1=(1,0.6,0.2)$.
    Let budgets be sufficiently large so $\alpha_h=1$ for all bidders, and let values be $(v_{11},v_{21},v_{31})=(10,5,1)$.
    
    The paced bids ranking is $v_{11}\succ v_{21}\succ v_{31}$, so bidders $1,2,3$ are allocated slots $1,2,3$ respectively, and the first-price slot prices are
    \(
    p_{11}=10,\, p_{12}=3,\, p_{13}=0.2.
    \)
    Consider bidder $2$ and the position-feasible utility $u_2''(\cdot,p)$. Since the utility is linear in $x$ and the feasible region $\sum_k x_{2,1,k}\le 1$ is a simplex, an optimum is attained at an extreme point.
    Bidder 2's net utility from slot 2 is $v_{21}q_2^1-p_{12}=5\cdot 0.6-3=0,$ whereas her net utility from slot 3 is $v_{21}q_3^1-p_{13}=5\cdot 0.2-0.2=0.8>0.$
    Hence bidder 2 strictly prefers slot 3 at prices $p$, so the allocation induced by the \posfppe is not in her demand set.
    Therefore, $(x,p)$ is not a market equilibrium.
\end{proof}
\subsection{Missing Proofs from Section~\ref{sec:bid-max-pyb}}

\subsubsection{Proof of~\Cref{thm:convex-program}}
\PYBConvexProgram*
\begin{proof}
    The proof follows directly from first-order optimality. Let $\phi(\alpha) = \max_{x\in X}\sum_i\alpha_iv_i(x_i)$. By Danskin's theorem, $\partial\phi(\alpha^*)=\operatorname{conv}\{v(x): x\in \arg\max_{X}\sum_i \alpha_i^* v_i(x_i)\}$.
    Moreover, first-order optimality for the outer problem yields a subgradient $g\in\partial\phi(\alpha^*)$ with $g_i=B_i/\alpha_i^*$ if $0<\alpha_i^*<1$ and $g_i\le B_i/\alpha_i^*$ if $\alpha_i^*=1$. 
    Let $M(\alpha^*)=\arg\max_{x\in X}\sum_i \alpha_i^* v_i(x_i)$ denote the maximizer set at $\alpha^*$.
    Write $g=\sum_t \lambda_t v(x^{(t)})$ where $x^{(t)}\in M(\alpha^*)$ and $\lambda_t\ge 0,\,\sum_t\lambda_t=1$. Since $X$ is convex and the objective $\sum_i \alpha_i^* v_i(x_i)$ is affine in $x$, then the set $M(\alpha^*)$ is convex. Hence, $x^*=\sum_t\lambda_t x^{(t)}\in M(\alpha^*) \subseteq X$, and affinity of $v_i$ yields $v_i(x_i^*)=\sum_t \lambda_t v_i(x_i^{(t)})=g_i$. Hence, setting $p_i^*=\alpha_i^* v_i(x_i^*)=\alpha_i^* g_i$, we have $p_i^* = B_i$ if $0<\alpha_i^*<1$ and $p_i^* \le B_i$ if $\alpha_i^*=1$, which gives budget feasibility and no-unnecessary-pacing, implying that $\alpha^*$ is an \textsf{Abs-PE}.
\end{proof}


\subsubsection{Proof of~\Cref{prop:eg-dual}}\label{app:proof-prop2}
\PYBdualEG*
\begin{proof} 
    Recall the pacing convex program from Theorem~\ref{thm:convex-program}: $\min_{0<\alpha\le 1} \max_{x\in X}  L(\alpha,x)$, where $L\coloneqq \sum_i \bigl(\alpha_i v_i(x_i) - B_i \log \alpha_i\bigr).$
    We first derive a lower bound on the pacing multipliers. Let $\alpha^*$ be an optimal solution of the pacing convex program. By the bounded valuations assumption, there exists $0<\bar V<\infty$ such that $0\le v_i(x_i)\le \bar V$ for all $i$ and all $x\in X$. We first show that for every $i\in N$, $\alpha_i^* \in [\ell_i,\ 1],$ where $\ell_i := \min\Bigl\{1,\frac{B_i}{\bar V}\Bigr\}.$
    If $\alpha_i^*=1$ the bound is trivial. Otherwise $0<\alpha_i^*<1$ and budget feasibility at \abspe gives $B_i = p_i(\alpha^*v_i) = \alpha_i^*\,v_i\bigl(x_i(\alpha^*)\bigr) \le \alpha_i^*\,\bar V$. Hence $\alpha_i^* \ge B_i/\bar  V$, and intersecting with $(0,1]$ yields the lower bound.
    
    Therefore, we can restrict the outer minimization w.l.o.g to the compact set $\mathcal B:= \prod_i[\ell_i,1]$.
    Since $X,\mathcal B$ are convex and compact, $v_i$ is affine, and $-\log$ is convex, then $L$ is convex in $\alpha$ and concave (linear) in $x$, and is continuous in $(\alpha,x)$. By Sion's minimax theorem, $\min_{\alpha\in \mathcal B}\ \max_{x\in X} L(\alpha,x) = \max_{x\in X}\ \min_{\alpha\in\mathcal B} L(\alpha,x).$
    
    Fix $x\in X$. The inner minimization separates:
    \(
    \min_{\alpha\in\mathcal B}L(\alpha,x)
    = \sum_i \min_{\alpha_i\in[\ell_i,1]}\bigl(\alpha_i v_i(x_i) - B_i\log\alpha_i\bigr).
    \)
    Define, for $u=v_i(x_i)$,
    \(
    \psi_i(u):=\min_{0<\alpha_i\le 1}\bigl(\alpha_i u - B_i\log\alpha_i\bigr),
    \)
    For $u\ge 0$, the minimizer over $(0,1]$ is
    $\hat\alpha_i(u)=\min\{1,B_i/u\}$ (with $\hat\alpha_i(0)=1$). 
    Since $u\le \bar V$, we have $\hat\alpha_i(u)\ge \min\{1,B_i/\bar V\}=\ell_i$, so the domain restriction $\alpha_i\in[\ell_i,1]$ doesn't change the optimum and the minimum value equals
    \[
    \psi_i(u) =
    \begin{cases}
    u, & 0\le u\le B_i,\\
    B_i - B_i\log B_i + B_i\log u, & u\ge B_i.
    \end{cases}
    \]
    Therefore the dual problem can be written as
    \(
    \max_{x\in X}\ \sum_i \psi_i\bigl(v_i(x_i)\bigr).
    \)
    We now show that this dual is equivalent (up to an additive constant independent of $x$) to the EG-style program in the statement.
    Fix $x\in X$ and consider the inner maximization over $(u,\delta)$: $\max_{u\ge 0,\delta}\ \sum_i \bigl( B_i \log u_i - \delta_i \bigr)$ subject to $0 \le \delta_i \le B_i$ and $u_i \le v_i(x_i)+\delta_i$. 
    Because $\log$ is increasing, at optimum we must have
    $u_i = v_i(x_i)+\delta_i$, so the problem reduces, for each $i$, to
    maximizing over $\delta_i\in[0,B_i]$ the concave function $f_i(\delta_i)=B_i\log\bigl(v_i(x_i)+\delta_i\bigr) - \delta_i.$
    Writing $u=v_i(x_i)$ and $y=u+\delta_i$, we have $ f_i(\delta_i) \;=\; B_i\log y - (y-u),\ \
      y\in[u,u+B_i].$
    The derivative with respect to $y$ is $f_i'(y) = \frac{B_i}{y} - 1$, which vanishes at $y=B_i$. There are two cases:
    \begin{itemize}
      \item If $u \le B_i$, then $B_i\in[u,u+B_i]$, so the optimum is at
        $y=B_i$, i.e., $\delta_i^* = B_i - u$, and
        \(
          \max_{\delta_i\in[0,B_i]} f_i(\delta_i)
          = B_i\log B_i - (B_i-u)
          = u + \bigl(B_i\log B_i - B_i\bigr).
        \)
      \item If $u \ge B_i$, then the feasible interval $[u,u+B_i]$ lies to the right of $B_i$, so $f_i'(y)\le 0$ on this interval and the optimum is achieved at $y=u$, i.e., $\delta_i^* = 0$, with
        \(
          \max_{\delta_i\in[0,B_i]} f_i(\delta_i)
          = B_i\log u
          = \bigl(B_i\log u + B_i - B_i\log B_i\bigr)
            + \bigl(B_i\log B_i - B_i\bigr).
        \)
    \end{itemize}
    In both cases we have
    \(
      \max_{\delta_i\in[0,B_i]} f_i(\delta_i)
      = \psi_i(u) + C_i,
    \)
    where $C_i = B_i\log B_i - B_i$ is a constant independent of $u$. Summing over $i$, we get
    \(
      \max_{u,\delta} \sum_i \bigl(B_i\log u_i - \delta_i\bigr)
      = \sum_i \psi_i\bigl(v_i(x_i)\bigr) + \sum_i C_i,
    \)
    and the additive constant $\sum_i C_i$ does not affect the maximizer $x$. Thus, maximizing the EG-style program over $(x,u,\delta)$ is equivalent to maximizing $\sum_i \psi_i(v_i(x_i))$ over $x\in X$, so it yields the same optimal allocation $x^*$ as the dual pacing problem.
    Finally, by strong duality and Theorem~\ref{thm:convex-program}, this allocation $x^*$ coincides with the allocation at an \abspe $\alpha^*$ of $\mathcal M$, which proves the claim.
\end{proof}


\subsubsection{Proof of \Cref{lem:abspetoCP}}
\PYBAbspeToCP*
\begin{proof}
    Let $\phi(\beta):=\max_{x\in X}\sum_i \beta_i v_i(x_i)$. For any fixed $\beta$, the objective $\sum_i \beta_i v_i(x_i)$ is continuous (since each $v_i$ is affine) and $X$ is compact, hence the maximum is attained.
    Moreover, Danskin's theorem yields $\partial \phi(\beta)=\operatorname{conv}\Bigl\{v(x): x\in \arg\max_{x\in X}\sum_i \beta_i v_i(x_i)\Bigr\}$. In particular, since $x(\alpha)\in\arg\max_{x\in X}\sum_i \alpha_i v_i(x_i)$, we have $g:=v(x(\alpha))\in\partial \phi(\alpha)$.
    Consider $F(\beta):=\phi(\beta)-\sum_i B_i\log \beta_i$ over $\beta\in(0,1]^n$.
    We verify the KKT conditions for $\min_{\beta\le 1}F(\beta)$ at $\alpha$.
    Introduce multipliers $\lambda\in\mathbb R^n_{\ge 0}$ for the constraints $\beta_i\le 1$. KKT stationarity requires a subgradient $g\in\partial\phi(\alpha)$ such that, for each $i$, $g_i-\frac{B_i}{\alpha_i}+\lambda_i=0,$ together with feasibility $\alpha\le 1$, dual feasibility $\lambda\ge 0$, and complementary slackness $\lambda_i(1-\alpha_i)=0$.
    
    In the bid-maximizing pay-your-bid mechanism, the induced payment satisfies $p_i=\alpha_i v_i(x_i(\alpha))=\alpha_i g_i$. By budget feasibility, $p_i\le B_i$ for all $i$, and by no-unnecessary-pacing, if $\alpha_i<1$ then $p_i=B_i$. This means that $\alpha_i<1 \implies g_i=\frac{B_i}{\alpha_i},$ and $\alpha_i=1 \implies g_i\le B_i=\frac{B_i}{\alpha_i}.$
    Define $\lambda_i:=0$ when $\alpha_i<1$, and $\lambda_i:=B_i-g_i\ge 0$ when $\alpha_i=1$. Then the KKT conditions are satisfied, and $\alpha$ is optimal for (CP1).
\end{proof}

\subsubsection{Proof of \Cref{thm:abspe-lw}}
\abspeLW*
\begin{proof}
Consider a bid-maximizing pay-your-bid mechanism with \abspe $\alpha^*$ and corresponding allocation $x^*$. We prove the following LW guarantees for $x^*$. 
\begin{enumerate}
    \item \textit{Multiplicative guarantee.} For each buyer $i$, let $P_i:= \alpha_i^*v_i(x_i^*)$ denote her payment at the \abspe. Partition the buyers into paced and unpaced buyers: $T:=\{i\in N:P_i=B_i\}$, and $U:=\{i\in N:P_i<B_i\}.$ By the no-unnecessary-pacing condition, $\alpha_i^*=1$ for every $i\in U$.
    Fix any feasible allocation $y\in X$. By the definition of liquid welfare,
    \begin{align}
    \mathrm{LW}(y) &= 
    \sum_{i\in T}\min\{v_i(y_i),B_i\} + \sum_{i\in U}\min\{v_i(y_i),B_i\} \nonumber\\ 
    &\le \sum_{i\in T}B_i + \sum_{i\in U}v_i(y_i). 
    \label{eq:bmpyb-lw-decomposition}
    \end{align}
    Since $\alpha_i^*=1$ for every $i\in U$ and valuations are nonnegative, then 
    \begin{align}
    \sum_{i\in U}v_i(y_i)
    &= \sum_{i\in U}\alpha_i^*v_i(y_i) \nonumber\\
    &\le \sum_{i\in N}\alpha_i^*v_i(y_i) \nonumber\\
    &\le \sum_{i\in N}\alpha_i^*v_i(x_i^*) = \sum_{i\in N}P_i.
    \label{eq:bmpyb-bid-max}
    \end{align}
    The second inequality follows because the mechanism is bid-maximizing, so $x^*$ maximizes $\sum_i \alpha_i^*v_i(x_i)$ over all $x\in X$.

    We next observe that $P_i=\mathrm{LW}_i(x_i^*)$ for every buyer $i$. Indeed, if $P_i<B_i$, then $\alpha_i^*=1$ by no unnecessary pacing, and therefore $P_i=v_i(x_i^*)<B_i,$ $P_i=\min\{v_i(x_i^*),B_i\} =\mathrm{LW}_i(x_i^*).$ 
    If $P_i=B_i$, then $P_i=\alpha_i^*v_i(x_i^*)\le v_i(x_i^*),$ since $\alpha_i^*\le1$, and hence $\mathrm{LW}_i(x_i^*) = \min\{v_i(x_i^*),B_i\} = B_i = P_i.$ Thus, $\sum_{i\in N}P_i = \mathrm{LW}(x^*).$
    
    Moreover, for every paced buyer $i\in T$, $B_i=P_i=\mathrm{LW}_i(x_i^*),$ and therefore $\sum_{i\in T}B_i \le \mathrm{LW}(x^*).$
    
    Combining the last two inequalities with \eqref{eq:bmpyb-lw-decomposition} and \eqref{eq:bmpyb-bid-max}, we obtain $\mathrm{LW}(y) \le 2\,\mathrm{LW}(x^*)$ for every feasible allocation $y\in X$. Taking $y$ to be liquid-welfare optimal gives $\mathrm{OPT}_{\mathrm{LW}} \le 2\,\mathrm{LW}(x^*),$ or equivalently, $\mathrm{LW}(x^*) \ge \frac12\mathrm{OPT}_{\mathrm{LW}}.$

    \item \textit{Additive guarantee.} We first prove the following claim: for any allocation $x\in X$, $\mathrm{LW}(x)\le \phi(x).$
    Fix a buyer $i$ and an allocation $x\in X$, and write $u := v_i(x_i)\in[0,\bar V]$. If $0\le u\le B_i$, then $\psi_i(u)=u=\mathrm{LW}_i(x_i).$ If $u\ge B_i$, then  $\mathrm{LW}_i(u) = B_i$ and $\psi_i(u)=B_i+B_i\log\frac{u}{B_i}\ge B_i=\mathrm{LW}_i(x_i).$ Summing over buyers gives $\mathrm{LW}(x)\le \phi(x).$

    We are now ready to prove the LW additive approximation bound. We evaluate $\phi$ at the equilibrium allocation $x^*$. Fix a buyer $i$. There are two cases:
    \begin{itemize}
        \item If $\alpha_i^*=1$, then, since the mechanism is pay-your-bid, $p_i^*=\alpha_i^*v_i(x_i^*)=v_i(x_i^*).$ Budget feasibility implies $v_i(x_i^*)\le B_i$, and hence $\psi_i\bigl(v_i(x_i^*)\bigr)=v_i(x_i^*)=\mathrm{LW}_i(x_i^*).$
        Moreover, $B_i\log\frac{1}{\alpha_i^*}=0.$

        \item If $\alpha_i^*<1$, then by the no-unnecessary-pacing condition, buyer $i$ exhausts her budget, so $B_i=p_i^*=\alpha_i^*v_i(x_i^*)< v_i(x_i^*).$ Thus $v_i(x_i^*)=\frac{B_i}{\alpha_i^*}\ge B_i,$ and therefore $\psi_i\bigl(v_i(x_i^*)\bigr) = B_i+B_i\log \frac{v_i(x_i^*)}{B_i} = B_i+B_i\log\frac{1}{\alpha_i^*} = \mathrm{LW}_i(x_i^*)+ B_i\log\frac{1}{\alpha_i^*}.$
    \end{itemize}
    Hence, in both cases, $\psi_i\bigl(v_i(x_i^*)\bigr) = \mathrm{LW}_i(x_i^*) + B_i\log\frac{1}{\alpha_i^*}.$ Summing over buyers yields 
    \begin{equation}
    \phi(x^*)=\mathrm{LW}(x^*)+ \sum_i B_i\log\frac{1}{\alpha_i^*}.
    \label{eq:phi-lw-equilibrium}
    \end{equation}

    Let $x^{\mathrm{LW}}\in\arg\max_{x\in X}\mathrm{LW}(x)$ be a liquid-welfare-optimal allocation. By~\Cref{prop:eg-dual}, $x^\ast$ maximizes $\phi(x) = \sum_i \psi_i(v_i(x_i))$ over $x\in X$, so $\phi(x^\ast) \ge \phi(x^{\mathrm{LW}}).$
    Using the lower bound $\phi(x)\ge \mathrm{LW}(x)$ and \eqref{eq:phi-lw-equilibrium}, we obtain $\mathrm{LW}(x^*)+ \sum_i B_i\log\frac{1}{\alpha_i^*} = \phi(x^*) \ge \phi(x^{\mathrm{LW}}) \ge \mathrm{LW}(x^{\mathrm{LW}}) = \mathrm{OPT}_{\mathrm{LW}}.$ Rearranging gives $\mathrm{LW}(x^*) \ge \mathrm{OPT}_{\mathrm{LW}} - \sum_i B_i\log\frac{1}{\alpha_i^*},$ as claimed.    
\end{enumerate}
Combining both guarantees yields 
\(
\mathrm{LW}(x^*) \ge \mathrm{OPT}_{\mathrm{LW}} - \min\left\{ \frac12\mathrm{OPT}_{\mathrm{LW}}, \sum_{i\in N}B_i\log\frac1{\alpha_i^*}\right\}.
\)
\end{proof}

\subsubsection{Proof of \Cref{lem:lw-tight-exp}}
\PYPfppeBound*
\begin{proof}
The construction adapts the two-bidder example from Theorem~5.2 of~\citet{fikioris2023liquid}.
Consider $m$ goods and two buyers with budgets $B_1=B_2=1$.
Valuations are uniform across goods: for all $j\in[m],$ we have $v_{1j}=1, \,v_{2j}=\varepsilon$, where $\varepsilon=1/m$.
Consider the pacing vector $\alpha=(\varepsilon, 1)$, and the allocation $x^*$ which allocate every good to buyer~1 (i.e., $x^*_{1j}=1$ and $x^*_{2j}=0$ for all $j$), breaking ties in favor of buyer~1. We verify that $(\alpha,x^*)$ constitutes a \fppe of the market. 

Define prices by $p_j=\max_i \alpha_i v_{ij}$, so for every good $j$,
\(
p_j = \max\{\alpha_1 v_{1j},\,\alpha_2 v_{2j}\}
     = \max\{\varepsilon\cdot 1,\,1\cdot \varepsilon\}
     = \varepsilon >0.
\)
Thus each good has strictly positive price and must be fully allocated in any \fppe outcome. Moreover, both buyers are tied as highest bidders on every good since $\alpha_1 v_{1j}=p_j=\alpha_2 v_{2j}$.
Hence, $x^*$ is consistent with the \fppe conditions: each good with $p_j>0$ is fully sold, and it is assigned only to (tied) highest bidders.
Buyer~1's total spend is
\(
P_1 = \sum_{j=1}^m p_j x^*_{1j} = \sum_{j=1}^m \varepsilon = m\varepsilon = 1 = B_1,
\)
and buyer~2 spends $P_2=0<B_2$.
By the ``no unnecessary pacing'' requirement in the definition of \textsf{FPPE}, any buyer who is not budget-tight must have multiplier~$1$, and indeed $\alpha_2=1$. Hence $(\alpha,x^*)$ is an \fppe outcome.

We compute the liquid welfare of the \fppe allocation. Since buyer~1 receives all $m$ goods, then
\(
\mathrm{LW}(x^*)
= \min\Bigl\{\sum_{j=1}^m v_{1j}x^*_{1j},\,B_1\Bigr\}
+ \min\Bigl\{\sum_{j=1}^m v_{2j}x^*_{2j},\,B_2\Bigr\}
= \min\{m,1\} + \min\{0,1\} = 1.
\)

Next, to compute optimal liquid welfare, consider the allocation $y$ that gives buyer~1 one good and buyer~2 the remaining $m-1$ goods. Then buyer~1 obtains value $1$ and buyer~2 obtains value $(m-1)\varepsilon$.
Therefore,
\(
\mathrm{LW}(y)
= \min\{1,B_1\} + \min\{(m-1)\varepsilon,B_2\}
= 1 + (m-1)\varepsilon
= 1 + \Bigl(1-\varepsilon\Bigr)
= 2-\varepsilon.
\)
Since $\max_{y\in X}\mathrm{LW}(y)\ge \mathrm{LW}(y)=2-\varepsilon$, we get
\(
\frac{\mathrm{LW}(x^*)}{\max_{y\in X}\mathrm{LW}(y)}
\le \frac{1}{2-\varepsilon}.
\)
On the other hand, in this instance $\max_{y\in X}\mathrm{LW}(y)=2-\varepsilon$ (buyer~1 is capped at $B_1=1$ and buyer~2's total value is at most $m\varepsilon=1$), so the inequality is in fact an equality, yielding the claimed ratio $1/(2-\varepsilon)$.
\end{proof}
\subsection{Missing Proofs from Section~\ref{sec:abstract-mech}}

\subsubsection{Proof of \Cref{prop:cont-alloc-pymt}}
\ContinuousPayments*
\begin{proof}
Consider the measurable space $\mathcal H = [0,1].$ Recall that if $F$ and $F'$ are two probability distributions defined on $\mathcal H$ with density functions $f$ and $f'$ respectively, then the TV distance is defined by
\[
\|F'-F\|_{TV}
= \frac12\int_{0}^{1}\bigl|f'(x)-f(x)\bigr|\,dx
= \sup_{A\subseteq[0,1]}\bigl|F'(A)-F(A)\bigr|
= \sup_{\substack{0\le g\le1\\\|g\|_\infty\le1}}
\bigl|\mathbb E_{F'}[g]-\mathbb E_F[g]\bigr|,
\]
for $g:\mathcal H \to \mathbb R$ a bounded measurable function on $[0,1]$ (Theorem 4.3 and Example 6.1 in~\cite{wu2017lecture}). 
It follows that $\bigl|\mathbb E_{F'}[g] -\mathbb E_F[g]\bigr| \le \|g\|_{\infty}\,\|F'-F\|_{TV}$.

Moreover, at each value of $x$, we have $|f'(x)-f(x)|=\max\{f'(x),f(x)\}-\min\{f'(x),f(x)\}$. Since both functions integrate to 1, we have 
\[ \int \max\{f'(x),f(x)\}dx = \int \{f'(x)+f(x)\}dx-\int\min\{f'(x),f(x)\}dx = 2- \int\min\{f'(x),f(x)\}dx.
\] Putting these together, we get \[
\int |f'(x)-f(x)| dx = \int [\max\{f'(x),f(x)\}-\min\{f'(x),f(x)\}] dx= 2-2\int \min\{f'(x),f(x)\}dx.
\]
Therefore, $\|F'-F\|_{TV} = \frac12 \int_0^1 |f'(x)-f(x)| dx = 1-\int \min\{f'(x),f(x)\} dx.$

Consider a buyer $i$, and fix the bids $b_{-i}$ of the other buyers. Fix any $\epsilon_{-i}\in[1-\delta,1]^{n-1}$ and write $\epsilon=(\epsilon_i,\epsilon_{-i})$. Let $g:\mathcal H\rightarrow \mathbb R$ denote the allocation or payment function of buyer $i$ under $\epsilon_{-i}$, that is 
$g_{\epsilon_{-i}}(y)= x_i(y\,v_i,\,\alpha_{-i}\epsilon_{-i} v_{-i})$ or 
$g_{\epsilon_{-i}}(y)= p_i(y\,v_i,\,\alpha_{-i}\epsilon_{-i} v_{-i})$. Both these functions are bounded: $X\subseteq\mathbb R^d$ is compact and $x(\cdot)\in X$ so $x_i(\cdot)$ is bounded, and by individual rationality (IR) of the mechanism, the payment satisfies $0\le g_{\epsilon_{-i}}(y)\le y\,\bar V \le \bar V$ for all $y\in[0,1]$.
Let $F_{\alpha_i}$ denote the probability distribution of $\alpha_i\epsilon_i$ where $\epsilon_i\sim \mathrm{Unif}[1-\delta,1]$.
We will show that $\tilde x_i(\alpha_i,\alpha_{-i})$ and $\tilde p_i(\alpha_i,\alpha_{-i})$ are continuous in $\alpha_i$ on $(0,1]$,
and that $\tilde p_i(\alpha_i,\alpha_{-i})$ additionally extends continuously to $\alpha_i=0$.

Fix $\alpha_i\in (0,1]$ and let $\alpha_i'=\alpha_i+\mu\in (0,1]$ with $|\mu| <  \frac{\delta}{1-\delta}\alpha_i$.
Then $F_{\alpha_i}=\mathrm{Unif}(I)$ and $F_{\alpha_i'}=\mathrm{Unif}(I')$, where
$I=[(1-\delta)\alpha_i,\alpha_i]$, and $I'=[(1-\delta)\alpha_i',\alpha_i'].$
Let $L=|I|=\delta\alpha_i$ and $L'=|I'|=\delta\alpha_i'$. The corresponding densities are
$f_{\alpha_i}(x)=\frac{1}{L}\mathbf 1_I(x)$ and $f_{\alpha_i'}(x)=\frac{1}{L'}\mathbf 1_{I'}(x)$.
We have
\(
\|F_{\alpha_i'} - F_{\alpha_i}\|_{TV}
=1 - \int_{[0,1]}\min\{f_{\alpha_i'}(x),f_{\alpha_i}(x)\}\,dx
=1 - \int_{I\cap I'}\min\bigl\{f_{\alpha_i}(x),f_{\alpha_i'}(x)\bigr\}\,dx, 
\)
where the first equality follows from the definition of TV distance involving the minimum, and the second is due to the fact that the minimum is zero outside $I\cap I'$. 

Next, note that $I\cap I' = 
\bigl[\max\{(1-\delta)\alpha_i,\,(1-\delta)(\alpha_i+\mu)\},
      \min\{\alpha_i,\;\alpha_i+\mu\}\bigr]
= \bigl[(1-\delta)(\alpha_i+\mu),\;\alpha_i\bigr],$
and on this interval $f(x)=1/L$, $f'(x)=1/L'$ with $L'>L$. Hence, we have
$\min\{f_{\alpha_i}(x),f_{\alpha_i'}(x)\} = \frac1{L'}$ for $x\in I\cap I'$, and $|I\cap I'| = \alpha_i - (1-\delta)(\alpha_i+\mu)
= \delta\,\alpha_i - (1-\delta)\mu.$
Therefore $\|F_{\alpha_i'}-F_{\alpha_i}\|_{TV}
= 1 - \frac{|I\cap I'|}{L'}
= 1 - \frac{\delta\,\alpha_i - (1-\delta)\mu}{\delta\,(\alpha_i+\mu)}
= \frac{\mu}{\delta\,(\alpha_i+\mu)},$ which tends to zero as $\mu\to0$ for all $\epsilon_{-i}$. Combining with the established total‐variation bound, it follows that for any sequence $\alpha^k\to\alpha$ we have
$\mathbb E_{\epsilon_i}\bigl[g_{\epsilon_{-i}}(\alpha^k\epsilon_i)\bigr]
\to \mathbb E_{\epsilon_i}\bigl[g_{\epsilon_{-i}}(\alpha\epsilon_i)\bigr]$
for every $\epsilon_{-i}$.
Finally, taking expectation over $\epsilon_{-i}$ and using dominated convergence gives continuity of 
$E_{\epsilon_i,\epsilon_{-i}}\!\bigl[x_i(\alpha\epsilon_i v_i,\alpha_{-i}\epsilon_{-i}v_{-i})\bigr]$ and 
$E_{\epsilon_i,\epsilon_{-i}}\!\bigl[p_i(\alpha\epsilon_i v_i,\alpha_{-i}\epsilon_{-i}v_{-i})\bigr]$ on (0,1]. 

It remains to handle $\alpha_i=0$ for payments. For any realization of $\epsilon$, let $\tilde b=(\alpha_i\epsilon_i v_i,\alpha_{-i}\epsilon_{-i}v_{-i})$, then IR gives
\(
0\le p_i(\tilde b)\le \tilde b_i(x_i(\tilde b))=\alpha_i\epsilon_i\,v_i(x_i(\tilde b))\le \alpha_i\epsilon_i \bar V\le \alpha_i\bar V.
\)
Taking expectations yields $0\le \tilde p_i(\alpha_i,\alpha_{-i})\le \alpha_i\bar V$, and hence $\tilde p_i(\alpha_i,\alpha_{-i})\to 0$ as $\alpha_i\downarrow 0$. Moreover, at $\alpha_i=0$, we have $0\le p_i(0,\alpha_{-i}\tilde v_{-i})\le 0$, so $\tilde p_i(0,\alpha_{-i})=0$. Therefore $\tilde p_i$ is continuous at $\alpha_i=0$ as well.
\end{proof}

\subsubsection{Proof of \Cref{prop:log-lip}}
\logLip*
\begin{proof}
    Fix a buyer $i$, and assume w.l.o.g. that $\alpha_i' > \alpha_i$. 
    We first prove a one-coordinate bound. Fix multipliers $\alpha_{-i}$ of the opponents. 
    From the proof of~\Cref{prop:cont-alloc-pymt}, we use the total‐variation bound in applied to the payment function, and the equality $\|F_{\alpha_i'}-F_{\alpha_i}\|_{TV} = \frac{\mu}{\delta\,(\alpha_i+\mu)},$ with the fact that it holds for every $\epsilon_{-i}$, to get $\bigl|\mathbb E_{\epsilon}[p_i(\alpha_i'\epsilon_i v_i,\alpha_{-i}\epsilon_{-i} v_{-i})] -\mathbb E_{\epsilon}[p_i(\alpha_i\epsilon_i v_i,\alpha_{-i}\epsilon_{-i} v_{-i})]\bigr| 
    \le \|p_i\|_\infty\;\frac{|\alpha_i'-\alpha_i|}{\delta\,(\alpha_i')} 
    \le \frac{\bar V}{\delta}\Bigl|1-\frac{\alpha_i}{\alpha_i'}\Bigr| 
    \le \frac{\bar V}{\delta}\bigl|\log \alpha_i' - \log \alpha_i\bigr|,$
    where the second inequality follows from individual rationality of the mechanism, and the third inequality uses $1-e^{-y}\le y$ with $y=\log(\alpha_i'/\alpha_i)$. 
    To obtain the joint bound, interpolate between $\alpha$ and $\alpha'$ by changing one coordinate at a time:
    let $\alpha^{(0)}=\alpha$ and for $k=1,\dots,n$ define
    \(
    \alpha^{(k)} := (\alpha_1',\dots,\alpha_k',\alpha_{k+1},\dots,\alpha_n).
    \)
    There, we obtain
    \(
    |\tilde p_i(\alpha')-\tilde p_i(\alpha)|
    \le \sum_{k=1}^n |\tilde p_i(\alpha^{(k)})-\tilde p_i(\alpha^{(k-1)})|
    \le \frac{\bar V}{\delta}\sum_{k=1}^n |\log \alpha_k'-\log \alpha_k|
    \le \frac{n\bar V}{\delta}\,\|\log \alpha'-\log \alpha\|_\infty.
    \)
\end{proof}


\subsubsection{Proof of \Cref{lem:compact-space}}
\compactness*
\begin{proof}
Since $\mathcal A_\delta\subseteq[0,1]^n$, it is bounded. Define $\Phi:[0,1]^n\to\mathbb R^n$ by $\Phi(\alpha)=(\tilde p_i(\alpha))_{i\in N}$. We claim that $\Phi$ is continuous on $[0,1]^n$. 
On the interior $(0,1]^n$, continuity follows from~\Cref{prop:log-lip}.
To extend continuity to boundary points, fix $\bar\alpha\in[0,1]^n$ and let $\alpha^t\to\bar\alpha$. Let $Z:=\{i\in N:\ \bar\alpha_i=0\}$.
Write $b^t=\alpha^t\epsilon v$ and $\bar b=\bar\alpha\,\epsilon v$ for the corresponding (random) bid profiles under $\epsilon\sim \mathrm{Unif}[1-\delta,1]^n$. For each fixed realization of $\epsilon$, we have $b^t\to \bar b$ coordinate-wise, and all profiles lie in the bounded bid region.
Fix a buyer~$i\in N$.
\begin{itemize}
     \item If $i\in Z$, then for any realization of $\epsilon$, individual rationality gives \(
     0\le p_i(\alpha^t\epsilon v)\le \alpha_i^t\epsilon_i v_i(x_i(\alpha^t\epsilon v))\le \alpha_i^t\bar V.
     \)
     Since $\alpha_i^t\to 0$, we have $p_i(\alpha^t\epsilon v)\to 0$ for every $\epsilon$. Moreover, at $\bar\alpha_i=0$, individual rationality gives $p_i(\bar\alpha\,\epsilon v)=0$. Since $0\le p_i(\alpha^t\epsilon v)\le \bar V$ for all $t$ and $\epsilon$, dominated convergence yields \(
     \tilde p_i(\alpha^t v)=\mathbb E_\epsilon[p_i(\alpha^t\epsilon v)]\to 0 = \tilde p_i(\bar\alpha v).
     \)
     \item If $i\notin Z$, consider the bid vectors $(b_{-Z}^t,b_Z^t)$ and $\bar b=(\bar b_{-Z},0^Z)$. For each fixed $\epsilon$, the vanishing-bids continuity assumption gives $p_i(b_{-Z}^t,b_Z^t)\to p_i(\bar b_{-Z},0^Z)$. Since $0\le p_i(b^t)\le \alpha_i^t\epsilon_i\bar V\le \bar V$ for all $t$ and $\epsilon$, dominated convergence gives
    \(
    \tilde p_i(\alpha^t v)=\mathbb E_\epsilon[p_i(b^t)]
    \to \mathbb E_\epsilon[p_i(\bar b)] =\tilde p_i(\bar\alpha v).
    \)
\end{itemize}
Thus each $\tilde p_i$ is continuous on $[0,1]^n$, hence $\Phi$ is continuous. Therefore, $\mathcal A_\delta=\Phi^{-1}\!\bigl(\prod_{i\in N}[0,B_i]\bigr)$ is closed. As a closed subset of the compact set $[0,1]^n$, $\mathcal A_\delta$ is compact.
\end{proof}

\subsubsection{Proof of \Cref{prop:pareto-dominant-BFM}}
\ParetoDominantBFM*
\begin{proof}
The set $\mathcal A_\delta$ is closed under component-wise maximum (\Cref{lem:join-closure} applies verbatim to $\tilde p$). For each $i$, define the coordinate-wise supremum $(\alpha^{\sup})_i := \sup\{\alpha_i:\ \alpha\in \mathcal A_\delta\}$.
Fix $\rho>0$. For each $i$ there exists $\alpha^{(i)}\in\mathcal A_\delta$ such that $\alpha^{(i)}_i>(\alpha^{\sup})_i-\rho$. Let $\alpha^\rho:=\bigvee_{i=1}^n \alpha^{(i)}$. By join-closure, $\alpha^\rho\in\mathcal A_\delta$, and by construction $\alpha_i^\rho>(\alpha^{\sup})_i-\rho$ for all $i$.
Now take $\rho_k=1/k$ and consider $\alpha^{(k)}:=\alpha^{\rho_k}\in\mathcal A_\delta$. By Lemma~\ref{lem:compact-space}, $\mathcal A_\delta$ is compact, so $\{\alpha^{(k)}\}$ has a convergent subsequence. Relabel this subsequence as $\{\alpha^{(k)}\}$ and denote its limit by $\alpha^*\in\mathcal A_\delta$. For each $i$, we have $\alpha_i^{(k)}>(\alpha^{\sup})_i-\rho_k$, hence letting $k\to\infty$, gives $\rho_k\to 0$, which implies $\alpha_i^*\ge (\alpha^{\sup})_i$.
Since $\alpha^*\in\mathcal A_\delta$, also $\alpha_i^*\le (\alpha^{\sup})_i$ by definition of the supremum. Thus $\alpha_i^*=(\alpha^{\sup})_i$ for all $i$, and consequently $\alpha^*$ is coordinate-wise greatest in~$\mathcal A_\delta$.
\end{proof}

\subsubsection{Proof of \Cref{prop:existence-delta}}
\ExistenceDelta*
\begin{proof}
Consider the smoothed mechanism $\tilde{\mathcal M}_\delta$. By~\Cref{prop:cont-alloc-pymt}, for every buyer $i$ the function $\tilde p_i(\alpha_i,\alpha_{-i})$ is continuous in $\alpha_i$ for fixed $\alpha_{-i}$.
Suppose for contradiction that buyer $i$ is unnecessarily paced under $\alpha^*$, i.e., $\alpha_i^*<1$ and $\tilde p_i(\alpha^*)<B_i$. Let $\omega:= B_i-\tilde p_i(\alpha^*)>0$ be the leftover budget. By continuity of $\tilde p$ in $\alpha_i$, there exists $0<\Delta<1-\alpha_i^*$ such that for all $\hat\alpha_i\in(\alpha_i^*,\,\alpha_i^*+\Delta)$, $\tilde p_i(\hat\alpha_i,\alpha_{-i}^*)
< \tilde p_i(\alpha_i^*,\alpha_{-i}^*)+\omega= B_i.$
Pick any $\Delta'\in(0,\Delta)$ and define $\hat\alpha$ by $\hat\alpha_i=\alpha_i^*+\Delta'$ and
$\hat\alpha_{-i}=\alpha_{-i}^*$. Then by continuity, buyer $i$ remains budget-feasible, so $\tilde p_i(\hat \alpha_i,\hat \alpha_{-i})<B_i$.
Moreover, since $\hat\alpha_i>\alpha_i^*$, buyer $i$'s bid increases while all other bids are unchanged. Moreover, increasing $\alpha_i$ only raises buyer $i$'s bid while leaving all other bids fixed. Since Assumption~\ref{ass:pymt-opp-inverse-mnt} is preserved under smoothing (and hence holds for $\tilde p$), for every $j\ne i$ we have $\tilde p_j(\hat\alpha)\le \tilde p_j(\alpha^*)\le B_j$. Therefore $\hat\alpha\in\mathcal A_\delta$. This contradicts the maximality of $\alpha^*$, since $\hat\alpha\ge \alpha^*$ and $\hat\alpha\ne \alpha^*$. Hence, no buyer is unnecessarily paced under $\alpha^*$.
\end{proof}

\subsubsection{Proof of \Cref{thm:existence}}
\existence*
\begin{proof}
Fix a perturbation parameter $0<\delta<1$, and perturb mechanism $\mathcal M$ as described in~\Cref{section:mechanism-perturb}.
By~\Cref{prop:existence-delta}, the smoothed mechanism $\tilde{\mathcal M_{\delta}}$ admits an \textsf{Abs-PE} $\alpha^{(\delta)}$ with corresponding outcome rule $(x^{(\delta)},\,p^{(\delta)})$.
Since $\alpha^{(\delta)}\in[0,1]^n$ and $(x^{(\delta)},p^{(\delta)})$ lie in the bounded allocation/payment region, we can extract a sequence $\delta_k\to0$ along which
$\bigl(\alpha^{(\delta_k)},x^{(\delta_k)},p^{(\delta_k)}\bigr)\rightarrow
  \bigl(\alpha^*,x^*,p^*\bigr).$
By continuity of the payment rule in the smoothed mechanism, $\alpha^*$ remains budget‐feasible, and the no-unnecessary-pacing condition carries over in the limit:
since each $\alpha^{(\delta_k)}$ is budget-feasible, we have $p^{(\delta_k)}_i \le B_i$ for all $i$, so in the limit $p^*_i \le B_i$. Moreover,
if $\alpha^*_i < 1$, then for all $k$, $\alpha^{(\delta_k)}_i < 1$ and $p^{(\delta_k)}_i = B_i$. So by continuity, $p^*_i = B_i$. 

Finally, since each pair $(x^{(\delta_k)},p^{(\delta_k)})$ is itself the expectation
$\mathbb{E}_{\epsilon\sim U[1-\delta_k,1]^n}\bigl[x(\alpha^{(\delta_k)}\cdot\epsilon\,v), p(\alpha^{(\delta_k)} \cdot \epsilon\,v)\bigr],$
then it lies in the convex hull of deterministic outcomes of $\mathcal M$ at bid profiles arbitrarily close to $\alpha^*\cdot v$. 
As $\delta_k \to 0$, the bids converge to $\alpha^* \cdot v = b^*$, so the limit point $(x^*, p^*)$ lies in the convex closure of the outcomes of $\mathcal{M}$ at $b^*$. More precisely, we have
\(
  (x^*,p^*) \in \hat X(b^*)
  =\mathrm{conv}\Bigl\{\lim_{b^n\to \alpha^*v}(x(b^n),\,p(b^n))\Bigr\}.
\)
Therefore, $(x^*,p^*)\in \hat X(b^*)$, is a valid tie-breaking outcome for $\mathcal M$ on input $b^*=\alpha^*\cdot v$. Therefore, by~\Cref{def:tie-break-corresp} there exists a tie-breaking rule for $\mathcal{M}$ such that the outcome at $\alpha^* \cdot v$ is exactly $(x^*, p^*)$.
Finally, since $\alpha^*$ is a budget-feasible vector with no unnecessarily paced buyers under the tie-breaking rule that selects $(x^*, p^*)$, then it is an \abspe of the original mechanism $\mathcal{M}$.
\end{proof}

\subsubsection{Proof of \Cref{prop:uniqueness}}
\uniquness*
\begin{proof}
    Let $\alpha^1$ and $\alpha^2$ be two \abspe pacing vectors for $\mathcal M'$. Define $\bar\alpha:=\max(\alpha^1, \alpha^2)$. By Assumption~\ref{ass:pymt-opp-inverse-mnt} applied to $p'$, \Cref{lem:join-closure} implies that $\bar \alpha$ is budget-feasible.
    Next, we show that \abspe vectors are maximal budget-feasible vectors. Consider $\alpha^1$: there is no budget-feasible $\beta$ with $\beta\ge \alpha^1$ (coordinate-wise) and $\beta\neq \alpha^1$. Indeed, if such a $\beta$ existed, let $G=\{i:\beta_i>\alpha^1_i\}$ (since $\beta\ne \alpha^1$ then $G\ne \emptyset$). Then $\beta_{-G}=\alpha^1_{-G}$ and $\alpha^1_i<1$ for all $i\in G$ (since $\beta_i\le 1$). Hence, by no-unnecessary-pacing, we have $p'_i(\alpha^1 v)=B_i$ for all $i\in G$ and hence $\sum_{i\in G}p'_i(\alpha^1 v)>0$. Assumption~\ref{ass:collec-pymt-mnt} applied to $p'$ on $(\beta,\alpha^1)$ and group $G$ yields $\sum_{i\in G}p'_i(\beta v)>\sum_{i\in G}p'_i(\alpha^1 v)=\sum_{i\in G}B_i$, so it must be that some $i\in G$ has $p'_i(\beta v)>B_i$, which contradicts feasibility of $\beta$. 
    The same argument shows $\alpha^2$ is maximal among budget-feasible vectors.
    Finally, since $\bar\alpha$ is budget-feasible and $\bar\alpha\ge \alpha^1$, maximality forces $\bar\alpha=\alpha^1$. Similarly, $\bar \alpha=\alpha^2$. Therefore, $\alpha^1=\alpha^2$, proving uniqueness of the equilibrium pacing vector.
\end{proof}

\subsubsection{Proof of~\Cref{lem:rev-max-tiebreak}}
\revMax*
\begin{proof}
Let $\bar\alpha$ be the unique \abspe of $\mathcal M'$ and let $\mathcal A'=\{\alpha\in[0,1]^n:\ p'_i(\alpha v)\le B_i\ \forall i\}$.
By~\Cref{prop:uniqueness}, $\bar\alpha$ is maximal in $\mathcal A'$, and by join-closure $\mathcal A'$ is closed under coordinate-wise maximum.
Fix any $\alpha\in\mathcal A'$ and set $\beta=\bar\alpha\vee\alpha$. Then $\beta\in\mathcal A'$ and $\beta\ge \bar\alpha$, so maximality forces $\beta=\bar\alpha$, i.e., $\bar\alpha\ge \alpha$. Thus $\bar\alpha$ is the coordinate-wise greatest element of $\mathcal A'$.
Applying Assumption~\ref{ass:monotone-rev} to the bid profiles $\bar\alpha v\ge \alpha v$ yields $\sum_i p'_i(\bar\alpha v)\ge \sum_i p'_i(\alpha v)$ for all $\alpha\in\mathcal A'$.
\end{proof}

\subsubsection{Proof of \Cref{prop:shill-proof}}
To establish shill-proofness of the revenue-maximizing \abspe, we first prove the following lemma. 
\begin{lemma}\label{lem:alpha-existence-by-continuity}
    Consider a $\delta$-smoothed mechanism $\tilde {\mathcal M_{\delta}}=(\tilde x, \tilde p)$, and a set of buyers with perturbed valuations and budgets profiles $\tilde v$ and $B$ respectively. Then, for any budget infeasible pacing vector $\alpha$, there exists a budget-feasible pacing vector $\hat\alpha$ such that $\sum_{i\in N}\tilde p_i(\hat\alpha v)\ge \sum_i \min\{\tilde p_i(\alpha v),\,B_i\}$.
\end{lemma}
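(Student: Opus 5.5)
The plan is to take $\hat\alpha$ to be the coordinate-wise greatest budget-feasible pacing vector lying below $\alpha$, and then to show the inequality buyer by buyer: $\tilde p_i(\hat\alpha v)\ge \min\{\tilde p_i(\alpha v),B_i\}$ for every $i$. This mirrors the maximum-element argument of \Cref{prop:pareto-dominant-BFM} and \Cref{prop:existence-delta}, localized to the box $[0,\alpha]$.

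\emph{Step 1 (construction).} Define $\mathcal A_\delta^{\alpha}:=\{\beta\in\mathcal A_\delta:\ \beta\le\alpha\}$.
\begin{itemize}
\item It is nonempty, because $\beta=0$ lies in it: individual rationality gives $\tilde p_i(0)=0\le B_i$.
\item It is compact, as the intersection of the compact set $\mathcal A_\delta$ (\Cref{lem:compact-space}) with the closed box $\prod_i[0,\alpha_i]$.
\item It is closed under componentwise maxima. By \Cref{lem:join-closure} (which holds for $\tilde p$ since Assumption~\ref{ass:pymt-opp-inverse-mnt} survives smoothing), the join of two feasible vectors is feasible, and the join of two vectors $\le\alpha$ is again $\le\alpha$.
\end{itemize}
Repeating the proof of \Cref{prop:pareto-dominant-BFM} verbatim then gives a coordinate-wise greatest element $\hat\alpha\in\mathcal A_\delta^\alpha$. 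By construction $\hat\alpha$ is budget-feasible and $\hat\alpha\le\alpha$.

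\emph{Step 2 (case split for each buyer $i$).}
\begin{itemize}
\item \emph{Case $\hat\alpha_i=\alpha_i$.} Buyer $i$'s bid is the same under $\hat\alpha$ and $\alpha$, and $\hat\alpha_{-i}\le\alpha_{-i}$. Assumption~\ref{ass:pymt-opp-inverse-mnt} for $\tilde p$ then gives $\tilde p_i(\hat\alpha v)\ge\tilde p_i(\alpha v)\ge\min\{\tilde p_i(\alpha v),B_i\}$.
\item \emph{Case $\hat\alpha_i<\alpha_i$.} We claim $\tilde p_i(\hat\alpha v)=B_i$. Suppose instead $\tilde p_i(\hat\alpha v)<B_i$. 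By continuity of $\tilde p_i$ in $\alpha_i$ (\Cref{prop:cont-alloc-pymt}), we can raise $\hat\alpha_i$ slightly, still staying below $\alpha_i$, while keeping $i$ within budget. Raising $i$'s multiplier weakly lowers every other buyer's payment by Assumption~\ref{ass:pymt-opp-inverse-mnt}, so all other buyers stay feasible. The new vector lies in $\mathcal A_\delta^\alpha$ and strictly dominates $\hat\alpha$, contradicting maximality. Hence $\tilde p_i(\hat\alpha v)=B_i\ge\min\{\tilde p_i(\alpha v),B_i\}$.
\end{itemize}
Summing over $i$ yields the claim. The argument in fact does not use infeasibility of $\alpha$.

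The only delicate step is the second case, which is exactly the continuity-plus-monotonicity perturbation from \Cref{prop:existence-delta}. The one extra point to check is that the perturbed multiplier remains below $\alpha_i$, which holds because $\hat\alpha_i<\alpha_i$ is strict and we may choose the step small enough.
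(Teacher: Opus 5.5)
Your proof is correct, but it takes a different route from the paper. The paper argues constructively. Starting from $\alpha$, it repeatedly picks an over-budget buyer and lowers only her multiplier until $\tilde p_i=B_i$, using the intermediate value theorem and $\tilde p_i(0,\alpha_{-i}v_{-i})=0$. By Assumption~\ref{ass:pymt-opp-inverse-mnt}, the capped revenue $\Phi(\alpha)=\sum_k\min\{\tilde p_k(\alpha v),B_k\}$ never decreases along the way, and the paper then passes to the limit of the coordinate-wise non-increasing sequence.

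You instead take $\hat\alpha$ to be the greatest element of the compact, join-closed set $\{\beta\in\mathcal A_\delta:\beta\le\alpha\}$. You then prove the stronger per-buyer bound $\tilde p_i(\hat\alpha v)\ge\min\{\tilde p_i(\alpha v),B_i\}$ by the same slack-perturbation argument as in \Cref{prop:existence-delta}.

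What your route buys is that budget feasibility of $\hat\alpha$ is automatic. In the paper's procedure the loop need not terminate. Concluding that the limit is feasible then implicitly requires that every buyer who stays over budget keeps being selected: with an arbitrary selection rule, one buyer could be starved while others are adjusted infinitely often. The paper leaves this point implicit.

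What the paper's route buys is an explicit ``throttle the overspenders'' procedure. The two constructions are closely linked. Read ``decrease until'' as stopping at the first point where $\tilde p_i=B_i$. Then each one-buyer step never drops below any feasible $\beta\le\alpha$, since otherwise opponent inverse monotonicity would force $\tilde p_i(\beta v)>B_i$. So whenever that procedure converges to a feasible point, its limit is exactly your $\hat\alpha$. Both arguments rely on the same regularity, namely joint continuity of $\tilde p$ on $[0,1]^n$ as in \Cref{lem:compact-space}.
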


\begin{proof}
    Run the following algorithm initialized at $\alpha$: while there exists some buyer $i$ that violates the budget constraint under $\alpha$, i.e. $\tilde p_i(\alpha v) > B_i$, continuously decrease their pacing multiplier $\alpha_i$ until $\tilde p_i(\alpha_i v_i, \alpha_{-i}v_{-i})=B_i$ (existence follows from continuity and the intermediate value theorem, knowing that $\tilde p_i(0,\alpha_{-i}v_{-i})=0$ by IR). For any $j\ne i$, decreasing $\alpha_i$ weakly increases their payment (by Assumption~\ref{ass:pymt-opp-inverse-mnt}). Hence, $\Phi(\alpha)=\sum_k\min\{\tilde p_k(\alpha v),B_k\}$ is non-decreasing at every step. 
    The generated sequence of pacing multipliers is coordinate-wise non-increasing and bounded below in $[0,1]^n$, so it converges to some $\hat\alpha$. By continuity of the payment function, we have $\tilde p_i(\hat\alpha v)\le B_i$ for all $i$, and $\sum_i \tilde p_i(\hat\alpha v)=\Phi(\hat\alpha) \ge \Phi(\alpha)=\sum_i \min\{\tilde p_i(\alpha v),B_i\}.$
\end{proof}

We are now ready to prove \Cref{prop:shill-proof}. Since different sets of buyers are considered, we write $(x^{N},p^{N})$ to denote the allocation and payment rules induced by running $\mathcal M$ with participants $N$.
\shillProof*
\begin{proof}
    We will construct a tie-breaking rule $\tau$ under which an \abspe exists and the revenue-maximizing \abspe is shill-proof. Consider the mechanism $\mathcal M'=(x',p')$ that is equivalent to $\mathcal M$ under
    $\tau$. Instantiate $\mathcal M'$ on the set of real buyers $N$ with valuations $v^N$ and budgets $B^N$, and denote this base instance by $I=(N,v^N,B^N)$. Let $F=\{n+1,\dots,n+f\}$ be any set of fake buyers, with valuations $v^F$ and budgets $B^F$, and consider the shilled instance
    $\bar I=(N\cup F,(v^N,v^F),(B^N,B^F))$.
    
    Fix any tie-breaking selection on market $N\cup F$ under which an \abspe exists (guaranteed by~\Cref{thm:existence}), and let $\bar\alpha$ denote a corresponding \abspe of the shilled instance $\bar I$. We will define the tie-breaking rule $\tau_N$ on the base market $N$ in Step~3 and then extend these choices arbitrarily to obtain a global tie-breaking rule $\tau$ under which the revenue-maximizing \abspe is shill-proof.
    Let $\bar b=\bar\alpha\cdot (v^N,v^F)$ be the corresponding bid profile, and write $\bar\alpha_N\in[0,1]^n$ for the restriction of $\bar\alpha$ to the real buyers $N$.
    
    \emph{Step 1 (upper bound by capped revenue on $N$).}
    Consider the modified profile on $N\cup F$ in which fake buyers bid zero, i.e., bids are $(\bar\alpha_N\cdot v^N,\mathbf 0^F)$.
    Since $\mathbf 0^F \le \bar\alpha_F\cdot v^F$ coordinate-wise and own bids of $i\in N$ are unchanged, opponent inverse monotonicity gives
    \(
      \sum_{i\in N} p_i'^{\,N\cup F}(\bar b)
      \le \sum_{i\in N} p_i'^{\,N\cup F}(\bar\alpha_N\!\cdot\! v^N,\mathbf 0^F).
    \)
    By zero-bid invariance, the right-hand side equals the revenue obtained when running $\mathcal M'$ on the reduced market $N$:
    \(
      \sum_{i\in N} p_i'^{\,N\cup F}(\bar\alpha_N\!\cdot\! v^N,\mathbf 0^F) = 
      \sum_{i\in N} p_i'^{\,N}(\bar\alpha_N\!\cdot\! v^N).
    \)
    Moreover, budget feasibility of $\bar\alpha$ in $\bar I$ implies $p_i'^{\,N\cup F}(\bar b)\le B_i^N$ for all $i\in N$, hence $\sum_{i\in N} p_i'^{\,N\cup F}(\bar b) \le \sum_{i\in N} \min\{p_i'^{\,N}(\bar\alpha_N\!\cdot\! v^N),\,B_i^N\}.$
    
    \emph{Step 2 (projection in the smoothed mechanism on $N$).}
    Fix $0<\delta<1$ and consider the $\delta$-smoothed mechanism on the base market $N$, denoted $\tilde{\mathcal M}_\delta$ with payments $\tilde p^{\,N,(\delta)}$ and perturbed valuations $\tilde v^N$. By \Cref{lem:alpha-existence-by-continuity} applied to the smoothed instance $\tilde I=(N,\tilde v^N,B^N)$ and initialized at $\bar\alpha_N$, there exists a budget-feasible vector $\hat\alpha^{(\delta)}\in[0,1]^n$ such that 
    \(\sum_{i\in N} \tilde p_i^{\,N,(\delta)}(\hat\alpha^{(\delta)}\!\cdot\! v^N) \ge \sum_{i\in N} \min\Bigl\{\tilde p_i^{\,N,(\delta)}(\bar\alpha_N\!\cdot\! v^N),\,B_i^N\Bigr\}.\)
    
    \emph{Step 3 (letting $\delta\downarrow 0$ and returning to $\mathcal M'$ on $N$).}
    By compactness of $[0,1]^n$, extract a subsequence $\delta_k\downarrow0$ such that $\hat\alpha^{(\delta_k)}\to\dot\alpha$.
    Passing to a further subsequence if necessary (and relabeling), we may assume that all limits displayed below hold along the same sequence $\delta_k$.
    Along this subsequence, the smoothed outcomes on $N$ converge to some limit point $(x^{N*},p^{N*})$ at bids $\dot b=\dot\alpha\cdot v^N$, i.e.,
    \(
    \bigl(x^{N,(\delta_k)}(\hat\alpha^{(\delta_k)}\!\cdot\! v^N),\ \tilde p^{\,N,(\delta_k)}(\hat\alpha^{(\delta_k)}\!\cdot\! v^N)\bigr)
    \ \longrightarrow\ (x^{N*},p^{N*}).
    \)
    By the definition of the tie-breaking correspondence, we have $(x^{N*},p^{N*})\in \hat X^N(\dot b)$. Hence, we may define the tie-breaking rule $\tau$ on market $N$ so that it selects this limit outcome at $\dot b$. 
    More precisely, we may choose $\tau$ on market $N$ so that at the bid profile $\dot b$ it selects this particular limit outcome, i.e.,
    \(
    (x'^{\,N}(\dot b),p'^{\,N}(\dot b)) := \tau_N(\dot b) = (x^{N*},p^{N*}).
    \)
    With this choice, the induced payments satisfy $p'^{\,N}(\dot b)=p^{N*}$.
    Moreover, along the same subsequence, we may assume that the capped-payment vector at the profile $\bar\alpha_N$ converges as well: $ \tilde p^{\,N,(\delta_k)}(\bar\alpha_N\!\cdot\! v^N) \to q$ for some $q\in\mathbb R_+^n.$
    Given $\tilde p^{\,N,(\delta_k)}(\bar\alpha_N\!\cdot\! v^N)=\mathbb E_{\epsilon}[p(\bar\alpha_N\!\cdot\! \epsilon \, v^N)]$, since $\bar\alpha_N\!\cdot\!\epsilon\, v^N\to \bar\alpha_N\!\cdot\!v^N$  and each smoothed payment vector $\tilde p^{\,N,(\delta_k)}(\bar\alpha_N\!\cdot\! v^N)$ is a convex combination of payment vectors $p^N(b)$ at bids $b$ arbitrarily close to $\bar\alpha_N\!\cdot\!v^N$, it follows from the definition of the tie-breaking correspondence that there exists some allocation $x^q$ such that $(x^q,q)\in \hat X^N(\bar\alpha_N\!\cdot\!v^N)$. We therefore define $\tau_N(\bar\alpha_N\!\cdot\!v^N):=(x^q,q)$.
    Finally, for all other bid profiles $b$ on market $N$, define $\tau_N(b)$ arbitrarily by selecting any element of $\hat X^N(b)$.
    
    With these choices, continuity of $t\mapsto \min\{t,B_i^N\}$ implies
    \(
    \sum_{i\in N}\min\Bigl\{\tilde p_i^{\,N,(\delta_k)}(\bar\alpha_N\!\cdot\! v^N),\,B_i^N\Bigr\}
    \longrightarrow
    \sum_{i\in N}\min\Bigl\{p_i'^{\,N}(\bar\alpha_N\!\cdot\!v^N),\,B_i^N\Bigr\}.
    \)
    Taking $\limsup$ in the inequality obtained in Step~2 along $\delta_k$ and using convergence yields
    \(
    \sum_{i\in N} \min\Bigl\{p_i'^{\,N}(\bar\alpha_N\!\cdot\!v^N),\,B_i^N\Bigr\}
    \le\; \limsup_{k\to\infty}\ \sum_{i\in N} \tilde p_i^{\,N,(\delta_k)}(\hat\alpha^{(\delta_k)}\!\cdot\! v^N) = 
    \sum_{i\in N} p_i^{N*} = 
    \sum_{i\in N} p_i'^{\,N}(\dot b).
    \)
    Moreover, feasibility of $\hat\alpha^{(\delta_k)}$ implies $\tilde p_i^{\,N,(\delta_k)}(\hat\alpha^{(\delta_k)}\!\cdot\! v^N)\le B_i^N$ for all $i$ and $k$, and passing to the limit gives $p_i^{N*}\le B_i^N$ for all $i\in N$. Hence $\dot\alpha\in\mathcal A'(N)$.
    
    Let $\alpha^{*_\tau}$ be a revenue-maximizing vector in $\mathcal A'(N)$ under $\tau$. Combining with the upper bound on $\sum_{i\in N} p_i'^{\,N\cup F}(\bar b)$ obtained in Step~1 gives
    \(
    \sum_{i\in N} p_i'^{\,N\cup F}(\bar b) \le \sum_{i\in N} p_i'^{\,N}(\alpha^{*_\tau}\!\cdot\!v^N),
    \)
    which is the desired shill-proofness inequality.
\end{proof}


\subsection{Convergence Proofs of the Adaptive Budget Dynamics}

\subsubsection{Proof of~\Cref{thm:borgs-conv}: Convergence of \Cref{alg:borgs}}
\borgs*
 We prove~\Cref{thm:borgs-conv} through a sequence of lemmas. The proof technique is similar in spirit to that in~\cite{borgs2007dynamics}, but adapted to our smoothed abstract-mechanism setting. 
    Fix $\delta\in(0,1)$ and consider the smoothed mechanism $\tilde{\mathcal M}_\delta$. Our goal is to show that the budget-adjustment dynamics converge to a $\gamma$-approximate \abspe of $\tilde{\mathcal M}_\delta$. 

    We first verify that common-scale responsiveness condition is preserved under smoothing. For any $\lambda\ge 1$, $\tilde p_i(\lambda\alpha v)
    \coloneqq \mathbb E_{\epsilon}\!\left[p_i(\lambda\alpha\tilde v)\right]
    \ge \lambda^{\kappa_i} \mathbb E_{\epsilon}\!\left[p_i(\alpha\tilde v)\right]
    = \lambda^{\kappa_i}\tilde p_i(\alpha v),$ where the inequality follows from Assumption~\ref{ass:homog}. Equivalently, for every $\theta\in(0,1]$, $\tilde p_i(\theta\alpha v) \le \theta^{\kappa_i}\tilde p_i(\alpha v).$
    
    We are now ready to begin the main argument. We first show that common-scale responsiveness and opponent inverse monotonicity of payments ensure that the algorithm’s updates move in the correct direction. More precisely, if a buyer’s payment exceeds (resp. falls short of) her budget at some iteration, then it must decrease (resp. increase) at the next iteration.

    \begin{lemma}\label{lem:borgs-pymt-update}
        If $\tilde p_i(\alpha^{t-1}v)>B_i$, then $\tilde p_i(\alpha^tv) \le \tilde p_i(\alpha^{t-1}v)e^{-\kappa_i\eta}.$
        Otherwise, if $\tilde p_i(\alpha^{t-1}v)\le B_i$ and the upward update of buyer $i$ is not clipped at $1$, then $\tilde p_i(\alpha^tv) \ge \tilde p_i(\alpha^{t-1}v)e^{\kappa_i\eta}.$
    \end{lemma}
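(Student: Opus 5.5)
The plan is to compare $\alpha^t$ with a uniformly rescaled copy of $\alpha^{t-1}$. Common-scale responsiveness controls buyer $i$'s payment along that rescaling, and opponent inverse monotonicity (Assumption~\ref{ass:pymt-opp-inverse-mnt}) handles the discrepancy between the rescaled copy and the actual iterate. Both assumptions hold for the smoothed payments $\tilde p$: Assumption~\ref{ass:pymt-opp-inverse-mnt} by the remark preceding \Cref{lem:join-closure}'s use in $\mathcal A_\delta$, and Assumption~\ref{ass:homog} by the computation just above, which gives $\tilde p_i(\lambda\alpha v)\ge\lambda^{\kappa_i}\tilde p_i(\alpha v)$ for $\lambda\ge1$ and $\tilde p_i(\theta\alpha v)\le\theta^{\kappa_i}\tilde p_i(\alpha v)$ for $\theta\in(0,1]$.

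\emph{Case 1: $\tilde p_i(\alpha^{t-1}v)>B_i$.} Here $\alpha_i^t=e^{-\eta}\alpha_i^{t-1}$. Every other buyer $j$ either is multiplied by $e^{-\eta}$ or moves to $\min\{\alpha_j^{t-1}e^{\eta},1\}\ge\alpha_j^{t-1}$, so $\alpha^t_{-i}\ge e^{-\eta}\alpha^{t-1}_{-i}$ coordinate-wise.
\begin{itemize}
\item The profiles $\alpha^t v$ and $e^{-\eta}\alpha^{t-1}v$ share buyer $i$'s bid, and the opponents bid weakly less in the second. Assumption~\ref{ass:pymt-opp-inverse-mnt} therefore gives $\tilde p_i(\alpha^t v)\le\tilde p_i(e^{-\eta}\alpha^{t-1}v)$.
\item The scaled form of Assumption~\ref{ass:homog} with $\theta=e^{-\eta}$ gives $\tilde p_i(e^{-\eta}\alpha^{t-1}v)\le e^{-\kappa_i\eta}\tilde p_i(\alpha^{t-1}v)$.
\end{itemize}
Chaining the two inequalities proves the first claim.

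\emph{Case 2: $\tilde p_i(\alpha^{t-1}v)\le B_i$ and the update is not clipped.} Here $\alpha_i^t=e^{\eta}\alpha_i^{t-1}$. Every opponent satisfies $\alpha_j^t\le e^{\eta}\alpha_j^{t-1}$, since it was multiplied by $e^{-\eta}$, by $e^{\eta}$, or clipped below $e^{\eta}\alpha_j^{t-1}$. So the opponents in $\alpha^t v$ bid weakly less than in $e^{\eta}\alpha^{t-1}v$, while buyer $i$'s bid is the same in both.
\begin{itemize}
\item Assumption~\ref{ass:pymt-opp-inverse-mnt} gives $\tilde p_i(\alpha^t v)\ge\tilde p_i(e^{\eta}\alpha^{t-1}v)$.
\item Assumption~\ref{ass:homog} with $\lambda=e^{\eta}\ge1$ gives $\tilde p_i(e^{\eta}\alpha^{t-1}v)\ge e^{\kappa_i\eta}\tilde p_i(\alpha^{t-1}v)$.
\end{itemize}
Chaining these proves the second claim.

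The only delicate point is that the comparison profile $e^{\eta}\alpha^{t-1}v$ may have multipliers exceeding $1$, so it is not a pacing profile. I would handle this by observing that both assumptions are stated for arbitrary bid profiles in $V^n$, and $V$ is closed under positive scaling. Hence $e^{\eta}\alpha^{t-1}\cdot\tilde v$ is a legitimate input to the mechanism, and the expectation over $\epsilon$ defining $\tilde p_i$ is still well defined there because payments are bounded via individual rationality. With this checked, the argument needs nothing beyond the two monotonicity properties.
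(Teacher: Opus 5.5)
Your proposal is correct and follows the paper's own proof: bound $\tilde p_i(\alpha^t v)$ by $\tilde p_i(e^{\mp\eta}\alpha^{t-1}v)$ using opponent inverse monotonicity, then apply common-scale responsiveness for the smoothed payments. Your check that $e^{\eta}\alpha^{t-1}\cdot\tilde v$ is a legitimate bid profile, even when some multipliers exceed $1$, covers a detail the paper uses without comment.
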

    \begin{proof}
        Fix a buyer $i$. Suppose first that $\tilde p_i(\alpha^{t-1}v)>B_i$. By the update step, $\alpha_i^t=\alpha_i^{t-1}e^{-\eta}$, while for every other buyer $\alpha_{-i}^t\ge \alpha_{-i}^{t-1}e^{-\eta}$. Hence, \[\tilde p_i(\alpha^tv)= \tilde p_i\!\left(\alpha_i^{t-1}e^{-\eta}v_i, \alpha_{-i}^tv_{-i}\right) \le \tilde p_i\!\left(\alpha_i^{t-1}e^{-\eta}v_i, \alpha_{-i}^{t-1}e^{-\eta}v_{-i}\right)=\tilde p_i(e^{-\eta}\alpha^{t-1}v) \le e^{-\kappa_i\eta}\tilde p_i(\alpha^{t-1}v),\] where the first inequality follows from
        Assumption~\ref{ass:pymt-opp-inverse-mnt} and the second from Assumption~\ref{ass:homog}.
        
        Now suppose that $\tilde p_i(\alpha^{t-1}v)\le B_i$ and buyer $i$'s upward update is not clipped. Then $\alpha_i^t=\alpha_i^{t-1}e^\eta$, while $\alpha_{-i}^t\le\alpha_{-i}^{t-1}e^\eta$. Therefore, $\tilde p_i(\alpha^tv) = \tilde p_i\!\left(\alpha_i^{t-1}e^\eta v_i,\alpha_{-i}^tv_{-i}\right) \ge \tilde p_i\!\left( \alpha_i^{t-1}e^\eta v_i,\alpha_{-i}^{t-1}e^\eta v_{-i}\right) = \tilde p_i(e^\eta\alpha^{t-1}v) \ge e^{\kappa_i\eta}\tilde p_i(\alpha^{t-1}v)$, where the first inequality follows again from Assumption~\ref{ass:pymt-opp-inverse-mnt} and the second from Assumption~\ref{ass:homog}.
        
\end{proof}
    
    Next, we upper bound the difference between payments at consecutive iterations.
    \begin{lemma}\label{lem:borgs-pymt-bound}
        For all $t$ and $i$, we have $\bigl|\tilde p_i(\alpha^tv)-\tilde p_i(\alpha^{t-1}v)\bigr| \le \eta \, C B_i$, where $C:=\frac{n\bar V}{\delta\,\min_i B_i}$. 
    \end{lemma}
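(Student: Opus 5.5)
The bound follows from \Cref{prop:log-lip} once we control how far the pacing vector can move in one iteration of \Cref{alg:borgs}. Writing $\alpha^{t-1}$ and $\alpha^t$ for consecutive iterates, the update multiplies each coordinate by $e^{-\eta}$ or $e^{\eta}$, possibly clipped at $1$. Since both iterates stay in $(0,1]^n$, positivity is preserved by the multiplicative updates from the initialization $\alpha^0\in(0,1]^n$.

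The first step is to show $|\log\alpha_j^t-\log\alpha_j^{t-1}|\le\eta$ for every buyer $j$. For a downward step the change is exactly $-\eta$ in log-domain. For an upward step we have $\alpha_j^t=\min\{\alpha_j^{t-1}e^{\eta},1\}$. Since $\alpha_j^{t-1}\le 1$, this lies in $[\alpha_j^{t-1},\alpha_j^{t-1}e^\eta]$, so the log change lies in $[0,\eta]$. Hence $\|\log\alpha^t-\log\alpha^{t-1}\|_\infty\le\eta$.

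The second step applies \Cref{prop:log-lip} to the pair $\alpha^{t-1},\alpha^t\in(0,1]^n$:
\[
\bigl|\tilde p_i(\alpha^tv)-\tilde p_i(\alpha^{t-1}v)\bigr|\le \frac{n\bar V}{\delta}\,\eta .
\]
The final step rewrites the constant as
\[
\frac{n\bar V}{\delta}=\frac{n\bar V}{\delta\min_k B_k}\cdot\min_k B_k\le C\,B_i,
\]
because $\min_k B_k\le B_i$. This yields $\eta\,C B_i$.

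There is no substantial obstacle. The only points to check are the clipping case in the first step, so that the log-Lipschitz bound applies with displacement at most $\eta$, and that the iterates never hit $0$, so that the log-domain bound of \Cref{prop:log-lip} is valid on $(0,1]^n$.
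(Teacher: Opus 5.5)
Your proposal is correct and matches the paper's proof: bound each coordinate's log-step by $\eta$, apply the log-Lipschitz bound of \Cref{prop:log-lip} to get $\frac{n\bar V}{\delta}\eta$, and absorb the constant using $\min_k B_k\le B_i$. You also check the clipped upward case and positivity of the iterates explicitly, which the paper leaves implicit.
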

    \begin{proof}
        For every buyer $j$, the update rule implies $\bigl|\log\alpha_j^{t}-\log\alpha_j^{t-1}\bigr|\le \eta$, hence $\|\log\alpha^{t}-\log\alpha^{t-1}\|_\infty\le \eta$. Since $\tilde p_i$ is Lipschitz continuous on $(0,1]^n$ in the log domain (see the proof of~\Cref{lem:compact-space})   yields
        \(
        \bigl|\tilde p_i(\alpha^tv)-\tilde p_i(\alpha^{t-1}v)\bigr|
        \le \frac{n\bar V}{\delta}\,\|\log\alpha^{t}-\log\alpha^{t-1}\|_\infty
        \le \frac{n\bar V}{\delta}\,\eta
        \le \eta\,C B_i.
        \)
    \end{proof}

    \begin{corollary}\label{cor:borgs-pymt-upper-bound}
        If $\tilde p_i(\alpha^{t-1}v) \le B_i$ then $\tilde p_i(\alpha^tv)\leq B_i(1+\gamma)$ provided $\eta$ is chosen such that $\eta\,C\le \gamma$.
    \end{corollary}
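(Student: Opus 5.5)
This follows in one line from \Cref{lem:borgs-pymt-bound}, with no case split. Assume $\tilde p_i(\alpha^{t-1}v)\le B_i$. Whatever update buyer $i$ and her opponents perform at step $t-1$, \Cref{lem:borgs-pymt-bound} bounds the change in buyer $i$'s expected payment by
\[
\tilde p_i(\alpha^t v)\le \tilde p_i(\alpha^{t-1}v)+\eta\, C B_i .
\]
Combining this with the hypothesis gives
\[
\tilde p_i(\alpha^t v)\le B_i+\eta\, C B_i=B_i(1+\eta C).
\]
Under the step-size condition $\eta C\le\gamma$, this is at most $B_i(1+\gamma)$, as claimed.

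The main point to check is that \Cref{lem:borgs-pymt-bound} applies along the trajectory. Its proof uses the log-Lipschitz bound of \Cref{prop:log-lip}, which is stated on $(0,1]^n$, so we need every iterate to stay in $(0,1]^n$. This holds by induction:
\begin{itemize}
\item $\alpha^0\in(0,1]^n$ by initialization;
\item each update multiplies $\alpha_i$ by $e^{\pm\eta}>0$, which preserves positivity;
\item the clipping $\min\{\cdot,1\}$ preserves the upper bound $1$.
\end{itemize}
Hence $\|\log\alpha^t-\log\alpha^{t-1}\|_\infty\le\eta$ is well defined, and the lemma's bound applies.

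We do not need the constant $C$ specialized to buyer $i$. Since $C=\frac{n\bar V}{\delta\min_j B_j}$, we have $\frac{n\bar V}{\delta}\eta\le \eta C B_i$ for every $i$, which is exactly the form used in the display above. No step is a genuine obstacle; the only care required is the domain check above.
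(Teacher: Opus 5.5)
Your proof is correct and is the same as the paper's. Both apply \Cref{lem:borgs-pymt-bound} to get $\tilde p_i(\alpha^t v)\le \tilde p_i(\alpha^{t-1}v)+\eta C B_i$, then use the hypothesis $\tilde p_i(\alpha^{t-1}v)\le B_i$ and the step-size condition $\eta C\le\gamma$; your check that the iterates stay in $(0,1]^n$ is a harmless detail the paper leaves implicit.
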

    \begin{proof}
        Since $\tilde p_i(\alpha^{t-1}v) \le B_i$, it directly follows from~\Cref{lem:borgs-pymt-bound} that $\tilde p_i(\alpha^tv)\leq \tilde p_i(\alpha^{t-1}v)+ \eta\, C B_i \le \tilde p_i(\alpha^{t-1}v)+\gamma B_i \le B_i(1+\gamma)$.
    \end{proof}
    
    Next, we show that after some finite number of iterations, no buyer exceeds her budget by more than a factor of $(1+\gamma)$. 
    \begin{lemma}
        For every $0\le \gamma \le 1$, $\eta \le \frac{\gamma}{C}$, and $t\geq \frac{1}{\underline\kappa\eta}\ln(\frac{\bar V/\min_iB_{i}}{1+\gamma})\coloneqq T_{\gamma}^{\text{BF}}$, we have $\tilde p_i(\alpha^tv)\le B_i(1+\gamma)$. 
    \end{lemma}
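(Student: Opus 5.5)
The plan is to track, for a fixed buyer $i$, how long her expected payment can stay above $B_i$. We then combine that with \Cref{cor:borgs-pymt-upper-bound}, which shows that once she is at or below $B_i$ she can never jump above $B_i(1+\gamma)$ in one step.

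The key observation is that while $\tilde p_i(\alpha^{t}v)>B_i$, \Cref{lem:borgs-pymt-update} gives geometric decay: $\tilde p_i(\alpha^{t}v)\le \tilde p_i(\alpha^{t-1}v)e^{-\kappa_i\eta}$. The initial payment is bounded by individual rationality, $\tilde p_i(\alpha^0 v)\le \bar V$ since $\alpha^0\le 1$ and $\epsilon_i\le 1$. Consider any maximal run of consecutive iterations $s,s+1,\dots,t-1$ during which $\tilde p_i>B_i$ (so each of these steps is a downward update for $i$). Iterating the lemma gives
\[
\tilde p_i(\alpha^{t}v)\le \bar V e^{-\underline\kappa\eta (t-s)},
\]
using $\kappa_i\ge\underline\kappa$. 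This holds provided the run starts at $s=0$; if it starts later, the bound is even better (see below).

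Next we split into two cases for a given $t\ge T_\gamma^{\mathrm{BF}}$. In the first case, buyer $i$'s payment has been above $B_i$ at every iteration $0,\dots,t-1$. Then the display with $s=0$ gives $\tilde p_i(\alpha^t v)\le \bar V e^{-\underline\kappa\eta t}\le \bar V\cdot\frac{(1+\gamma)\min_j B_j}{\bar V}\le (1+\gamma)B_i$, by the definition of $T_\gamma^{\mathrm{BF}}$. (If $\bar V/\min_j B_j<1+\gamma$, the threshold is negative and the bound $\tilde p_i\le\bar V\le (1+\gamma)B_i$ holds trivially.)

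In the second case, there is a last iteration $s-1<t$ at which $\tilde p_i(\alpha^{s-1}v)\le B_i$. \Cref{cor:borgs-pymt-upper-bound}, using $\eta C\le\gamma$, gives $\tilde p_i(\alpha^s v)\le (1+\gamma)B_i$. Either $s=t$ and we are done, or the payment stays above $B_i$ on $s,\dots,t-1$. Then \Cref{lem:borgs-pymt-update} makes it monotonically decrease, so $\tilde p_i(\alpha^t v)\le \tilde p_i(\alpha^s v)\le(1+\gamma)B_i$. This case requires no time bound at all.

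Care is needed on two points. First, \Cref{lem:borgs-pymt-update}'s downward case is applied to $\alpha^t$ with opponents' updates arbitrary; the lemma already handles this via opponent inverse monotonicity, so no clipping issue arises for downward moves. Second, the constant in the threshold uses $\min_j B_j$ rather than $B_i$, which only makes the required time larger, so the inequality $(1+\gamma)\min_j B_j\le(1+\gamma)B_i$ closes the first case. I expect no real obstacle; the only subtlety is organizing the case split around the last time the budget was respected.
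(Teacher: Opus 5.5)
Your proof is correct and is essentially the paper's argument. The paper combines \Cref{lem:borgs-pymt-update} and \Cref{cor:borgs-pymt-upper-bound} into the recursion $\tilde p_i(\alpha^t v)\le\max\{\tilde p_i(\alpha^{t-1}v)e^{-\underline\kappa\eta},\,B_i(1+\gamma)\}$ and unrolls it from the individual-rationality bound $\tilde p_i(\alpha^0 v)\le \bar V$; your case split on the last iteration within budget does the same unrolling explicitly. (One small slip: the decay condition in your ``key observation'' should be $\tilde p_i(\alpha^{t-1}v)>B_i$, not $\tilde p_i(\alpha^{t}v)>B_i$, although your run-based version is indexed correctly.)
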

    \begin{proof}
        From~\Cref{lem:borgs-pymt-update}, we have $\tilde p_i(\alpha^t v) \le \tilde p_i(\alpha^{t-1}v)e^{-\kappa_i\eta} \le \tilde p_i(\alpha^{t-1}v)e^{-\underline\kappa\eta}$ if $\tilde p_i(\alpha^{t-1}v)>B_i$. 
        Moreover,~\Cref{cor:borgs-pymt-upper-bound}, gives $\tilde p_i(\alpha^tv)\le B_i(1+\gamma)$ if $\tilde p_i(\alpha^{t-1}v) \le B_i$. 
        These two inequalities imply that $\tilde p_i(\alpha^tv)\le \max\{\tilde p_i(\alpha^{t-1}v)e^{-\underline\kappa\eta}, \; B_i(1+\gamma)\}$. 
        Moreover, by individual rationality, we have $\tilde p_i(\alpha^0v)\le  \bar V \le \frac{\bar V}{\min_i B_i}B_i$.
        It follows that for $t \ge \frac{1}{\underline \kappa\eta} \ln (\frac{\bar V/\min_i B_i}{1+\gamma})$, we have $\tilde p_i(\alpha^tv)\le (1+\gamma)B_i$.
    \end{proof}
    
    Finally, we show that after some finite iterations, no buyer is ``too much unnecessarily paced''. 
    \begin{lemma}
         For all $t \geq 2\,T_{\gamma}^{\text{BF}}-\frac1\eta\ln(\min_i\alpha_i^0)$ and all $i$, one of the following holds:
         \begin{align}
             \tilde p_i(\alpha^tv) &\ge (1-\gamma) B_i \label{eq:borgs-unnec-pac-1} \\
             \alpha_i^t &\ge e^{-\eta}\label{eq:borgs-unnec-pac-2}
         \end{align}
    \end{lemma}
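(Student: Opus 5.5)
We argue buyer by buyer: we show that once $t\ge T_0:=2T_\gamma^{\mathrm{BF}}-\frac1\eta\ln(\min_i\alpha_i^0)$, buyer $i$ either pays at least $(1-\gamma)B_i$ or has a multiplier within one step of $1$. The idea is that a buyer with low spending is never pushed down. Her multiplier then keeps rising by $e^{\eta}$ until it either hits the cap at $1$ or her spending climbs back up. The two pieces of the horizon have separate roles. The term $-\frac1\eta\ln(\min_i\alpha_i^0)$ bounds the number of consecutive upward moves needed to reach $1$ from the initial multiplier. The term $2T_\gamma^{\mathrm{BF}}$ absorbs the rounds in which the buyer is being pushed down or is recovering from a pushdown.

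\textbf{Step 1: an underspending buyer moves up.} Suppose $\tilde p_i(\alpha^{s}v)<(1-\gamma)B_i\le B_i$ at some round $s$. Then the update at $s$ is upward, so $\alpha_i^{s+1}=\min\{\alpha_i^se^\eta,1\}$. If the move is not clipped, \Cref{lem:borgs-pymt-update} gives $\tilde p_i(\alpha^{s+1}v)\ge e^{\kappa_i\eta}\tilde p_i(\alpha^sv)$. The spending never exceeds $B_i$ during such a stretch, by \Cref{lem:borgs-pymt-bound} together with $\eta C\le\gamma$. Hence the update stays upward throughout the stretch, and $\log\alpha_i$ increases by exactly $\eta$ each round until clipping.

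\textbf{Step 2: after a downward move, spending is not too small.} Consider any downward step at round $s$. It happens only when $\tilde p_i(\alpha^sv)>B_i$. By \Cref{lem:borgs-pymt-bound}, $\tilde p_i(\alpha^{s+1}v)\ge B_i-\eta CB_i\ge(1-\gamma)B_i$. So right after a decrease, \eqref{eq:borgs-unnec-pac-1} holds. For the condition to fail at a later time $t$, a run of upward moves must have occurred since the last decrease, or since round $0$ if there was none. The spending must also have drifted below $(1-\gamma)B_i$ during that run.

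\textbf{Step 3: bounding the length of the bad run.} This is the main obstacle. A buyer's spending can fall even while she moves up, because opponents' multipliers also rise, and opponent inverse monotonicity only lets rising opponent bids lower her payment. I would handle this with the budget-feasibility horizon. After $T_\gamma^{\mathrm{BF}}$ rounds, every buyer satisfies $\tilde p_j\le(1+\gamma)B_j$. I then want to count the number of downward steps buyer $i$ can take after time $0$. Each downward step requires $\tilde p_i>B_i$. By \Cref{lem:borgs-pymt-update}, each such step multiplies the payment by at most $e^{-\kappa_i\eta}$. Combined with the initial bound $\tilde p_i\le \bar V$, this limits the downward steps to at most about $T_\gamma^{\mathrm{BF}}$.

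Consider the trajectory of $\log\alpha_i^t$. It starts at $\log\alpha_i^0\ge\ln\min_j\alpha_j^0$. Each step changes it by $\pm\eta$, with upward steps clipped at $0$. Suppose \eqref{eq:borgs-unnec-pac-2} fails at $t$, i.e.\ $\alpha_i^t<e^{-\eta}$. Then the number of upward steps is at most the number of downward steps plus $\frac1\eta|\ln\min_j\alpha_j^0|$. The total is then at most $2T_\gamma^{\mathrm{BF}}-\frac1\eta\ln\min_j\alpha_j^0$, which contradicts $t\ge T_0$.

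To make the count rigorous, I would restrict to downward steps taken within a maximal run of the ``bad'' condition. During such a run, each downward step is preceded by spending above $B_i$. That spending must then decay, which is why the bound uses $T_\gamma^{\mathrm{BF}}$ applied twice. If the counting cannot be made clean, the fallback is to argue directly that at time $t$ the current bad run has length less than $T_0$. In that case Step 1 forces the multiplier to reach $e^{-\eta}$ or above within $\frac1\eta|\ln\alpha_i|\le\frac1\eta|\ln\min_j\alpha_j^0|+T_\gamma^{\mathrm{BF}}$ rounds.
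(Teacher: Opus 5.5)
There is a genuine gap in Step~3, and it comes from doubting a fact you already use in Step~1. You say buyer $i$'s spending ``can fall even while she moves up, because opponents' multipliers also rise.'' But \Cref{lem:borgs-pymt-update} rules this out. On an unclipped upward step her multiplier rises by exactly $e^{\eta}$, while every opponent's rises by at most $e^{\eta}$. Opponent inverse monotonicity plus common-scale responsiveness then give $\tilde p_i(\alpha^{t}v)\ge e^{\kappa_i\eta}\,\tilde p_i(\alpha^{t-1}v)$.

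The counting argument you build around this non-obstacle is also false. The decay $\bar V e^{-\kappa_i\eta k}$ only bounds a \emph{consecutive} run of downward steps. Upward steps push spending back up, so a buyer hovering near her budget can alternate down and up indefinitely. The total number of downward steps is therefore not bounded by anything like $T_\gamma^{\mathrm{BF}}$. Your proposed repair, counting downward steps inside a maximal bad run, is vacuous: a bad round has $\tilde p_i<(1-\gamma)B_i<B_i$ and is always followed by an upward step. Your fallback needs a lower bound on $\alpha_i$ at the start of the current bad run. You cannot get that without showing when, or whether, such a run can begin, and that is exactly the missing step.

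The paper supplies this step by backward induction. Suppose both \eqref{eq:borgs-unnec-pac-1} and \eqref{eq:borgs-unnec-pac-2} fail at $t$. Then \Cref{lem:borgs-pymt-bound} gives $\tilde p_i(\alpha^{t-1}v)\le(1-\gamma)B_i+\eta CB_i\le B_i$, so the step into $t$ was upward. It was unclipped, since $\alpha_i^t<e^{-\eta}<1$. \Cref{lem:borgs-pymt-update} then yields $\tilde p_i(\alpha^{t-1}v)\le\tilde p_i(\alpha^tv)<(1-\gamma)B_i$ and $\alpha_i^{t-1}=\alpha_i^te^{-\eta}<e^{-\eta}$. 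So a bad state can only be inherited, never started, and along it $\log\alpha_i$ rises by exactly $\eta$ per round, which caps its length. The $2T_\gamma^{\mathrm{BF}}$ term comes from the paper starting the induction at round $T_\gamma^{\mathrm{BF}}$ and using $\alpha_i^{T_\gamma^{\mathrm{BF}}}\ge\alpha_i^0e^{-\eta T_\gamma^{\mathrm{BF}}}$. It does not come from counting pushdowns, as your reading of the horizon suggests.

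Your Steps~1--2 already give a forward version of the same proof once you trust Step~1. After the last decrease before $t$, \eqref{eq:borgs-unnec-pac-1} holds by Step~2. Unclipped upward steps preserve it. A clipped step sets $\alpha_i=1$, which persists until the next decrease. If there was no decrease at all, then $\alpha_i^t=\min\{\alpha_i^0e^{\eta t},1\}$. This proves the claim for every $t\ge-\frac1\eta\ln\min_j\alpha_j^0$, which covers the stated range whenever $T_\gamma^{\mathrm{BF}}\ge 0$. Step~3 can simply be deleted.
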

    
    \begin{proof}
        We prove the lemma by backwards induction. First, suppose neither \eqref{eq:borgs-unnec-pac-1} nor \eqref{eq:borgs-unnec-pac-2} holds on iteration $t$ and $t - 1 \ge T_{\gamma}^{\text{BF}}$. We will show that neither inequality holds on iteration $t-1$. 
        
        Since \eqref{eq:borgs-unnec-pac-1} doesn't hold  at iteration $t$, then~\Cref{lem:borgs-pymt-bound} gives $\tilde p_i(\alpha^{t-1}v) \le \tilde p_i(\alpha^tv) + \eta\,C B_i \le (1-\gamma)B_i+\gamma B_i = B_i$ given that $\eta\,C \le \gamma$. 
        Moreover, since \eqref{eq:borgs-unnec-pac-2} doesn't hold at iteration $t$, we have $\alpha_i^t<e^{-\eta}<1$. Hence the upward update from $t-1$ to $t$ remains below 1. Since $\tilde p_i(\alpha^{t-1}v)\le B_i$, \Cref{lem:borgs-pymt-update} gives $\tilde p_i(\alpha^tv) \ge e^{\kappa_i\eta}\tilde p_i(\alpha^{t-1}v) \ge \tilde p_i(\alpha^{t-1}v).$ Therefore, since $\tilde p_i(\alpha^tv)<(1-\gamma)B_i$, we also have $\tilde p_i(\alpha^{t-1}v)<(1-\gamma)B_i$, so \eqref{eq:borgs-unnec-pac-1} does not hold at iteration $t-1$.
        Also since $\alpha_i^t=\alpha_i^{t-1}e^\eta$, then $\alpha_i^{t-1}=\alpha_i^{t}e^{-\eta} < e^{-2\eta}<e^{-\eta}$ so \eqref{eq:borgs-unnec-pac-2} also fails at iteration \(t-1\). 
    
        For the base case, notice that as long as neither \eqref{eq:borgs-unnec-pac-1} nor \eqref{eq:borgs-unnec-pac-2} holds, then $\alpha_i^t=\alpha_i^{t-1}e^{\eta}$. Hence, for $t\ge 2\,T_{\gamma}^{\text{BF}}-\frac1\eta\ln(\min_i\alpha_i^0)$, inequality \eqref{eq:borgs-unnec-pac-2} will hold.
    \end{proof}

\subsubsection{Proof of~\Cref{thm:MW-avg-conv}: Convergence proof of \Cref{alg:MW}}
\MWconv*
We prove~\Cref{thm:MW-avg-conv} by establishing two main results. First, we show that MW achieves approximate budget feasibility in time average. Second, we prove that the fraction of iterations in which a buyer is approximately unnecessarily paced vanishes as the horizon grows. Together, these imply that the defining \abspe conditions hold on average, and hence MW converges on average to a $\gamma$-approximate \abspe of the smoothed mechanism.

    First, we verify that the own-bid responsiveness condition on payments is preserved under smoothing. Fix $\alpha_{-i}$ and let $\alpha_i>\alpha_i'$. For each realization $\epsilon\sim (\text{Unif}[1-\delta,1])^n$, we have $\tilde v=\epsilon v\in V$, so Assumption~\ref{ass:mult-OBM} implies
    \(
    p_i(\alpha_i \tilde v_i,\alpha_{-i}\tilde v_{-i}) \ge
    p_i(\alpha_i' \tilde v_i,\alpha_{-i}\tilde v_{-i}) \Bigl(\tfrac{\alpha_i}{\alpha_i'}\Bigr)^{\mu_i}.
    \)
    Taking expectations over $\epsilon$ and using linearity yields 
    \(
      \tilde p_i(\alpha_i v_i,\alpha_{-i} v_{-i}) \ge 
      \tilde p_i(\alpha_i' v_i,\alpha_{-i} v_{-i})\cdot\Bigl(\tfrac{\alpha_i}{\alpha_i'}\Bigr)^{\mu_i}.
    \)
    
    The next Lemma establishes a uniform lower bound on MW iterates, which will be used to prove average budget feasibility. 
    
    \begin{lemma}\label{lem:mw-alpha-lower} 
        Consider the update step of~\Cref{alg:MW}. For every buyer $i$, there exists a constant $\bar \alpha_i$ (depending only on $i,\eta,B_i,\tilde p_i,\alpha_i^0$) such that $\alpha_i^t \ge \bar\alpha_i$ for all $t\ge 0$.
    \end{lemma}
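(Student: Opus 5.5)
The argument rests on one observation: the MW update can only push $\alpha_i$ down when buyer $i$ overspends. Overspending in turn requires $\alpha_i$ to be bounded away from zero, because of individual rationality in the smoothed mechanism.

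\textbf{Step 1: no decrease at small multipliers.} By IR and the bound derived in the proof of \Cref{prop:cont-alloc-pymt},
\[
\tilde p_i(\alpha^t v)\le \alpha_i^t\bar V .
\]
So whenever $\alpha_i^t\le B_i/\bar V$, we get $g_{t,i}=\tilde p_i(\alpha^t v)-B_i\le 0$. The update $\alpha_i^{t+1}=\min\{\alpha_i^te^{-\eta g_{t,i}},1\}$ is then weakly increasing at step $t$. Hence the iterate can only decrease in rounds where $\alpha_i^t> B_i/\bar V$.

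\textbf{Step 2: bounding a single decrease.} In such a round the excess payment is at most $g_{t,i}\le \bar V-B_i\le \bar V$, again by IR. The multiplicative drop therefore satisfies
\[
\alpha_i^{t+1}\ge \alpha_i^t e^{-\eta\bar V}> \frac{B_i}{\bar V}\,e^{-\eta\bar V}.
\]
The step-size condition $\eta\le\delta/\bar V$ even gives $e^{-\eta\bar V}\ge e^{-\delta}\ge e^{-1}$, although any constant works here.

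\textbf{Step 3: induction.} Define
\[
\bar\alpha_i:=\min\Bigl\{\alpha_i^0,\ \tfrac{B_i}{\bar V}e^{-\eta\bar V}\Bigr\}>0 .
\]
We show $\alpha_i^t\ge\bar\alpha_i$ by induction on $t$; the base case $t=0$ is immediate. For the inductive step, split into the two cases above:
\begin{itemize}
\item If $\alpha_i^t\le B_i/\bar V$, then $\alpha_i^{t+1}\ge \alpha_i^t\ge\bar\alpha_i$.
\item If $\alpha_i^t> B_i/\bar V$, then $\alpha_i^{t+1}\ge \tfrac{B_i}{\bar V}e^{-\eta\bar V}\ge\bar\alpha_i$.
\end{itemize}
The clipping at $1$ never lowers the iterate below these bounds. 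The resulting constant depends only on $\eta$, $B_i$, the payment bound $\bar V$ (which controls $\tilde p_i$) and $\alpha_i^0$, as the lemma requires. It is positive because the initialization gives $\alpha^0\in(0,1]^n$ and $B_i>0$.

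The only delicate point is ensuring the one-step drop is uniformly bounded. This is exactly where IR is needed twice: once to cap $\tilde p_i$ by $\bar V$, so the drop factor is controlled, and once to force $g_{t,i}\le 0$ below the threshold. No assumption on opponents' behaviour is required, since the bound $\tilde p_i\le\alpha_i\bar V$ holds uniformly in $\alpha_{-i}$.
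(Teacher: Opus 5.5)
Your proof is correct. It has the same skeleton as the paper's proof:
\begin{itemize}
\item a threshold on $\alpha_i^t$ below which $g_{t,i}\le 0$, so the multiplier cannot decrease;
\item a uniform bound on the one-step multiplicative drop above that threshold;
\item $\bar\alpha_i$ taken as the minimum of $\alpha_i^0$ and the threshold times the drop factor.
\end{itemize}

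Where you differ is in how the threshold and the drop bound are obtained. The paper defines $F_i(a)=\sup_{\alpha_{-i}}\tilde p_i(a v_i,\alpha_{-i}v_{-i})$ and uses continuity of $\tilde p_i$ plus compactness to show $F_i$ is continuous with $F_i(0)=0$. It then picks a non-explicit $\hat\alpha_i>0$ with $F_i(\hat\alpha_i)\le B_i/2$. Next it invokes own-bid responsiveness (Assumption~\ref{ass:mult-OBM}) to extend the underspending bound to every $\alpha_i^t\le\hat\alpha_i$. The drop is controlled by $G_i^+=\sup_\alpha(\tilde p_i(\alpha v)-B_i)^+$.

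You read both quantities directly off the individual-rationality bound $\tilde p_i(\alpha v)\le \alpha_i\bar V$. This bound is uniform in $\alpha_{-i}$ and already monotone in $\alpha_i$, and it gives the threshold $B_i/\bar V$ and the drop factor $e^{-\eta\bar V}$.

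Your route buys three things:
\begin{itemize}
\item it is more elementary;
\item it gives the explicit constant $\bar\alpha_i=\min\{\alpha_i^0,(B_i/\bar V)e^{-\eta\bar V}\}$, which makes the horizon in part~(1) of \Cref{thm:MW-avg-conv} explicit;
\item it shows the lemma needs only the standing IR assumption, not Assumption~\ref{ass:mult-OBM} or any continuity argument.
\end{itemize}

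The paper's formulation expresses its constants through $\tilde p_i$ itself, matching the dependence stated in the lemma. Those constants can be less conservative for mechanisms whose payments sit well below the IR cap $\alpha_i\bar V$. Your constant sees $\tilde p_i$ only through the global bound $\bar V$, which is harmless for the purposes of this lemma.
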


    \begin{proof}
        Define $G_i^+:=\sup_{\alpha\in[0,1]^n}\,(\tilde p_i(\alpha)-B_i)^+$ (this is finite since $G_i^+ \le G \,\forall i$). 
        We will show that $\bar\alpha_i = \min\Bigl\{\alpha_i^0,\;\hat \alpha_i\,e^{-\eta\,G_i^+}\Bigr\},$ where $\hat \alpha_i>0$ is any value satisfying $sup_{\alpha_{-i}\in[0,1]^{n-1}} \tilde p_i\bigl(\hat \alpha_iv_i,\alpha_{-i}v_{-i}\bigr) \le \frac{B_i}{2}.$
        Define
        \(
        F_i(a) = \sup_{\alpha_{-i}\in[0,1]^{n-1}} \tilde p_i(av_i,\alpha_{-i}v_{-i}), \ a\in[0,1].
        \)
        By continuity of $\tilde p_i$ and compactness of $[0,1]^{n-1}$, $F_i$ is continuous on $[0,1]$. Moreover, $\tilde p_i(0,\alpha_{-i}v_{-i})=0$ for all $\alpha_{-i}$ implies $F_i(0)=0$. Since $F_i(0)=0$ and $B_i>0$, then by continuity there exists $\hat \alpha_i>0$ such that the condition holds.
    
        Fix $t\ge 0$. We consider the two cases.
        \begin{itemize}
            \item \emph{Case 1: $\alpha_i^t>\hat \alpha_i$.}
            If $g_{t,i}\le 0$, then $\alpha_i^{t+1}=\min\{\alpha_i^t e^{-\eta g_{t,i}},1\}\ge\alpha_i^t > \hat \alpha_i$.
            If $g_{t,i}>0$, then $\alpha_i^{t+1} = \alpha_i^t e^{-\eta g_{t,i}} \ge \hat \alpha_i\, e^{-\eta G_i^+}$,
            since $g_{t,i}\le (\tilde p_i-B_i)^+ \le G_i^+$. Hence whenever $\alpha_i^t>\hat \alpha_i$,
            \(
            \alpha_i^{t+1} \ge \min\{\hat \alpha_i,\,\hat \alpha_ie^{-\eta G_i^+}\}
            \ge \hat \alpha_ie^{-\eta G_i^+}.
            \)
    
            \item \emph{Case 2: $\alpha_i^t\le \hat \alpha_i$.} 
            By the own-bid responsiveness condition on payments, we have, for $\alpha_i^t\le \hat\alpha_i$,
            \(
              \tilde p_i(\alpha_i^t v_i,\alpha_{-i}^t v_{-i})
              \le \tilde p_i(\hat\alpha_i v_{i},\alpha_{-i}^t v_{-i})\Bigl(\tfrac{\alpha_i^t}{\hat\alpha_i}\Bigr)^{\mu_i}
              \le \tilde p_i(\hat\alpha_i v_{i},\alpha_{-i}^t v_{-i})
              \le F_i(\hat\alpha_i)
              \le \frac{B_i}{2}.
            \)
            Hence $g_{t,i}=\tilde p_i(\alpha^t v)-B_i\le -B_i/2$.
            Let $a := \alpha_i^t e^{-\eta g_{t,i}}$. Then $g_{t,i}<0$ implies $a = \alpha_i^t e^{-\eta g_{t,i}} \ge \alpha_i^t.$ By the update step of the algorithm, $\alpha_i^{t+1} = \min\{a,1\}.$ If $a\le 1$, then $\alpha_i^{t+1}=a\ge\alpha_i^t$. If $a>1$, then $\alpha_i^{t+1}=1\ge\alpha_i^t$ (since $\alpha_i^t\le 1$). In either case, $\alpha_i^{t+1} \ge \alpha_i^t.$ Thus whenever $\alpha_i^t\le\hat\alpha_i$, the sequence $(\alpha_i^t)_t$ is nondecreasing at $t$.
        \end{itemize}
        Now combine these observations to bound the global minimum over all $t$. 
        If $\alpha_i^t>\hat \alpha_i$ for all $t$, then from Case 1 applied at $t=0$, we have
        \(
        \alpha_i^t \ge \min\{\alpha_i^0,\,\hat \alpha_ie^{-\eta G_i^+}\}= \bar \alpha_i \quad \forall t\ge 0.
        \)
        Otherwise, let $t_0:=\min\{t\ge 0,\, \alpha_i^t \le \hat \alpha_i\}$ be the first time at which $\alpha_i^t$ enters $[0,\hat\alpha_i]$.
    
        \begin{itemize}
          \item If $t_0=0$, then $\alpha_i^0\ge\bar\alpha_i$ by definition of $\bar\alpha_i$, and by Case 2 we have $\alpha_i^{t+1}\ge\alpha_i^t$ whenever $\alpha_i^t\le\hat\alpha_i$. Hence
          \(
          \alpha_i^t \ge \alpha_i^0 \ge \bar\alpha_i \ \forall\,t\ge 0 \text{ with }\alpha_i^t\le\hat\alpha_i.
          \)
          \item If $t_0\ge 1$, then $\alpha_i^{t_0-1}>\hat\alpha_i$ and by Case 1,
          \(
            \alpha_i^{t_0}
            = \alpha_i^{t_0-1} e^{-\eta g_{t_0-1,i}}
            \ge \hat\alpha_i e^{-\eta G_i^+}.
          \)
          For all $t\ge t_0$ with $\alpha_i^t\le\hat\alpha_i$, and Case 2 yields
          \(
            \alpha_i^t \ge \alpha_i^{t_0} \ge \hat\alpha_i e^{-\eta G_i^+}.
          \)
        \end{itemize}
        In both subcases, whenever $\alpha_i^t\le\hat\alpha_i$, we have
        \(
        \alpha_i^t \ge \min\{\alpha_i^0,\,\hat\alpha_i e^{-\eta G_i^+}\} = \bar\alpha_i.
        \)
        If later $\alpha_i^t$ goes back above $\hat \alpha_i$, then trivially $\alpha_i^t\ge \hat \alpha_i\ge \hat \alpha_ie^{-\eta G_i^+}$. In all cases, for any sequence of opponents' bids $(\alpha_{-i}^t)_t$, we have 
        \(
        \alpha_i^t \ge \bar\alpha_i = \min\{\alpha_i^0,\;\hat \alpha_ie^{-\eta G_i^+}\} > 0
        \quad \forall t\ge 0,
        \)
        which completes the proof.
    \end{proof}

    We first prove the average-budget-feasibility claim in part~(1) of \Cref{thm:MW-avg-conv} using the bound in~\Cref{lem:mw-alpha-lower}.

    \begin{proof}{\emph{Proof of part~(1) of~\Cref{thm:MW-avg-conv}. }}
        Let $u_i^t:=\ln \alpha_i^t$. By~\Cref{lem:mw-alpha-lower}, $\alpha_i^t\in[\bar\alpha_i,1]$ so $u_i^t\in[\ln\bar\alpha_i,0]$ for all $t$. From the update step, we have $u_i^{t+1} \le u_i^t - \eta\, g_{t,i}.$
        Summing over $t=0,\dots,T-1$ gives
        \(
        \sum_{t=0}^{T-1} g_{t,i} \le \frac{u_i^0-u_i^T}{\eta}
        \le \frac{\ln(1/\bar\alpha_i)}{\eta},
        \)
        since $u_0-u_T\le 0-\ln\bar\alpha_i=\ln(1/\bar\alpha_i)$. Dividing by $T$ yields
        \(
        \frac{1}{T}\sum_{t=0}^{T-1} g_{t,i} \le \frac{\ln(1/\bar\alpha_i)}{\eta\, T}.
        \)
        Using $g_{t,i}=\tilde p_i(\alpha^tv)-B_i$ gives the stated bound on the average payment. If
        $T \ge \ln(1/\bar\alpha_i)/(\eta\,\gamma\,B_i)$, then $\frac{\ln(1/\bar\alpha_i)}{\eta T}\le \gamma B_i$, so
        \(
        \frac{1}{T}\sum_{t=0}^{T-1} \tilde p_i(\alpha^tv)
        \le B_i+\gamma B_i=(1+\gamma)B_i,
        \)
        which gives approximate budget feasibility on average.
    \end{proof}

    Next, we establish approximate no-unnecessary-pacing on average. The following bound controls the movement of MW over consecutive iterates in the log domain and will be used to bound the change in expected payments between consecutive iterates.
    
    \begin{lemma}\label{lem:mw-log-step}
        Let $G>0$ be such that $|g_{t,i}|\le G$ for all $t,i$.
        Under the MW update step with initialization $\alpha^0\in(0,1]^n$, we have 
        \(
          \bigl\|\log \alpha^{t}-\log \alpha^{t-1}\bigr\|_\infty \le \eta G,
        \)
        where $\log$ is applied coordinate-wise.
        Moreover, for every buyer $i$, $\left(\log\alpha_i^t-\log\alpha_i^{t-1}\right)_+\le \eta B_i.$
    \end{lemma}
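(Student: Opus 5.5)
The plan is to read off both claims directly from the MW update rule. For each buyer $i$ and round $t$ we have $\alpha_i^{t}=\min\{\alpha_i^{t-1}e^{-\eta g_{t-1,i}},1\}$. Since $\alpha_i^{t-1}\le 1$, the clipping at $1$ can only shrink an upward move and never enlarges a downward move.

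\emph{Step 1 (two-sided bound on the log step).} We split into two cases according to the sign of $g_{t-1,i}$.
\begin{itemize}
\item If $g_{t-1,i}\ge 0$, the iterate is never clipped. In this case $\log\alpha_i^t-\log\alpha_i^{t-1}=-\eta g_{t-1,i}\in[-\eta G,0]$.
\item If $g_{t-1,i}<0$, then $\alpha_i^{t}=\min\{\alpha_i^{t-1}e^{\eta|g_{t-1,i}|},1\}$ lies in $[\alpha_i^{t-1},\alpha_i^{t-1}e^{\eta|g_{t-1,i}|}]$. To see this, note that when clipping occurs the new value is $1\ge\alpha_i^{t-1}$ and is smaller than the unclipped value. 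Hence $0\le\log\alpha_i^t-\log\alpha_i^{t-1}\le\eta|g_{t-1,i}|\le\eta G$.
\end{itemize}
Combining the two cases gives $|\log\alpha_i^t-\log\alpha_i^{t-1}|\le\eta G$ for every $i$. Taking the maximum over $i$ yields the $\ell_\infty$ bound. Positivity of all iterates is preserved because every factor $e^{-\eta g}$ is positive and $\alpha^0\in(0,1]^n$, so the logarithms are well defined.

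\emph{Step 2 (one-sided bound on upward moves).} The positive part of the log step is nonzero only when $g_{t-1,i}<0$. In that case Step 1 gives $(\log\alpha_i^t-\log\alpha_i^{t-1})_+\le\eta(-g_{t-1,i})=\eta\bigl(B_i-\tilde p_i(\alpha^{t-1}v)\bigr)$. Since expected payments are nonnegative (payments map into $\mathbb R_{\ge0}$), this is at most $\eta B_i$. When $g_{t-1,i}\ge0$ the positive part is $0\le\eta B_i$ trivially.

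The only point needing care is the effect of the $\min\{\cdot,1\}$ clipping. We handle it with the observation that clipping occurs only on upward moves and replaces the value by one that is still at least $\alpha_i^{t-1}$. So it never increases the magnitude of the step, and it never turns an upward step into a downward one. Everything else is immediate.
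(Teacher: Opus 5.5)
Your proposal is correct and matches the paper's proof: both read the two bounds directly off the update $\alpha_i^t=\min\{\alpha_i^{t-1}e^{-\eta g_{t-1,i}},1\}$, handle clipping by noting it occurs only when $g_{t-1,i}<0$ and leaves $\alpha_i^t\ge\alpha_i^{t-1}$, and then use nonnegativity of $\tilde p_i$ to get $-g_{t-1,i}\le B_i$. The only difference is cosmetic: you split cases on the sign of $g_{t-1,i}$, while the paper splits on whether clipping occurs.
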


    \begin{proof}
        Fix a round $t\ge 1$ and a coordinate $i$. Since $\alpha^0\in(0,1]^n$ and the update is
        multiplicative with an upper cap at $1$, we have $\alpha^{t-1}_i\in(0,1]$ and thus
        $\log\alpha^{t-1}_i$ is well-defined.
        There are two cases.
        \begin{enumerate}
            \item If $\alpha_i^{t-1}e^{-\eta g}\le 1$, then $\alpha_i^{t}=\alpha_i^{t-1}e^{-\eta g_{t-1,i}}$ and hence
            \(
              \log\alpha_i^{t}-\log\alpha_i^{t-1}
              = \log(\alpha_i^{t-1}e^{-\eta g_{t-1,i}})-\log\alpha_i^{t-1}
              = -\eta g_{t-1,i},
            \)
            so $|\log\alpha_i^{t}-\log\alpha_i^{t-1}|=\eta|g_{t-1,i}|\le \eta G$.
            
            \item If $\alpha_i^{t-1}e^{-\eta g_{t-1,i}}>1$, then $\alpha_i^{t}=1$ and $\log\alpha_i^{t}=0$. The clipping condition implies
            \(
              0 < \log(\alpha_i^{t-1}e^{-\eta g_{t-1,i}})
              = \log\alpha_i^{t-1}-\eta g_{t-1,i},
            \)
            and since clipping can only occur when $g_{t-1,i}<0$, we have $-\eta g_{t-1,i}=\eta|g_{t-1,i}|$. Rearranging yields $-\log\alpha_i^{t-1} < \eta|g_{t-1,i}|$.
            Therefore,
            \(
              \bigl|\log\alpha_i^{t}-\log\alpha_i^{t-1}\bigr|
              = \bigl|0-\log\alpha_i^{t-1}\bigr|
              = -\log\alpha_i^{t-1}
              < \eta|g_{t-1,i}|
              \le \eta G.
            \)
        \end{enumerate}
        In both cases, $|\log\alpha_i^{t}-\log\alpha_i^{t-1}|\le \eta G$ for every $i$. Taking the maximum over coordinates gives
        \(
          \|\log\alpha^{t}-\log\alpha^{t-1}\|_\infty
          = \max_i |\log\alpha_i^{t}-\log\alpha_i^{t-1}|
          \le \eta G,
        \)
        as claimed.
        
        For the second claim, an increase in $\log\alpha_i$ can occur only when $g_{t-1,j}<0$. In that case, including the possibility of clipping at $1$, we have $\left(\log\alpha_i^t-\log\alpha_i^{t-1}\right)_+ \le \eta(-g_{t-1,i}) = \eta\bigl(B_i-\tilde p_i(\alpha^{t-1}v)\bigr) \le\eta B_i,$ where the final inequality follows from nonnegativity of payments.
        
    \end{proof}  

    It remains to prove the vanishing-unnecessary-pacing claim in part~(2) of~\Cref{thm:MW-avg-conv}. For this, we use the log-step bound in~\Cref{lem:mw-log-step}.

    \begin{proof}{\emph{Proof of part~(2) of~\Cref{thm:MW-avg-conv}. }}
        We first record a directional refinement of \Cref{prop:log-lip}. Fix buyer $i$ and fix her multiplier at $\alpha_i\in(0,1]$. For two opponent multiplier profiles $\alpha_{-i},\alpha_{-i}'\in(0,1]^{n-1}$ with $\alpha_{-i}\le \alpha'_{-i}$ coordinate-wise, $\tilde p_i(\alpha_i v_i,\alpha_{-i}v_{-i}) -\tilde p_i(\alpha_i v_i,\alpha_{-i}'v_{-i}) \le \frac{\alpha_i\bar V}{\delta}\sum_{j\neq i}\left(\log\alpha_j'-\log\alpha_j \right)_+ .$
        Indeed, the one-coordinate argument in \Cref{prop:log-lip}, together with individual rationality, gives a log-Lipschitz constant $\alpha_i\bar V/\delta$ when buyer $i$'s own multiplier is fixed at $\alpha_i$, and Assumption~\ref{ass:pymt-opp-inverse-mnt} implies that decreases in opponents' multipliers cannot decrease buyer $i$'s payment. 

        Next, fixing a buyer $i$ again, we first show that the iterations at which $i$ is unnecessarily paced propagate backwards in time. Specifically, if $t\ge 1$ and $t\in\mathcal N_i(\gamma)$, then $t-1\in\mathcal N_i(\gamma)$.
        Assume $t\in\mathcal N_i(\gamma)$. By definition,
        \(
          \tilde p_i(\alpha^tv)< (1-\gamma)B_i, \ \alpha_i^t < 1-\gamma.
        \)

        \emph{Step 1: Backward propagation of the gradient condition.}
        Let $\hat\alpha :=(\alpha_i^t,\alpha_{-i}^{t-1}).$ Since $t\in\mathcal N_i(\gamma)$, we have $\alpha_i^t<1-\gamma$. By the directional refinement above and \Cref{lem:mw-log-step}, $\tilde p_i(\hat\alpha v)-\tilde p_i(\alpha^t v) \le \frac{\alpha_i^t\bar V}{\delta}\sum_{j\neq i}\left(\log\alpha_j^t-\log\alpha_j^{t-1}\right)_+ \le \eta\frac{(1-\gamma)\bar V}{\delta}\sum_{j\neq i}B_j\le \zeta B_i,$ where the last inequality follows from the upper bound on $\eta$ and the definition of $R_B$.
        
        We claim that $g_{t-1,i}\le0$. Suppose instead that $g_{t-1,i}>0$. Then buyer $i$'s multiplier decreases, with $\log\alpha_i^{t-1}-\log\alpha_i^t= \eta g_{t-1,i}.$ The one-coordinate bound from~\Cref{prop:log-lip} gives $\tilde p_i(\alpha^{t-1}v)-\tilde p_i(\hat\alpha v) \le\frac{\bar V}{\delta}\eta g_{t-1,i}.$ Combining this inequality with the upper bound on $\tilde p_i(\hat\alpha v)-\tilde p_i(\alpha^t v) $ yields $\tilde p_i(\alpha^{t-1}v)-\tilde p_i(\alpha^t v) \le \frac{\eta\bar V}{\delta}g_{t-1,i} +\zeta B_i.$ 
        On the other hand, $\tilde p_i(\alpha^{t-1}v)-\tilde p_i(\alpha^t v) > \gamma B_i+g_{t-1,i},$ since $\tilde p_i(\alpha^{t-1}v)=B_i+g_{t-1,i}$ and $\tilde p_i(\alpha^tv)<(1-\gamma)B_i$. Given $\eta\le\delta/\bar V$, these two inequalities imply $\gamma B_i+g_{t-1,i} < g_{t-1,i}+\zeta B_i,$ contradicting $\zeta<\gamma$. Hence $g_{t-1,i}\le0$.

        Therefore, buyer $i$ weakly increases her multiplier between $t-1$ and $t$. By Assumption~\ref{ass:mult-OBM}, $\tilde p_i(\alpha^{t-1}v) \le \tilde p_i(\hat\alpha v).$ Combining this with the upper bound on $\tilde p_i(\hat\alpha v)-\tilde p_i(\alpha^t v) $ gives $\tilde p_i(\alpha^{t-1}v) < (1-\gamma+\zeta)B_i,$ and therefore $g_{t-1,i} < -(\gamma-\zeta)B_i,$ meaning that $|g_{t-1,i}| > (\gamma-\zeta)B_i.$
        
        \emph{Step 2: Backward propagation of the pacing condition.}
        From $g_{t-1,i}\le 0$ and the update step,
        \(
          \alpha_i^t=\min\{\alpha_i^{t-1}e^{-\eta g_{t-1,i}},1\}\ge \alpha_i^{t-1}.
        \)
        Since $\alpha_i^t<1-\gamma$, we obtain $\alpha_i^{t-1}\le \alpha_i^t<1-\gamma$, so the pacing condition in the definition of $\mathcal N_i(\gamma)$ holds at $t-1$.

        \emph{Step 3: Backward propagation of the underpayment condition.}
        Let $\bar\alpha:= (\alpha_i^{t-1},\alpha_{-i}^t).$ By Step~2, $\alpha_i^{t-1}<1-\gamma.$
        Applying the inequality in Step~1 with buyer $i$'s own multiplier fixed at $\alpha_i^{t-1}$ gives $\tilde p_i(\alpha^{t-1}v)-\tilde p_i(\bar\alpha v) \le \frac{\alpha_i^{t-1}\bar V}{\delta} \sum_{j\neq i}\left(\log\alpha_j^t-\log\alpha_j^{t-1} \right)_+ \le \eta\frac{(1-\gamma)\bar V}{\delta}\sum_{j\neq i}B_j \le \zeta B_i.$
        
        From Step~1, $g_{t-1,i}<0.$ Since $\alpha_i^t<1-\gamma<1$, the update does not clip, and hence $\frac{\alpha_i^t}{\alpha_i^{t-1}} = e^{\eta|g_{t-1,i}|}.$
        Applying Assumption~\ref{ass:mult-OBM} with opponents fixed at $\alpha_{-i}^t$ gives $\tilde p_i(\bar\alpha v) \le \tilde p_i(\alpha^t v) e^{-\eta\mu_i|g_{t-1,i}|}.$ Therefore $\tilde p_i(\alpha^{t-1}v) \le \tilde p_i(\alpha^t v) e^{-\eta\mu_i|g_{t-1,i}|} +\zeta B_i.$ 
        Using $\tilde p_i(\alpha^t v)<(1-\gamma)B_i$ and $|g_{t-1,i}| > (\gamma-\zeta)B_i,$ we obtain $\tilde p_i(\alpha^{t-1}v) < (1-\gamma)B_i e^{-\eta\mu_i(\gamma-\zeta)B_i} +\zeta B_i.$ The lower bound on $\eta$ implies $e^{-\eta\mu_i(\gamma-\zeta)B_i} \le \frac{1-\gamma-\zeta}{1-\gamma},$ and hence $\tilde p_i(\alpha^{t-1}v) < (1-\gamma-\zeta)B_i+\zeta B_i = (1-\gamma)B_i.$

        \emph{Step 4: Unnecessary-pacing iterations are successive and finite.} If $\mathcal N_i(\gamma)=\emptyset$, there is nothing to prove. Otherwise, fix any $t\in\mathcal N_i(\gamma)$. By backward propagation, $\{0,1,\ldots,t\}\subseteq\mathcal N_i(\gamma).$ Hence, for every $k=0,\ldots,t-1$, $\alpha_i^{k+1}<1-\gamma<1.$ Thus the MW update does not clip on any of these transitions.
        Moreover, $g_{k,i} = \tilde p_i(\alpha^kv)-B_i < -\gamma B_i.$ Therefore $\alpha_i^{k+1} = \alpha_i^k e^{-\eta g_{k,i}} > \alpha_i^k e^{\eta\gamma B_i},\ k=0,\ldots,t-1.$ Iterating gives $\alpha_i^t > \alpha_i^0 e^{\eta\gamma B_i t}.$ 
        Since $t\in\mathcal N_i(\gamma)$ also implies $\alpha_i^t<1-\gamma$, we obtain $t < \frac{1}{\eta\gamma B_i}\ln\left(\frac{1-\gamma}{\alpha_i^0} \right):= C_i(\gamma).$
        Thus every element of $\mathcal N_i(\gamma)$ is bounded above by the same finite constant. Consequently $\mathcal N_i(\gamma)$ is finite, and for any horizon $T\ge 1$,
        \(
          \#\bigl\{t\in\{0,\dots,T-1\}:\ t\in\mathcal N_i(\gamma)\bigr\}
          \le C_i(\gamma)+1,
        \)
        and hence
        \(
          \frac{1}{T}\,\#\bigl\{t\le T-1:\ t\in\mathcal N_i(\gamma)\bigr\}
          \le \frac{C_i(\gamma)+1}{T}\xrightarrow[T\to\infty]{}0.
        \)
        \end{proof}

\section{Supplemental Experiments}
\subsection{Additional Results for Position-Auction Demand Gaps}\label{app:sec-exps-position}

\paragraph{Market thickness and the number of slots.}

Figure~\ref{fig:sec3-demand-violation} in Section~ref{sec:exps-position} shows that demand violations become less frequent and smaller as the buyer-to-slot ratio $n/s$ increases, holding the number of slots fixed at $s=5$. We ask whether the same pattern holds when the absolute number of slots also varies.
\Cref{fig:app-thickness} repeats the experiment for $s\in\{5,10,20\}$. Slot qualities are given by $q_k=0.9^{k-1}$, and we vary $n$ to obtain different buyer-to-slot ratios. As in the main experiment, buyer $i$ has a demand violation when $r_i>0.05$. 

\begin{figure}[h]
    \centering
    \includegraphics[width=\textwidth]{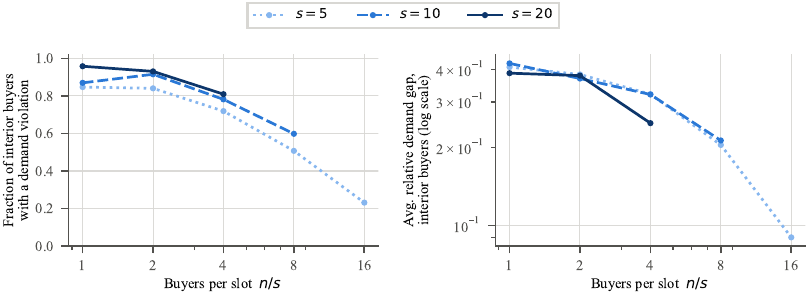}
    \caption{Demand violations across buyer-to-slot ratios and numbers of slots. The left panel reports the fraction of buyers with a demand violation. The right panel reports the average relative demand~gap.}
    \label{fig:app-thickness}
\end{figure}

\Cref{fig:app-thickness}  shows that both the fraction of buyers with demand violations and the average relative demand gap decline as the buyer-to-slot ratio \(n/s\) increases, for each value of \(s\).  Thus, demand violations become less prevalent and smaller as competition per slot increases. Conditional on the buyer-to-slot ratio, the absolute number of slots has a weaker and less systematic effect.

\paragraph{Demand gaps as a function of pacing multipliers and budget tightness.}

The previous experiment varies market structure while keeping the pacing regime approximately fixed. Instead, we now study how relative demand gaps vary with the equilibrium pacing vector, and choose the experimental parameters to obtain enough variation in the pacing multipliers. We set \(n=20\), \(m=10\), \(s=n=20\), \(q_k=0.8^{k-1}\), and \(B_i~=~\lambda_i\frac{1}{n}\sum_j v_{ij}\), where \(\lambda_i\sim U[0,14]\). Conditional on receiving slot \(k\) in auction~\(j\), buyer \(i\)'s utility at \posfppe prices is \(q_kv_{ij}(1-\alpha_i^*)\), and therefore decreases as \(\alpha_i^*\) increases.

\begin{figure}[h]
    \centering
    \includegraphics[width=\textwidth]{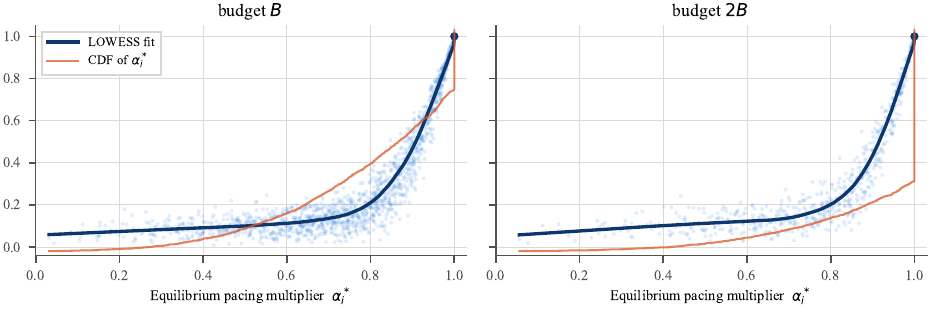}
    \caption{Demand gaps and equilibrium pacing multipliers. Each point represents a buyer-instance pair. The blue curve is a LOWESS fit of the relative demand gap against the equilibrium pacing multiplier, while the orange curve is the empirical CDF of the equilibrium pacing multiplier. Both are shown on a common \([0,1]\) vertical scale.}
    \label{fig:sec3-alpha-gain}
\end{figure}

The blue LOWESS curve in the left panel of Figure~\ref{fig:sec3-alpha-gain} shows that relative demand gaps are small for heavily paced buyers and increase sharply as \(\alpha_i^*\) approaches one. The point \((1,1)\) follows directly from pay-your-bid pricing: an unpaced buyer obtains zero utility from her assigned \posfppe allocation. Hence, if reoptimizing at the induced prices would yield positive utility, she forgoes all attainable utility relative to that benchmark, and her relative demand gap equals one.

We also study the effect of budget tightness. Holding valuations and market structure fixed, we recompute the \posfppe after doubling every buyer's budget. The LOWESS curves are very similar across the two budget regimes, indicating that the relationship between \(\alpha_i^*\) and the relative loss in attainable utility changes little as budgets are relaxed. In contrast, the orange CDFs show that doubling budgets shifts significantly more buyers toward \(\alpha_i^*=1\), with the unpaced share increasing from \(27\%\) to \(68\%\). Thus, looser budgets primarily affect the distribution of pacing multipliers rather than the conditional relationship between pacing and relative demand gaps. The next experiment extends the comparison to \(B/2\) and \(4B\).

\paragraph{Budget tightness and pacing multipliers.}\label{sec:app-LW-tightness}
We extend the comparison in~\Cref{fig:sec3-alpha-gain} by reconsidering the same 100 market instances under budget scales \(B/2\), \(B\), \(2B\), and \(4B\). 
The left panel of \Cref{fig:app-alpha-gain-budget} reports the mean relative demand gap within bins of the equilibrium pacing multiplier~$\alpha_i^*$. The right panel reports the fraction of buyers in each bin. 
We use binned means rather than LOWESS fits because the support of $\alpha_i^*$ differs significantly across budget regimes.
As in~\Cref{fig:sec3-alpha-gain}, the conditional relationship between \(\alpha_i^*\) and the relative demand gap is very similar across budget regimes: gaps are small for heavily paced buyers and rise sharply as \(\alpha_i^*\) approaches one. At the same time, as shown in the right panel of \Cref{fig:app-alpha-gain-budget}, relaxing budgets shifts significantly more buyers toward \(\alpha_i^*=1\), where demand gaps are largest. Thus, budget tightness primarily affects demand gaps through the distribution of pacing multipliers rather than their conditional relationship with~\(\alpha_i^*\).

\begin{figure}[h]
    \centering
    \includegraphics[width=\textwidth]{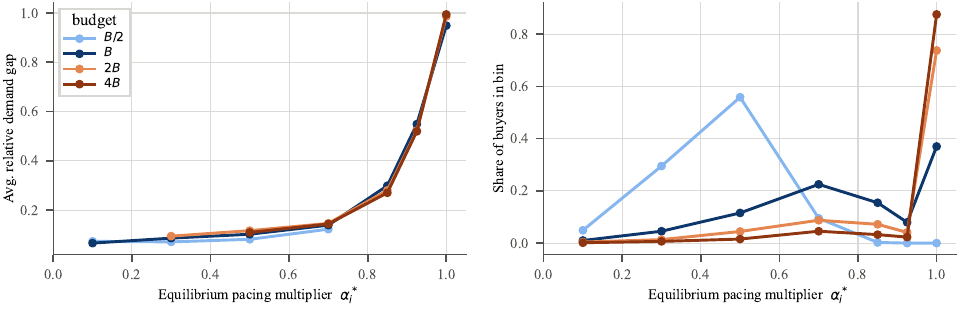}
    \caption{Demand gaps across budget regimes. The left panel reports the mean relative demand gap within bins of the equilibrium pacing multiplier $\alpha_i^*$. The right panel reports the share of buyers in each bin.} 
    \label{fig:app-alpha-gain-budget}
\end{figure}

\subsection{Tightness of the liquid welfare guarantees}\label{app:lw-bound}
We compare the realized liquid welfare loss with the multiplicative and additive guarantees in \Cref{thm:abspe-lw}. For each market, we report the realized relative loss $1-\frac{LW(x^{PE})}{OPT_{LW}},$ together with the normalized additive bound from \Cref{thm:abspe-lw}(2), $\frac{\sum_i B_i\log(1/\alpha_i^*)}{OPT_{LW}}.$ The multiplicative guarantee in \Cref{thm:abspe-lw}(1) corresponds to the uniform relative-loss bound of $1/2$. Since in these simulations, the additive bound is tighter than the multiplicative bound in both environments, we also measure the fraction of the additive bound that is realized.

\begin{figure}[t]
    \centering
    \includegraphics[width=\textwidth]{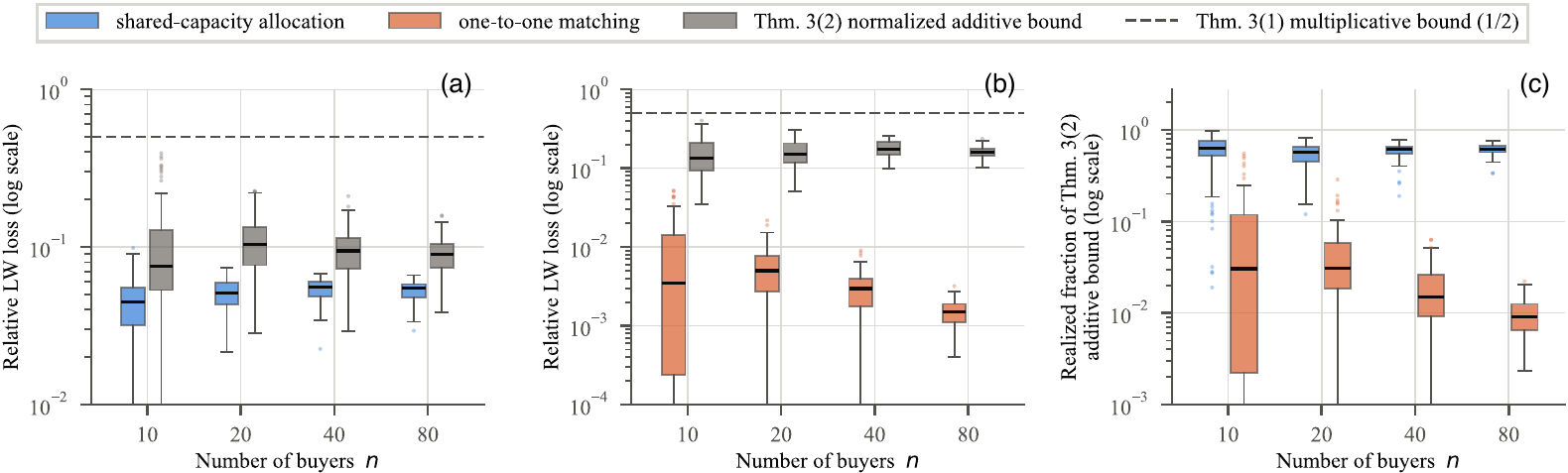}
    \caption{Tightness of the liquid welfare guarantees. Panels (a) and (b) compare realized relative LW losses with the normalized additive bound from \Cref{thm:abspe-lw}(2) for the shared-capacity and one-to-one matching mechanisms, respectively. The dashed line shows the multiplicative bound of 1/2. Panel (c) reports the fraction of the additive bound realized in each mechanism. Values below the displayed range are omitted.}
    \label{fig:lw-bound}
\end{figure}

The results are reported in~\Cref{fig:lw-bound}. Across the $2\times4\times10=800$ simulated markets, the normalized additive bound from Theorem~\ref{thm:abspe-lw}(2) never exceeds \(0.402\). Its median ranges from approximately \(0.075\) to \(0.10\) under shared-capacity allocation and from \(0.13\) to \(0.17\) under one-to-one matching. Realized losses remain below the additive bound in both environments, but are much closer to the bound under shared-capacity allocation than under matching: the fraction of the additive bound realized ranges from approximately \(0.58\) to \(0.63\) under the shared-capacity allocation, compared with only \(0.009\) to \(0.031\) under matching, where it also declines with market size. 

\subsection{PACE in Bid-Maximizing Pay-Your-Bid Environments}\label{app:exps-pace}

Section~\ref{sec:bid-max-pyb} extends the PACE framework to bid-maximizing pay-your-bid mechanisms. The resulting updates require only a bid-maximization oracle over the feasible allocation set. We evaluate their convergence by comparing the resulting pacing multipliers with \abspe multipliers computed independently from the convex-program characterization.

We consider three bid-maximizing pay-your-bid environments: single-item allocation, position auctions, and shared-capacity allocation. Each market has $n=50$ buyers and $m=100$ allocation opportunities. 
In the position-auction environment, there are $s=15$ slots with qualities $q_k=~0.6^{k-1}$. The shared-capacity environment uses the same allocation constraints as in the liquid welfare experiment in Section~\ref{app:lw-bound}. 
Budgets are set to $B_i=\frac{1}{n}\sum_j v_{ij},$ and the initial multipliers are drawn independently from $U(0,1]$. We take $\gamma=0.01$. We generate 100 markets for each environment and run the updates for 300 iterations. We compute $\alpha^*$ by solving the convex program in Proposition~\ref{prop:eg-dual} using CVXPY with CLARABEL and then recovering $\alpha_i^*=\min\{1,B_i/v_i(x_i^*)\}$ from the resulting allocation.

\begin{figure}[h]
    \centering
    \includegraphics[width=\textwidth]{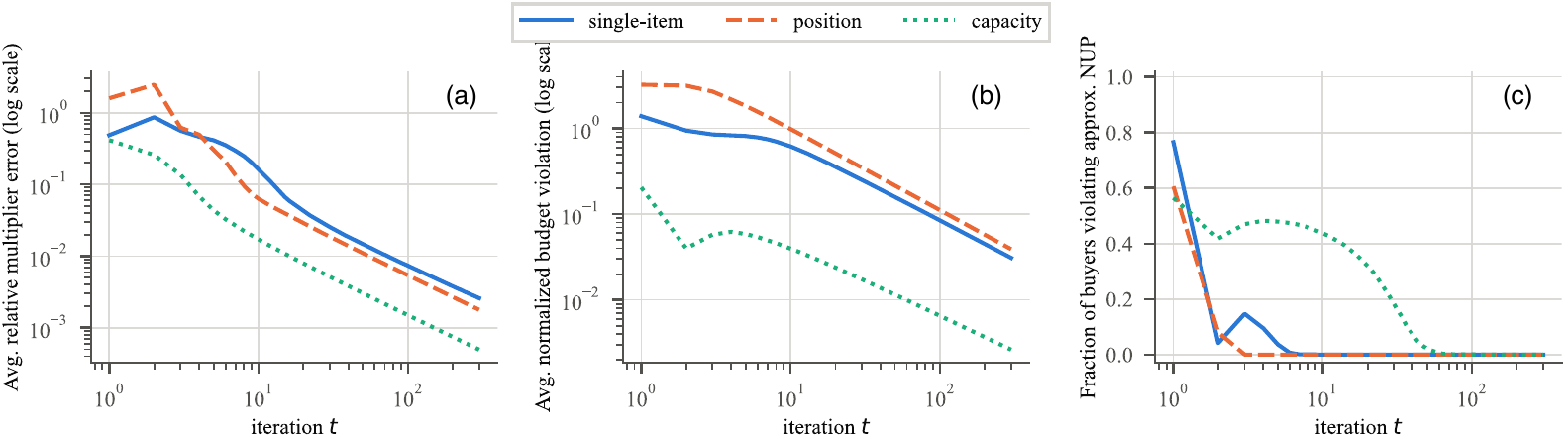}
    \caption{PACE convergence in three bid-maximizing pay-your-bid environments. Panels (a), (b) and (c) report the average relative pacing-multiplier error, normalized budget violation, and fraction of buyers violating approximate NUP, respectively, in pay-your-bid single-item, position-auction, and shared-capacity mechanisms.}
    \label{fig:app-pace}
\end{figure}

\Cref{fig:app-pace} reports three measures of convergence. Panel (a) shows the average relative error in the pacing multipliers, $\frac{1}{n}\sum_i \frac{|\alpha_i^t-\alpha_i^*|}{\alpha_i^*}.$
Panels (b) and (c) report violations of budget feasibility and no-unnecessary-pacing, respectively. 
Since the bid-maximization oracle may choose among multiple optimal allocations, realized spending can fluctuate across iterations. We therefore evaluate budget feasibility and no-unnecessary-pacing using time-averaged spending, $\bar p_i^t=\frac{1}{t}\sum_{\tau=1}^t p_i^\tau.$ Panel (b) reports the average normalized budget violation, $\frac{1}{n}\sum_i\left[\frac{\bar p_i^t-B_i}{B_i}\right]_+,$ while Panel (c) reports the fraction of buyers violating approximate NUP, i.e., buyers with $\bar p_i^t < (1-\gamma)B_i$ and $\alpha_i^t < 1-\gamma.$

The updates approach the convex-program benchmark in all three environments. At $T=300$, the average relative multiplier error is $2.6\times10^{-3}$ in the single-item environment, $1.8\times10^{-3}$ in the position environment, and $4.9\times10^{-4}$ under shared capacity; the fraction of buyers violating approximate no-unnecessary-pacing is zero in all three cases. Thus, the PACE framework continues to perform well when the allocation rule is generalized from single-item auctions to richer bid-maximizing pay-your-bid environments.

\subsection{Additional Results on Pacing Dynamics} \label{app:exps-worst-buyer}

\paragraph{Worst-buyer performance of the last-iterate dynamics.} 
\Cref{fig:app-alogs-max}(a) complements the average budget-violation results in Figure~\ref{fig:sec5-borgs} by reporting the maximum relative budget violation across buyers. This measure corresponds directly to the buyer-level budget-feasibility condition in Theorem~\ref{thm:borgs-conv}. Across the mechanisms and market sizes considered, the worst-buyer violation falls below the target $\gamma=0.1$ after approximately $20\%-36\%$ of the budget-feasibility horizon $T_\gamma^{BF}$. Thus, the fast convergence observed in Figure~\ref{fig:sec5-borgs} also holds at the worst-buyer level.

\begin{figure}[h]
   \centering
    \includegraphics[width=\textwidth]{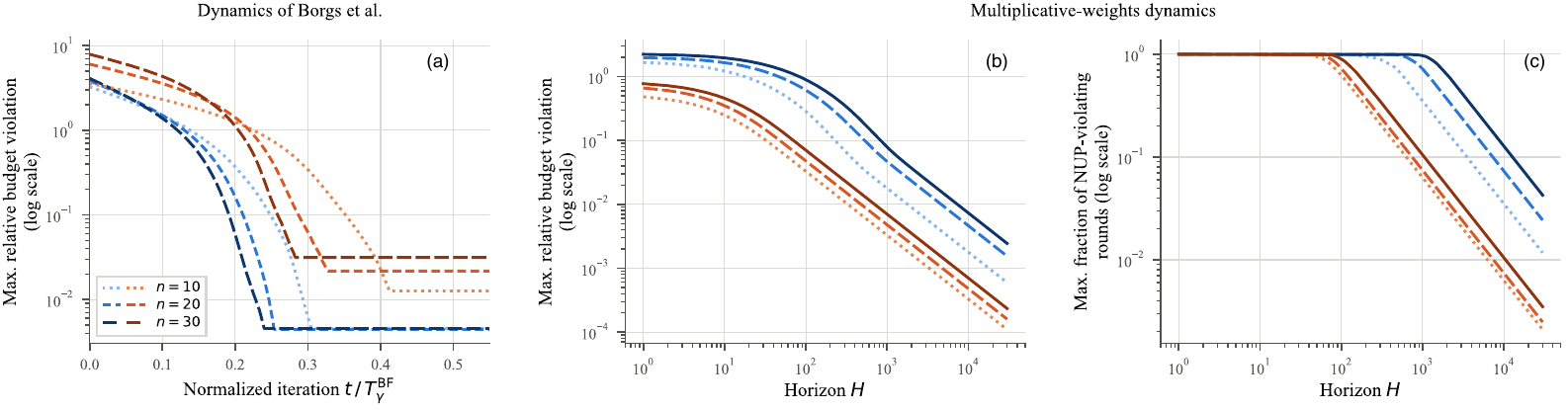}
     \caption{Worst-buyer convergence diagnostics. Panel~(a) reports the largest relative budget violation across buyers for the dynamics of Borgs et al. For the MW dynamics, panel~(b) reports the maximum time-averaged relative budget violation across buyers, and panel~(c) reports the largest fraction of NUP-violating rounds across buyers. Blue curves correspond to proportional sharing and orange curves to first-price position auctions.} 
    \label{fig:app-alogs-max}
\end{figure}

\paragraph{Worst-buyer performance of the multiplicative-weights dynamics.} 
We report the analogous worst-buyer diagnostics for the multiplicative-weights dynamics studied in Figure~\ref{fig:sec5-mw}. Specifically, \Cref{fig:app-alogs-max}(b) shows the maximum relative budget violation across buyers, and \Cref{fig:app-alogs-max}(c) shows the maximum, across buyers, of the fraction of rounds in which that buyer violates approximate no-unnecessary-pacing. Both measures decline with the averaging horizon. At $H=3\times 10^4$, the maximum relative budget violation is below $2.5\times 10^{-3}$ in every specification. Thus, the convergence pattern in Figure~\ref{fig:sec5-mw} also holds at the worst-buyer~level.

\end{document}